\newtheorem{theorem}{Theorem}
\newtheorem{definition}[theorem]{Definition}
\newtheorem{lemma}[theorem]{Lemma}
\newtheorem{proposition}[theorem]{Proposition}
\newtheorem{corollary}[theorem]{Corollary}
\newtheorem{claim}[theorem]{Claim}
\newcommand{\cavg}{c_\textup{avg}}
\newcommand{\cD}{\mathcal D}
\newcommand{\cM}{\mathcal M}
\newcommand{\N}{\mathbb{N}}
\newcommand{\R}{\mathbb{R}}
\newcommand{\pesc}{p_\textup{esc}}
\newcommand{\Ind}{\vvmathbb 1}
\DeclareMathOperator{\E}{\mathbb{E}}
\DeclareMathOperator{\diam}{diam}
\DeclareMathOperator{\lleft}{left}
\DeclareMathOperator{\rright}{right}
\DeclareMathOperator{\alg}{alg}
\DeclareMathOperator{\ONL}{ONL}
\DeclareMathOperator{\kSRV}{kSRV}
\DeclareMathOperator{\MTS}{MTS}
\DeclareMathOperator{\MSS}{MSS}
\DeclareMathOperator{\LGT}{LGT}
\DeclareMathOperator{\ETG}{ETG}
\DeclareMathOperator{\MA}{MA}
\DeclareMathOperator{\disPG}{DPG}
\DeclareMathOperator{\ekTX}{ekTX}
\DeclareMathOperator{\hkTX}{hkTX}
\DeclareMathOperator{\rand}{rand}
\DeclareMathOperator{\detr}{detr}
\DeclareMathOperator{\distr}{distr}
\DeclareMathOperator{\opt}{opt}
\DeclareMathOperator{\Lip}{Lip}
\begin{document}
	\setlength{\parskip}{1ex} 
\setlength{\parindent}{5ex}

\title{The Randomized $k$-Server Conjecture Is False!}

\author{S\'ebastien Bubeck\thanks{Microsoft Research, Redmond, WA, United States, 
                 {\tt sebubeck@microsoft.com}.} 
	\and
	Christian Coester\thanks{Department of Computer Science, University of Oxford, Oxford, 
	         United Kingdom, {\tt christian.coester@cs.ox.ac.uk}.}
	\and
	Yuval Rabani\thanks{The Rachel and Selim Benin School of Computer Science and Engineering, 
		The Hebrew University of Jerusalem, Jerusalem, Israel, {\tt yrabani@cs.huji.ac.il}.
		Research supported in part by ISF grants 3565-21 and 389-22, and by BSF grant 2018687.}
}

\date{\today}

\maketitle

\begin{abstract}
We prove a few new lower bounds on the randomized competitive ratio for the $k$-server problem and other related problems, resolving some long-standing conjectures. In particular, for metrical task systems (MTS) we asympotically settle the competitive ratio and obtain the first improvement to an existential lower bound since the introduction of the model 35 years ago (in 1987).

More concretely, we show:
\begin{enumerate}
\item There exist $(k+1)$-point metric spaces in which the randomized competitive ratio for the $k$-server problem is $\Omega(\log^2 k)$. This refutes the folklore conjecture (which is known to hold in some families of metrics) that in all metric spaces with at least $k+1$ points, the competitive ratio is $\Theta(\log k)$.
\item Consequently, there exist $n$-point metric spaces in which the randomized competitive ratio for MTS is $\Omega(\log^2 n)$. This matches the upper bound that holds for all metrics. The previously best existential lower bound was $\Omega(\log n)$ (which was known to be tight for some families of metrics).
\item For all $k<n\in\N$, for \emph{all} $n$-point metric spaces the randomized $k$-server competitive ratio is at least $\Omega(\log k)$, and consequently the randomized MTS competitive ratio is at least $\Omega(\log n)$. These universal lower bounds are asymptotically tight. The previous bounds were $\Omega(\log k/\log\log k)$ and $\Omega(\log n/\log \log n)$, respectively.
\item The randomized competitive ratio for the $w$-set metrical service systems problem, and its equivalent width-$w$ layered graph traversal problem, is $\Omega(w^2)$. This slightly improves the previous lower bound and matches the recently discovered upper bound.
\item Our results imply improved lower bounds for other problems like $k$-taxi, distributed paging, and metric allocation.
\end{enumerate}
These lower bounds share a common thread, and other than the third bound, also a common construction.
\end{abstract}

\thispagestyle{empty}
\newpage
\setcounter{page}{1}

\begin{flushright}
	\begin{tabular}{l}
		{\em Two roads diverged in a wood, and I---}\\
		{\em I took the one less traveled by,}\\
		{\em And that has made all the difference.}\\
		\ \ \ \ -- Robert Frost
	\end{tabular}
\end{flushright}

\section{Introduction}

Since its inception in~\cite{MMS88}, the $k$-server problem has been the driving 
challenge shaping research on competitive analysis of online algorithms, a computing
paradigm pioneered in~\cite{ST84}. The problem is simple to state. 
(See Section~\ref{sec: prelim} for the descriptions of all the problems discussed in the
introduction.) There are compellingly 
beautiful conjectures regarding the competitive ratio for deterministic and randomized 
algorithms. The attempts, partly successful, to prove these conjectures have led to the 
development of powerful tools whose impact reaches far beyond the $k$-server problem, 
or online computing for that matter. These include the application of and the contribution 
to concepts such as the work function, quasi-convexity, online choice of online algorithms, 
hierarchically separated trees (HSTs), non-contracting Lipschitz maps of metric spaces, metric 
Ramsey theory, the online primal-dual schema, entropic regularization, and the online 
mirror descent schema.

In this paper, we refute the randomized $k$-server conjecture, which states that the
randomized competitive ratio for the problem in {\em all} metric spaces (with at least
$k+1$ points) is $\Theta(\log k)$. We construct, for all $k\in\N$, metric spaces
on $k+1$ points in which the randomized competitive ratio for this problem
is $\Omega(\log^2 k)$. This also implies stronger lower bounds for a number
of related problems. In particular: ($i$) Metrical task systems were introduced 
in~\cite{BLS87}. The $k$-server problem in a $(k+1)$-point metric space is a special 
case of the metrical task systems problem in the same metric space. Therefore, we 
get that there are, for all $n\in\N$, $n$-point metric spaces in which the randomized 
competitive ratio for the metrical task systems problem is $\Omega(\log^2 n)$.
($ii$) The $k$-taxi problem was introduced in~\cite{FRR90}. It is trivially at least as
hard as the $k$-server problem in the same metric space. Hence, our $k$-server
lower bounds carry over to this problem.
($iii$) The distributed paging (a.k.a. constrained file allocation) problem was introduced 
in~\cite{BFR92}. In~\cite[Theorem 3.1]{ABF93} it was shown that the $k$-server problem 
in a $(k+1)$-point metric space is a special case of distributed paging in a network
inducing the same metric on $k+1$ processors, with total capacity of the caches at the
processors of $m$ pages, and $m-k+1$ distinct pages in the system. Thus, we get that
for all $n\in\N$, there exist $n$-processor networks in which the randomized competitive
ratio for the distributed paging problem is $\Omega(\log^2 n)$. ($iv$) The metric
allocation problem was introduced in~\cite{BC21b}. The same paper shows that the
randomized metrical task systems problem is a special case of this problem in the
same metric space. Hence, our results imply an $\Omega(\log^2 n)$ lower bound
for metric allocation, where $n$ is the cardinality of the underlying metric space.

Moreover, the ideas behind the construction of the lower bound examples are
inspired by a construction of lower bound examples for a different problem. Layered
graph traversal (a.k.a. metrical service systems) was introduced in~\cite{PY89}.
Here we show a lower bound of $\Omega(w^2)$ on the randomized competitive
ratio for traversing width-$w$ layered graphs. This improves upon the previous
lower bound of $\Omega(w^2/\log^{1+\epsilon} w)$ (for all $\epsilon > 0$) of~\cite{Ram93},
and matches asymptotically the recent upper bound of~\cite{BCR22}. It also implies
the same asymptotically tight lower bound on the randomized competitive ratio for
the depth-$w$ evolving tree game, on account of the reduction used in~\cite{BCR22}
and the matching upper bound in that paper.

The results for the $k$-server and related problems are quite surprising. 
The new $k$-server lower bounds are asymptotically 
tight for $k+1$ point metric spaces, and more generally this is true of our 
metrical task systems lower bounds. 
Matching upper bounds appear in~\cite{BCLL19,CL19}. Previously, it has been widely
conjectured 
(see~\cite{FKLMSY91,KMMO90,KRR91,BKRS92,Sei01,Kou09,BBN10a,BBMN11,Lee18,BBCJ20,HZ22})
that in all metric spaces on more than $k$ points, the randomized 
competitive ratio for the $k$-server problem is $\Theta(\log k)$. This was also 
established in some special cases~\cite{FKLMSY91,FM00,BBN07}. The best known upper bounds for general metrics are $O(\log^2 k\log n)$ and $O(\log^3 k\log\Delta)$~\cite{BCLLM18}, where $n$ is the number of points in the metric space and $\Delta$ is the ratio between the largest and smallest non-zero distance, and $O(k)$ by using a deterministic algorithm~\cite{KP94} when both $n$ and $\Delta$ are very large.
Moreover, the deterministic $k$-server conjecture, which states that in every metric 
space on at least $k+1$ points the deterministic competitive ratio is exactly $k$, is
nearly resolved. A lower bound of $k$ is known to hold in all metrics~\cite{MMS88}.
It is tight in many special cases~\cite{ST84,MMS88,CKPV91,CL91,KP96,BCL02,CK21,HZ22}, and it is within
a factor of $2-\frac 1 k$ of the truth in all cases~\cite{KP94}. Thus, it was unexpected 
that in the randomized case the competitive ratio would vary widely among different 
metric spaces. Similarly, for metrical task systems, in some metric spaces an
asymptotically tight $\Theta(\log n)$ bound is known. In fact, the upper bound holds 
in all HST metrics~\cite{BCLL19}; see also previous work 
in~\cite{BLS87,BBBT97,IS95,Sei99,FM00}. It has been conjectured that this is the truth 
in all metric spaces (similarly to the deterministic case, where all metric spaces are known 
to be equally hard with a competitive ratio of exactly $2n-1$~\cite{BLS87}). Obviously, our 
results refute this conjecture. The lower bounds for $k$-taxi, for distributed paging, and for 
metric allocation are not matched by any known universal upper bound. However, in uniform 
metric spaces tight bounds of $\Theta(\log n)$ for the latter two problems~\cite{ABF93,BC21b} and $\Theta(\log k)$ for $k$-taxi~\cite{C22} are known (for metric 
allocation and $k$-taxi this holds even in a somewhat more general setting of weighted star metrics).

\paragraph{Existential and universal lower bounds.} The previous best existential lower bound for metrical task systems, $\Omega(\log n)$, was already proved in 1987 when the model was introduced \cite{BLS87}. It uses a coupon collector argument, and the analogous proof had also remained the previously best existential lower bound for the $k$-server problem. In contrast, the best known \emph{universal} lower bounds (i.e., lower bounds that hold for \emph{all} metrics) had developed over time: Initially it seemed plausible that there might even exist metric spaces with constant competitive ratio, until a first super-constant universal lower bound was shown in~\cite{KRR91}. This was based on the idea of showing that every metric space contains a large subspace belonging to a family for which a super-constant lower bound is known. This idea was developed further 
in~\cite{BKRS92,BBM01,BLMN03}, eventually resulting in lower bounds of $\Omega(\log n / \log \log n)$ and $\Omega(\log k / \log \log k)$, respectively, for metrical task systems and the $k$-server problem for all metric spaces of $n>k$ points. These lower bounds are implied by an $\Omega(\log n)$ lower bound on metrical task systems in $\Omega(\log^2 n)$-HST metrics from~\cite{BBM01}, in combination with lower bounds 
on the size of a subspace close to such a structure in any metric space from~\cite{BLMN03}. Here, we close the remaining gap to the upper bounds of $O(\log n)$ and $O(\log k)$ that hold in some metrics by proving universal lower bounds of $\Omega(\log n)$ and $\Omega(\log k)$, respectively. We show that such lower bounds hold for all $1$-HST
metrics. The universal lower bounds are then implied from~\cite{BLMN03}.

\paragraph{Overview of the existential lower bound proof.} 
Our existential lower bounds use a construction of a metric space akin to the diamond graph 
used in~\cite{NR02,BC03,LN04,MN07,ACNN11} to prove lower bounds on bi-Lipschitz distortion of
metric embeddings and dimension reduction in $\ell_1$, and in~\cite{IW91} to prove
lower bounds on the online Steiner tree problem. The main idea is demonstrated
by the following basic construction that is used in Section~\ref{sec: basic} to illustrate a
somewhat weaker, yet simpler, lower bound than our main result. We use the shortest
path metric on the nodes of an inductively constructed graph. Let $w\in\N$. The $w$
graph is constructed as follows. Take a cycle whose length is a multiple of $6$. Let $s$
and $t$ be two antipodal points in this cycle. They partition the cycle into two $s$-$t$
paths whose length is a multiple of $3$. We'll consider the first, second, and last third
of both paths. But first, we replace every edge in this graph with a copy of the $w-1$
graph, identifying the endpoints of the edge with the two chosen terminals of the
recursive graph. Note that in an $n$-point metric space, if $k=n-1$ then the movement 
of the servers can be defined alternatively as the movement of the ``hole," a.k.a. 
anti-server---the unique position that no server occupies. The anti-server can be forced 
to choose a location from a subset of the points by requesting repeatedly the other points.

Now, we describe the bad sequence of requests, informally. Our argument
uses Yao's minimax principle---we design a random request sequence that drives the 
cost of every deterministic algorithm to reach or exceed the lower bound, which we'll
denote here by $C_w$. This is done in three stages. The first stage makes sure the 
anti-server traverses the first third of one of the two $s$-$t$ paths (which now consist 
of chains of $w-1$ graphs), while paying a factor of $C_{w-1}$ over the shortest path
to this goal. The main purpose of this stage is to create two targets to choose from
which are far apart. The last third is a mirror image of the first third and is intended
to ensure that the anti-server reaches $t$, while paying the same factor $C_{w-1}$
over the shortest path to that goal. The middle third is where the increase in the
competitive ratio happens. There we repeatedly, independently, and with equal
probability choose one of the two paths, and force the anti-server either to move
forward on the chosen path, or to stay put on the other path. This is repeated
for the length of the middle third, so in expectation, wherever the anti-server 
lies, it has to advance half the length of this interval. The initial distance
gained from $s$ guarantees that switching paths does not save any cost.
Moreover, in expectation one of the two options ``suffered" fewer ``hits,"
roughly square root of the number of repetitions fewer. We now take advantage
of this expected gap, and generate more requests that force the anti-server
to use ``the road less traveled by" and move there all the way to $t$. The
excess steps force $C_w$ to be at least $C_{w-1} + \Theta(\sqrt{C_{w-1}})$,
and this shows an $\Omega(w^2)$ lower bound. Finally, the size of this graph
is $\exp(w\log w)$, so this argument gives $\Omega((\log k/\log\log k)^2)$.
To improve the latter bound, we need to ``compress" the graph to use cycles
of length $6$, recursively, so its size is $\exp(w)$. This complicates the
argument considerably. As the graph is symmetric with respect to the roles of 
$s$ and $t$, we can then reverse the process (now going from $t$ to $s$) and 
thus repeat it as many times as desired.

We note that our constructed metric spaces require distortion $\Omega(\log n)$
to represent as a convex combination of HSTs (where $n$ is the number of
points in the metric space). This is obvious from the fact that these metrics
contain a path of $n^{\epsilon}$ equally spaced points, for some constant
$\epsilon > 0$. In fact, our lower bounds illustrate the somewhat surprising 
conclusion that the approximation of metric spaces as convex combinations
of HSTs (from~\cite{FRT04}), which underlies the tight universal upper bounds 
in~\cite{BCLL19}, cannot be circumvented.

\paragraph{Organization.} The rest of the paper is organized as follows. In Section~\ref{sec: prelim} we
define the problems we show lower bounds for, and discuss the reductions among
them. In Section~\ref{sec: basic} we prove the basic construction outlined above, and also prove the lower bound of $\Omega(w^2)$ for layered graph traversal.
In Section~\ref{sec: main} we prove the stronger bound for $\MTS$ and $k$-server, our main result. Finally,
in Section~\ref{sec: universal} we prove our improved universal lower bound.

\section{Preliminaries}\label{sec: prelim}

An online (minimization) problem is a two-player game between an adversary and an online 
algorithm. Keeping the discussion somewhat informal, the adversary chooses a finite sequence 
of requests $\rho = \rho[1]\rho[2]\rho[3]\dots$. The algorithm chooses a response
function $\alg$ that maps every request sequence to an action. When the game is played,
the algorithm responds to each request $\rho[i]$ with 
$\alg[i] = \alg(\rho[1]\rho[2]\dots\rho[i])$, thus generating a response sequence
$\alg = \alg[1]\alg[2]\alg[3]\dots$ (abusing notation slightly). A randomized algorithm $\widetilde{\alg}$
chooses $\alg$ from a probability distribution on such functions. Thus, a play of the game is 
marked by a pair of equal-length sequences $(\rho,\alg)$. For each pair of an algorithm and request sequence there is an associated cost. We denote by $c_{\alg}(\rho)$ the cost of algorithm $\alg$ on request sequence $\rho$, and by $c_{\opt}(\rho)=\min_{\alg} \{c_{\alg}(\rho)\}$ the cost of the optimal (offline) algorithm.
\begin{definition}\label{def:CR}
	The randomized competitive ratio of an online problem $\ONL$, denoted $C_{\rand}^{\ONL}$, is 
	the infimum over all $C$ that satisfy the following condition. There exists a randomized algorithm
	$\widetilde{\alg}$ and some constant $\kappa$ such that for every adversary strategy $\rho$,
	$$
	\E[c_{\alg}(\rho):\ \alg\sim\widetilde{\alg}] \le C\cdot c_{\opt}(\rho)+\kappa.
	$$
\end{definition}
Unless stated otherwise, the constant $\kappa$ can be arbitrary. However, for some problems it is more typical to require $\kappa=0$.

\begin{definition}
	The distributional lower bound on the randomized competitive ratio of an online problem $\ONL$,
	denoted $C_{\distr}^{\ONL}$, is the supremum over all $C$ that satisfy the following condition. For all $\kappa$ (which are allowed in Definition~\ref{def:CR}), there
	exists a probability distribution $\tilde{\rho}$ over adversary strategies such that for every (deterministic) algorithm
	$\alg$,
	$$
	\E[c_{\alg}(\rho):\ \rho\sim\tilde{\rho}]\ge C\cdot \E[c_{\opt}(\rho):\ \rho\sim\tilde{\rho}]+\kappa.
	$$
\end{definition}

The following theorem is known as Yao's minimax principle. Notice that it cannot be deduced trivially
from von Neumann's minimax principle by setting $\frac{c_{\alg}(\rho)-\kappa}{c_{\opt}(\rho)}$ to
be the zero-sum value of a play $(\rho,\alg)$, because the distributional lower bound is about
the ratio of expectations, not the expected ratio. However, a similar LP duality argument yields it.\footnote{One may prove it for a fixed $\kappa$ first, and then take the infimum over all $\kappa$.}
\begin{theorem}[Yao's minimax]\label{thm: Yao minimax}
	$C_{\rand}^{\ONL} = C_{\distr}^{\ONL}$.
\end{theorem}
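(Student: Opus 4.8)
The plan is to prove the two inequalities $C_{\rand}^{\ONL}\ge C_{\distr}^{\ONL}$ and $C_{\rand}^{\ONL}\le C_{\distr}^{\ONL}$ separately. The first is a routine averaging (``weak duality'') argument, while the second carries the content. As the authors flag, one cannot simply invoke von Neumann's minimax with the ratio $(c_\alg(\rho)-\kappa)/c_\opt(\rho)$ as the payoff, because the distributional bound compares $\E[c_\alg(\rho)]$ to $\E[c_\opt(\rho)]$ rather than averaging the ratio. The idea I would use to get around this is to apply von Neumann's theorem (equivalently, LP duality) to an auxiliary zero-sum game whose payoff is the \emph{affine} expression $c_\alg(\rho)-C\,c_\opt(\rho)-\kappa$, and then sweep the scalar parameter $C$; subtracting $C\,c_\opt$ is exactly what linearizes the comparison.

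For the direction $C_{\rand}^{\ONL}\ge C_{\distr}^{\ONL}$: given a randomized algorithm $\widetilde\alg$ and constant $\kappa$ witnessing competitive ratio $C$, I would fix an arbitrary adversary distribution $\tilde\rho$, average the guarantee $\E_{\alg\sim\widetilde\alg}[c_\alg(\rho)]\le C\,c_\opt(\rho)+\kappa$ over $\rho\sim\tilde\rho$, and swap the two expectations by linearity to extract a deterministic $\alg$ in the support of $\widetilde\alg$ with $\E_{\rho\sim\tilde\rho}[c_\alg(\rho)]\le C\,\E_{\rho\sim\tilde\rho}[c_\opt(\rho)]+\kappa$. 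Because $c_\opt\ge 0$, this single $\alg$ already defeats the distributional inequality with parameters $(C',\kappa')$ for any $C'>C$ and $\kappa'>\kappa$; since $\tilde\rho$ was arbitrary, no distribution certifies $(C',\kappa')$, so no $C'>C$ satisfies the requirement in the definition of $C_{\distr}^{\ONL}$, whence $C_{\distr}^{\ONL}\le C$, and taking the infimum over all valid $(C,\kappa)$ yields $C_{\distr}^{\ONL}\le C_{\rand}^{\ONL}$.

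For the direction $C_{\rand}^{\ONL}\le C_{\distr}^{\ONL}$: I would fix $\epsilon>0$, set $C:=C_{\distr}^{\ONL}+\epsilon$, and use that $C$ exceeds the supremum in the definition to obtain \emph{some} $\kappa$ for which no distribution $\tilde\rho$ satisfies $\E_{\rho\sim\tilde\rho}[c_\alg(\rho)]\ge C\,\E_{\rho\sim\tilde\rho}[c_\opt(\rho)]+\kappa$ for all deterministic $\alg$ --- equivalently, for every $\tilde\rho$ there is an $\alg$ with $\E_{\rho\sim\tilde\rho}[c_\alg(\rho)-C\,c_\opt(\rho)-\kappa]<0$. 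Then I would set up the bilinear zero-sum game with online mixed strategy $\widetilde\alg$, adversary mixed strategy $\tilde\rho$, and online loss $\E_{\alg\sim\widetilde\alg,\,\rho\sim\tilde\rho}[c_\alg(\rho)-C\,c_\opt(\rho)-\kappa]$, and apply the minimax theorem: its max--min value is $\max_{\tilde\rho}\min_{\alg}\E_{\rho\sim\tilde\rho}[c_\alg(\rho)-C\,c_\opt(\rho)-\kappa]\le 0$ by the previous sentence, so the equal min--max value $\min_{\widetilde\alg}\sup_{\rho}\E_{\alg\sim\widetilde\alg}[c_\alg(\rho)-C\,c_\opt(\rho)-\kappa]$ is $\le 0$ too; unpacking this gives, for any $\epsilon'>0$, a randomized algorithm $\widetilde\alg$ with $\E_{\alg\sim\widetilde\alg}[c_\alg(\rho)]\le C\,c_\opt(\rho)+\kappa+\epsilon'$ for every $\rho$, hence $C_{\rand}^{\ONL}\le C=C_{\distr}^{\ONL}+\epsilon$, and letting $\epsilon\downarrow 0$ finishes the proof.

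The step I expect to be the real obstacle is making the minimax theorem rigorous: it is classically stated for finite (or compact) strategy sets, whereas both the deterministic algorithms and the request sequences here form infinite sets. I would handle this in the standard way --- restrict to a finite horizon $T$ (a bound on request-sequence length), which for the problems at hand makes both the set of length-$\le T$ request sequences and the set of distinct algorithm behaviors on length-$\le T$ prefixes finite, run the argument above for each $T$, and then pass $T\to\infty$, using that $C_{\distr}^{\ONL}$ is the supremum over $T$ of its horizon-$T$ restriction together with a compactness/diagonalization argument (K\"onig's lemma on the tree of finite algorithm descriptions) to assemble the near-optimal horizon-$T$ algorithms into a single randomized algorithm good for all horizons. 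The potential degenerate case of request sequences with $c_\opt(\rho)=0$ causes no difficulty here, as it is absorbed by the slack in the additive constant, and the asymmetric treatment of $\kappa$ in the two definitions (``$\exists\kappa$'' for algorithms, ``$\forall\kappa$'' for the lower bound) is precisely what lets both inequalities go through cleanly; the footnote's alternative route --- fix $\kappa$, prove the equality for that $\kappa$, then take the infimum over $\kappa$ --- is equivalent and could be substituted.
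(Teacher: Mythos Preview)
The paper does not actually prove this theorem: it states it as known, remarks that ``a similar LP duality argument yields it,'' and adds in a footnote that one may fix $\kappa$ first and then take the infimum. Your proposal implements precisely this hinted approach --- the auxiliary zero-sum game with affine payoff $c_{\alg}(\rho)-C\,c_{\opt}(\rho)-\kappa$ is exactly the linearization the paper alludes to, and you even record the footnote's alternative route --- so your attempt is consistent with, and considerably more detailed than, the paper's own treatment.
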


\paragraph{Some notation.}
Let $\rho$ be a request sequence. We will use subscripts $\rho_i$ to indicate subsequences in
an underlying partition $\rho = \rho_1\rho_2\dots\rho_m$. Note that each $\rho_i$ is a sequence,
not necessarily a single request. We denote by $\rho_{\le i}$ the prefix $\rho_1\rho_2\dots\rho_i$. 
Also, for notational convenience, $\rho_{\le 0}$ stands for the empty sequence. For two indices 
$i\le j$, we denote by $\rho_{i\le\cdot\le j}$ the subsequence $\rho_i\rho_{i+1}\dots\rho_j$.

\paragraph{Escape prices.} 
Given an online problem $\ONL$, we will sometimes consider an {\em escape price relaxation} of $\ONL$, 
defined as follows. There is an underlying escape price parameter $p\ge 0$. At certain points in the request sequence 
(potentially all of them), the online algorithm is allowed as an additional option to respond to a request by bailing out of the game. If the online algorithm responds to a request $\rho[t]$ by bailing out of the game, then the escape price $p$ is added to its cost, but the online algorithm pays no additional cost for any requests thereafter (including the request $\rho[t]$ itself). If we allow this additional option of bailing out only in response to requests $\rho[t]$ belonging to some specific subsequence $\rho_{i\le\cdot\le j}$, then we say that the escape price is available on $\rho_{i\le\cdot\le j}$.

In the escape cost relaxation of a problem, we allow invoking the escape price only to the online algorithm, but not to the (offline) algorithms defining the base cost. Thus, the escape price option only helps the online player, and the competitive ratio for the escape price relaxation of $\ONL$ is at most
$C_{\rand}^{\ONL}$, regardless of the value of $p$.\footnote{The escape price relaxation is reminiscent of a combination of 
the original problem and the ski rental problem, where the classical cost corresponds to rental price and $p$ is the 
buying price.}

\paragraph{More notation.}
In some online problems, there is an adjustable starting configuration $s$ which affects the cost. Given a play 
$(\rho,\alg)$ with starting configuration $s$, we often write $c_{\alg,s}(\rho)$ to denote the cost of the 
algorithm in this case, and $c_{\opt,s}(\rho)$ for the corresponding optimal (offline) cost. Moreover, let $\rho = \rho_1\rho_2$ be the concatenation 
of two request sequences $\rho_1$ and $\rho_2$. For a play $(\rho,\alg)$ with starting configuration $s$, we 
write
$$
c_{\alg,s}(\rho_2\mid \rho_1) := c_{\alg,s}(\rho) - c_{\alg,s}(\rho_1),
$$
denoting the partial cost of the online algorithm on the subsequence $\rho_2$ when serving the request 
sequence $\rho = \rho_1\rho_2$. When the initial location $s$ is clear from the context, we 
will often drop it from the notation.

Now let us discuss the online problems considered in this paper. All the problems considered
in this paper are defined in the context of a metric space. See Appendix~\ref{sec: metrics}
for some basic definitions and notation.

\paragraph{\boldmath $k$-server.}
There is an underlying metric space ${\cal M} = (M,d)$. The adversary's strategy $\rho$ is
a sequence of points $\rho[1],\rho[2],\dots\in M$. The algorithm controls $k$ identical 
servers, initially located at $k$ distinct points in $M$. We may assume w.l.o.g. that $|M| > k$,
otherwise the problem is trivial. The response $\alg[i]$ to a request $\rho[i]$ moves one
of the servers from its current location to $\rho[i]$. This adds to the cost of the algorithm the
distance travelled by the server. We denote this problem as $\kSRV$.

\paragraph{Metrical task systems.}
Here, too, there is an underlying metric space ${\cal M} = (M,d)$, which must be finite.
The elements of $M$ are called {\em states}.
An adversary's request is a vector in $(\R_+\cup\{\infty\})^M$, where $\R_+$ denotes
the set of non-negative real numbers. The algorithm begins at an arbitrary $\alg[0]\in M$.
In response to $\rho[i]\in (\R_+\cup\{\infty\})^M$, the algorithm must choose a state
$\alg[i]\in M$. This adds to its cost $d(\alg[i-1],\alg[i]) + \rho[i](\alg[i])$. We 
denote this problem as $\MTS$.

\paragraph{\boldmath $k$-taxi.}
The setting is identical to that of the $k$-server problem. The adversary's strategy
consists of a sequence of pairs of points in $M$. In response to a request, the
algorithm must move a server to the first point in the pair, and then move the
same server from the first point to the second point. There are two flavors to this
problem, differing in the definition of the cost. In {\em easy} $k$-taxi, the algorithm
pays for the entire move. We denote this flavor by $\ekTX$. In {\em hard} $k$-taxi, 
the algorithm pays only for the move to the first point in the request and not for the 
move to the second point (which of course changes the location of the server towards 
the following requests). We denote this flavor by $\hkTX$.
\begin{proposition}[folklore]
For all metric spaces ${\cal M} = (M,d)$ and for all finite $k < |M|$, 
$C_{\rand}^{\hkTX}({\cal M})\ge C_{\rand}^{\ekTX}({\cal M})\ge 
C_{\rand}^{\kSRV}({\cal M})$.\footnote{We will use the notation 
$C_{\rand}^{\ONL}(P)$ (also $C_{\detr}^{\ONL}(P)$ for the deterministic competitive ratio) to specify 
the setting $P$ in which the problem $\ONL$ is considered.} 
\end{proposition}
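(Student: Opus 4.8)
The plan is to prove both inequalities by elementary instance-level reductions that turn a competitive algorithm for the ``harder'' problem into an equally competitive algorithm for the ``easier'' one; equivalently one could phrase these via Yao's principle (Theorem~\ref{thm: Yao minimax}) as maps between hard input distributions, but the algorithmic phrasing is cleanest here. For $C_{\rand}^{\ekTX}(\mathcal M)\ge C_{\rand}^{\kSRV}(\mathcal M)$, I would encode a $\kSRV$ request at $p\in M$ as the $\ekTX$ request $(p,p)$. Serving $(p,p)$ means moving some server from its current location $x$ to $p$ and then from $p$ to $p$, incurring $d(x,p)+d(p,p)=d(x,p)$, exactly the cost of the corresponding $\kSRV$ move. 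Thus the map $\rho\mapsto\tilde\rho$ is online (the $i$-th transformed request depends only on $\rho[i]$) and induces a cost-preserving bijection between $\kSRV$ service trajectories for $\rho$ and $\ekTX$ service trajectories for $\tilde\rho$ at matched starting configurations; in particular $c_{\opt,s}^{\kSRV}(\rho)=c_{\opt,s}^{\ekTX}(\tilde\rho)$. Given a $C$-competitive randomized $\ekTX$ algorithm $\widetilde\alg$, run it on $\tilde\rho$ and copy its moves: this is a randomized $\kSRV$ algorithm of identical cost, so the guarantee (with the same additive constant $\kappa$) transfers.

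For $C_{\rand}^{\hkTX}(\mathcal M)\ge C_{\rand}^{\ekTX}(\mathcal M)$, the key observation is that $\ekTX$ and $\hkTX$ have the same request space and the same space of legal server trajectories, and differ only in that $\ekTX$ additionally charges, for each request $(a,b)$, the ``with-passenger'' leg of length $d(a,b)$. Hence, for a request sequence $\sigma=(a_1,b_1)\cdots(a_m,b_m)$ and \emph{any} legal trajectory $\mu$ serving it, $c_{s}^{\ekTX}(\mu)=c_{s}^{\hkTX}(\mu)+D$ with $D:=\sum_{i=1}^m d(a_i,b_i)$ depending only on $\sigma$, not on $\mu$. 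Therefore the cost-minimizing trajectory is the same for both problems, so $c_{\opt,s}^{\ekTX}(\sigma)=c_{\opt,s}^{\hkTX}(\sigma)+D$, and copying the moves of any randomized $\hkTX$ algorithm produces an $\ekTX$ algorithm whose cost on every $\sigma$ exceeds the former's by exactly $D$. If the $\hkTX$ algorithm is $C$-competitive, then for every $\sigma$,
\begin{align*}
\E[c^{\ekTX}_{\alg,s}(\sigma)] &= \E[c^{\hkTX}_{\alg,s}(\sigma)] + D \le C\, c_{\opt,s}^{\hkTX}(\sigma) + \kappa + D \\
&= C\, c_{\opt,s}^{\ekTX}(\sigma) + \kappa - (C-1)D \le C\, c_{\opt,s}^{\ekTX}(\sigma)+\kappa,
\end{align*}
where the last inequality uses $D\ge 0$ and $C\ge 1$. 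So the $\ekTX$ algorithm is $C$-competitive with the same $\kappa$, giving the claimed inequality.

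Both arguments are essentially bookkeeping, so there is no serious obstacle; the points that need a moment of care are (i) checking that the request transformations are online, so that they genuinely map online algorithms to online algorithms, (ii) observing that the additive shift $D$ in the $\hkTX$-to-$\ekTX$ step is trajectory-independent, which is exactly what forces the two offline optima to differ by the same constant, and (iii) the harmless use of $C\ge 1$ (valid since each of $\kSRV,\ekTX,\hkTX$ can be made to incur positive cost when $k<|M|$) to discard the $-(C-1)D$ term.
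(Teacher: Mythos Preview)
Your proof is correct. The paper does not actually supply a proof of this proposition---it is labeled ``folklore'' and left unproved---so there is nothing to compare against; your two reductions (encoding a $\kSRV$ request $p$ as the taxi request $(p,p)$, and the observation that $\ekTX$ and $\hkTX$ costs on the same trajectory differ by the trajectory-independent quantity $D=\sum_i d(a_i,b_i)$) are exactly the standard arguments for this folklore fact. One tiny refinement: in point (iii) the justification for $C\ge 1$ should be that the optimal cost can be made \emph{arbitrarily large} (not merely positive), so that the additive constant $\kappa$ is absorbed; this is of course true here since $k<|M|$ lets one repeat a $k$-server-style cycle through $k+1$ points.
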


We note that there are metric spaces ${\cal M}$ for which
$C_{\detr}^{\hkTX}({\cal M})\gg C_{\detr}^{\kSRV}({\cal M})$, see~\cite{CK19}.
On the other hand, in all settings, $C_{\detr}^{\ekTX}({\cal M})\le C_{\detr}^{\kSRV}({\cal M})+2$,
see~\cite{Kos96}.

\paragraph{Distributed paging.}
There is a network of processors. The communication links and delays induce
a metric on the set of processors, so we regard the setting as a metric space
${\cal M} = (M,d)$, where $M$ is the finite set of processors. Each processor
$x\in M$ is endowed with a cache that can hold $m_x\in\N$ memory pages. Let
$m = \sum_{x\in M} m_x$. The network in its entirety holds $f\le m$ distinct
pages. They can be replicated in different processors, but for each distinct
page at least one copy must be held somewhere in the network at all times.
The adversary requests pairs $(p,x)$, where $p$ is one of the $f$ pages,
and $x\in M$ is a processor. In response, the algorithm must bring a copy
of $p$ to $x$. If the page is not already there, the algorithm can replicate 
a copy at another processor, or move that copy, to $x$. Either way, this costs
the distance to the other processor. If $x$'s cache is full, the algorithm must
evict a page to make room for $p$, and the evicted page can be discarded, or
(this is necessary if it is the last copy) moved to a vacant slot in another processor,
incurring a cost equal to the distance between the processors. We denote this 
problem as $\disPG$. The following proposition is a 
special case of~\cite[Theorem 3.1]{ABF93}.
\begin{proposition}[Awerbuch, Bartal and Fiat \cite{ABF93}]
	Consider any $n$-node network ${\cal M}$. Let $k = n-1$, and let
	$f = m-n+2$. Then, $C_{\rand}^{\disPG}({\cal M},m,f) = 
	\Omega\left(C_{\rand}^{\kSRV}({\cal M})\right)$.
\end{proposition}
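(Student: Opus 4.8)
This proposition specializes \cite[Theorem~3.1]{ABF93}; here is the plan. Since $k=n-1$, a $\kSRV$ configuration on $\mathcal M=(M,d)$ is described by its single uncovered point, the ``anti-server''. First I would set up the $\disPG$ instance: let the $n$ processors be the points of $M$, with the network realized so that the induced metric is $d$; designate one page $p_0$ and call the remaining $f-1=m-n+1$ pages \emph{fillers}. Call a configuration \emph{canonical} if $n-1$ processors each hold one copy of $p_0$ while the remaining $m-(n-1)=m-n+1$ slots hold the $m-n+1$ distinct fillers (one may assume the caches are not too lopsided, so that such a tight layout exists); a canonical configuration is determined by the lone ``hole'' processor $h$ that lacks $p_0$, and I identify $h$ with the anti-server. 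The reduction translates a $\kSRV$ request at a point $p$ into the $\disPG$ request $(p_0,p)$.

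Next I would bound the offline optima: $c_{\opt}^{\disPG}\le 2\,c_{\opt}^{\kSRV}$ on translated sequences. From a canonical configuration with hole $h$, the request $(p_0,p)$ is free when $p\ne h$ (matching a request to a covered point), and when $p=h$ it can be served by fetching $p_0$ from some processor $q$ and moving the last-copy filler evicted at $h$ into the slot freed at $q$; this yields the canonical configuration with hole $q$ at cost $d(q,h)+d(h,q)=2d(h,q)$, exactly twice the cost of moving the anti-server from $h$ to $q$. So an offline player imitating the optimal anti-server pays at most twice its cost.

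The hard part will be the other direction: from any deterministic $\disPG$ algorithm $\mathcal B$, extract a $\kSRV$ algorithm $\mathcal A$ with $c_{\mathcal A}\le O(1)\cdot c_{\mathcal B}$ on the corresponding sequences. The natural $\mathcal A$ feeds $(p_0,p)$ to $\mathcal B$, lets $U$ be the set of processors not holding $p_0$ in $\mathcal B$'s resulting configuration (nonempty, since $p_0$ cannot have $n$ copies, and moreover $p\notin U$, so the request is served), and moves its anti-server to a point of $U$ near its current position. The obstacle is that $\mathcal B$ has options a single anti-server cannot imitate verbatim: keeping several copies of $p_0$ (hence several holes), trading $p_0$'s surplus copies for extra filler copies, or shuffling fillers around. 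I would argue this buys only a constant factor via a charging/potential argument: maintain a transportation-type potential between $\mathcal A$'s anti-server (together with the current positions of any displaced fillers) and the set $U$ (together with $\mathcal B$'s filler layout), and verify that each elementary page move of $\mathcal B$ raises the potential by at most $O(1)$ times its cost while $\mathcal A$'s reaction lowers it by at least the cost $\mathcal A$ incurs that step. Working through this case analysis, especially for the non-canonical configurations of $\mathcal B$, is the bulk of the proof, and is what \cite{ABF93} do in greater generality. Finally, combining the two directions: any $C$-competitive (randomized) $\disPG$ algorithm, run through $\mathcal A$, yields an $O(C)$-competitive $\kSRV$ algorithm, so $C_{\rand}^{\kSRV}(\mathcal M)\le O(1)\cdot C_{\rand}^{\disPG}(\mathcal M,m,f)$, i.e.\ $C_{\rand}^{\disPG}(\mathcal M,m,f)=\Omega\big(C_{\rand}^{\kSRV}(\mathcal M)\big)$, as claimed.
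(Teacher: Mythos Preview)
The paper does not supply its own proof of this proposition; it simply states it as ``a special case of~\cite[Theorem~3.1]{ABF93}'' and moves on. So there is nothing in the paper to compare your argument against.

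That said, your sketch is the natural reduction and is in line with what one expects from~\cite{ABF93}: identify the anti-server with the unique processor lacking the designated page $p_0$, translate a $\kSRV$ request at $p$ into the $\disPG$ request $(p_0,p)$, and observe that the page-count constraint ($f=m-n+2$ distinct pages, each with at least one copy, in $m$ total slots) forces $p_0$ to have at most $n-1$ copies at all times, so the set $U$ of $p_0$-free processors is always nonempty. Your offline bound is clean. The part you correctly flag as the real work---turning an arbitrary $\disPG$ algorithm (which may hold fewer than $n-1$ copies of $p_0$, replicate fillers, etc.) into a $\kSRV$ algorithm with only constant-factor loss via a transportation-style potential---is exactly what~\cite{ABF93} carries out, and you are right not to pretend it is trivial. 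Since the present paper only cites the result, your write-up already goes further than the paper does.
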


\paragraph{Metric allocation.}
As usual, there is an underlying finite metric space ${\cal M} = (M,d)$. The algorithm
maintains a fractional allocation of a resource to the points on $M$, denoted by a vector 
$p\in\R_+^M$ with $\sum_{x\in M} p_x = 1$. Moving from one allocation
to another adds to the algorithm's cost the transportation (a.k.a. earthmover) distance 
between the two, under $d$. Each request of the adversary is defined by assigning to
every $x\in M$ a non-increasing convex function $\phi_x: [0,1]\rightarrow\R_+$. If
the algorithm serves the request using the allocation $p$, then this adds
to its cost $\sum_{x\in M} \phi_x(p_x)$. We denote this problem by $\MA$.
\begin{proposition}[Bansal and Coester~\cite{BC21b}]
	For all finite metric spaces ${\cal M}$, $C_{\rand}^{\MA}({\cal M})=C_{\detr}^{\MA}({\cal M})\ge C_{\rand}^{\MTS}({\cal M})$.
\end{proposition}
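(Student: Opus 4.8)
\emph{Two parts.} The statement combines two facts: randomness does not help metric allocation ($C_{\rand}^{\MA}(\cM)=C_{\detr}^{\MA}(\cM)$), and $\MA$ on $\cM$ is at least as hard as $\MTS$ on $\cM$ ($C_{\detr}^{\MA}(\cM)\ge C_{\rand}^{\MTS}(\cM)$). I would prove them separately. For the first, only $C_{\detr}^{\MA}\le C_{\rand}^{\MA}$ is nontrivial, and the plan is the standard ``convexity derandomization'': the configuration set $\Delta(M)=\{p\in\R_+^M:\sum_x p_x=1\}$ is convex, each per-request cost $p\mapsto\sum_x\phi_x(p_x)$ is convex, and the movement cost $W_d$ (the $d$-transportation/earthmover distance) is a norm of the difference of two allocations on $\{\sum_x v_x=0\}$, hence convex. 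Given a randomized algorithm $\widetilde B$ with random allocation $P^t$ after $\rho_{\le t}$, the deterministic online algorithm $\bar p^t:=\E[P^t]$ satisfies, request by request, $\sum_x\phi_x^t(\bar p_x^t)\le\E\sum_x\phi_x^t(P_x^t)$ and $W_d(\bar p^{t-1},\bar p^t)=\norm{\E[P^{t-1}-P^t]}\le\E\,W_d(P^{t-1},P^t)$ by Jensen, so its total cost never exceeds $\E[\text{cost of }\widetilde B]$; hence it is at least as competitive.

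\emph{The $\MTS\to\MA$ reduction via complementary allocations.} For $C_{\detr}^{\MA}(\cM)\ge C_{\rand}^{\MTS}(\cM)$, put $n=|M|$, $\theta=\tfrac1{n-1}$, and use the folklore equivalence of randomized $\MTS$ with \emph{fractional} $\MTS$ --- a process holding a distribution $q^t\in\Delta(M)$ at cost $\langle c^t,q^t\rangle+W_d(q^{t-1},q^t)$ per step, at the same competitive ratio. The key idea is the \emph{complementary map} $q\mapsto p:=\theta(\mathbf 1-q)$, an affine bijection of $\Delta(M)$ onto the subsimplex $\Delta_\theta:=\{p\in\Delta(M):p_x\le\theta\ \forall x\}$, combined with encoding an $\MTS$ request $c^t$ as the $\MA$ request $\phi_x^t(z):=c_x^t\,(1-(n-1)z)^+$ on $[0,1]$ (with $(a)^+=\max\{a,0\}$; an $\infty$-entry of $c^t$ is the limit $c^t_x\to\infty$). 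Each $\phi_x^t$ is non-negative, non-increasing and convex (piecewise linear, hitting $0$ at $z=\theta$ and flat afterward), so these are legitimate $\MA$ requests. Under the map, $\sum_x\phi_x^t(\theta(1-q_x))=\sum_x c_x^t q_x=\langle c^t,q^t\rangle$ --- service costs match exactly --- and $p^{t-1}-p^t=\theta(q^t-q^{t-1})$, so $W_d(p^{t-1},p^t)=\theta\,W_d(q^{t-1},q^t)$; hence after rescaling the metric by $n-1$ (which does not change $C_{\rand}^{\MA}$) the movement costs match too. Applying the map to the offline $\MTS$ optimum also gives $c^{\MA}_{\opt}(\psi(\rho))\le c^{\MTS}_{\opt}(\rho)$, where $\psi(\rho)$ is the $\MA$ instance encoding an $\MTS$ instance $\rho$ (starting from $\theta(\mathbf 1-e_s)$ when $\MTS$ starts at $s$).

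\emph{Finishing.} I would then take a near-optimal randomized $\MA$ algorithm, make it deterministic by the first part, view it (by scale-invariance) as running against the metric $(n-1)d$, and feed it $\psi(\rho)$, obtaining allocations $p^t$. \emph{Provided} every $p^t$ lies in $\Delta_\theta$, the trajectory $q^t:=\mathbf 1-(n-1)p^t$ is a valid fractional $\MTS$ trajectory whose $\MTS$ cost equals the algorithm's $\MA$ cost, hence is at most $(C_{\rand}^{\MA}(\cM)+\epsilon)\,c^{\MA}_{\opt}(\psi(\rho))+O(1)\le(C_{\rand}^{\MA}(\cM)+\epsilon)\,c^{\MTS}_{\opt}(\rho)+O(1)$ for every $\epsilon>0$; rounding this fractional trajectory to a randomized $\MTS$ algorithm at no increase in expected cost (again folklore) and letting $\epsilon\to0$ yields $C_{\rand}^{\MTS}(\cM)\le C_{\rand}^{\MA}(\cM)$.

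\emph{The main obstacle.} The one unjustified step is the proviso that the $\MA$ algorithm stays in $\Delta_\theta$, i.e.\ never pools more than a $\theta$-fraction of the resource at one point. Since every $\phi_x^t$ is flat on $[\theta,1]$, such pooling buys nothing, so morally it is w.l.o.g.; the plan is to replace each $p^t$ by its \emph{cap} $\overline{p^t}$ --- the water-filling projection onto $\Delta_\theta$: shave over-filled coordinates down to $\theta$ and spread the freed mass over the under-filled ones --- and check the cost only decreases. For service this is immediate (shaved coordinates sit on the flat part of $\phi_x^t$; raised ones move up along the non-increasing $\phi_x^t$). The hard part will be showing that capping does not increase transportation movement, $W_d(\overline p,\overline{p'})\le W_d(p,p')$ for all $p,p'$: I expect this from taking an optimal transport plan for $p\to p'$ and re-routing its over-filled mass, but it genuinely relies on the combinatorial geometry of the hypersimplex $\Delta_\theta$, and it is the one genuinely non-routine ingredient of this route. (Alternatively one can avoid capping by restricting the $\MTS$ adversary to ``co-elementary'' tasks $c=\gamma(\mathbf 1-e_y)$ together with forcing tasks, for which $q\mapsto\mathbf 1-q$ is an honest bijection of $\Delta(M)$ onto itself with no subsimplex restriction, once one checks --- as is folklore for $\MTS$ --- that such tasks already witness the $\MTS$ competitive ratio; the cited reduction, stated for a somewhat more general version of $\MA$, presumably proceeds along one of these lines.)
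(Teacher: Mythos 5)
The paper does not prove this proposition; it states it as a citation to~\cite{BC21b} with no argument attached, so there is no in-paper proof to compare against. Judging your proposal on its own merits: the derandomization half is fine. Taking $\bar p^t:=\E[P^t]$ is an admissible online allocation, convexity of each $\phi_x$ gives $\sum_x\phi_x^t(\bar p_x^t)\le\E\sum_x\phi_x^t(P_x^t)$ by Jensen, and since the earthmover cost is a norm on the zero-sum hyperplane, $W_d(\E P^{t-1},\E P^t)\le\E\,W_d(P^{t-1},P^t)$. Likewise the encoding $\phi_x^t(z)=c_x^t\bigl(1-(n-1)z\bigr)^+$ together with the affine bijection $q\mapsto\theta(\mathbf 1-q)$ between $\Delta(M)$ and the subsimplex $\Delta_\theta$, and the factor-$(n-1)$ rescaling of $d$ so that movement costs match, are all set up correctly, as is the observation that pushing the MTS optimum through this map bounds $c_{\opt}^{\MA}(\psi(\rho))\le c_{\opt}^{\MTS}(\rho)$.

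The gap you flag is, however, genuine, and the specific fix you propose fails. The water-filling projection onto $\Delta_\theta$ is \emph{not} a $W_d$-contraction. Take $M=\{a,b,c\}$ with $d(a,b)=1$, $d(a,c)=d(b,c)=10$, so $\theta=1/2$; let $p=(1,0,0)$ and $p'=(1/2,1/2,0)$. Then $W_d(p,p')=1/2$ (move mass $1/2$ from $a$ to $b$), but water-filling sends $p$ to $\overline p=(1/2,1/4,1/4)$ and leaves $p'$ fixed, giving $W_d(\overline p,\overline{p'})=\tfrac14\cdot d(b,c)=5/2$. So the ``cost only decreases'' claim is false for the cap as you define it, and one cannot simply replace the MA trajectory by its capped version coordinate-by-coordinate and time-by-time. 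Any correct argument has to either choose the capping in a trajectory- and metric-aware way, or extract the fractional $\MTS$ trajectory from the $\MA$ trajectory by some other means (e.g.\ a coupling/flow argument across time steps rather than a pointwise projection), and neither is routine.

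Your parenthetical alternative does not rescue the argument either: for $n\ge 3$ the map $q\mapsto\mathbf 1-q$ sends $\Delta(M)$ to the affine slice $\{\,p:\sum_x p_x=n-1\,\}$, not back to $\Delta(M)$, so it is not a bijection of the simplex onto itself and restricting to co-elementary tasks does not eliminate the subsimplex constraint. As written, then, part (a) of your proof is complete, but part (b) has a real hole exactly where you suspected it.
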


\paragraph{Layered graph traversal.}
Here the adversary selects a layered graph $G=(V,E)$, with edge lengths 
$L:E\rightarrow\R_+$, first layer $\{s\}$, last layer $\{t\}$, and edges only between vertices of adjacent layers. The graph
is presented to the algorithm layer by layer. When a layer is presented, all
the edges to the previous layers and their lengths are revealed. Starting at
$s$, in order to reveal a new layer, the algorithm must reach the previous 
layer (so at the start the second layer is revealed). The game ends when the
algorithm reaches $t$. The number of layers is not known until $t$ is revealed.
The cost of the algorithm is the total length of its traversed path in $G$. 
We denote this problem by $\LGT$. Let $w$ denote the maximum 
number of nodes within a layer of $G$.
\begin{proposition}[folklore]
	Let ${\cal M}$ be a finite metric space. Then, 
	$C_{\rand}^{\LGT}(w)\ge C_{\rand}^{\MTS}({\cal M})$,
	where $w$ is the number of points in ${\cal M}$.
\end{proposition}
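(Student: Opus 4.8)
The plan is to realize any $\MTS$ instance on $\mathcal M=(M,d)$ as a layered graph of width $|M|$, essentially by writing the states of the system directly into the layers. First I would fix an $\MTS$ instance --- a starting state $a_0\in M$ and requests $\rho=\rho[1]\cdots\rho[T]$, assuming WLOG that every $\rho[j]$ has at least one finite coordinate (otherwise $c_\opt(\rho)=\infty$ and there is nothing to prove) --- and build a layered graph $G(\rho)$ with layers $L_0=\{s\}$, $L_j=\{v^j_x:x\in M\}$ for $1\le j\le T$, and $L_{T+1}=\{t\}$. Between $L_{j-1}$ and $L_j$ I put the complete bipartite graph with edge length $\mathrm{len}(v^{j-1}_x v^j_y)=d(x,y)+\rho[j](y)$ whenever this is finite (identifying $s=v^0_{a_0}$), and I join every $v^T_x$ to $t$ by a zero-length edge. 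Each layer has at most $|M|=w$ nodes, and $L_j$ is determined by $\rho[j]$ alone, so $G(\rho)$ can be revealed online as the requests arrive. The easy (offline) direction is immediate: the monotone path through $v^1_{y_1},\dots,v^T_{y_T}$ has length $\sum_j\big(d(y_{j-1},y_j)+\rho[j](y_j)\big)$ with $y_0:=a_0$, i.e.\ exactly the $\MTS$ cost of the trajectory $a_0,y_1,\dots,y_T$; applying this to an optimal $\MTS$ trajectory gives $c_\opt(G(\rho))\le c_\opt(\rho)$.

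For the reduction of algorithms, I would take a randomized $\LGT$ algorithm $\widetilde\alg$ that is $C$-competitive (with additive constant $\kappa$) on width-$w$ graphs, sample $\alg\sim\widetilde\alg$, and simulate it on $G(\rho)$, revealing layer $L_j$ when $\rho[j]$ arrives. The induced $\MTS$ algorithm responds to $\rho[j]$ by moving to the state $x_j$ such that $v^j_{x_j}$ is the \emph{first} node of $L_j$ that $\alg$ visits (with $x_0:=a_0$). The heart of the argument is the per-realization bound $c_\alg(G(\rho))\ge(\text{cost of the induced }\MTS\text{ algorithm on }\rho)$. To get it, cut $\alg$'s walk at the (strictly increasing) times it first reaches $L_0,L_1,\dots,L_{T+1}$; this partitions the walk into segments $W_1,\dots,W_{T+1}$, where $W_j$ runs from $v^{j-1}_{x_{j-1}}$ to $v^j_{x_j}$ and, being the prefix up to the first touch of $L_j$, lies in $\bigcup_{i\le j-1}L_i$ except for its final edge. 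That final edge leaves some $v^{j-1}_{x'}$ and has length $d(x',x_j)+\rho[j](x_j)$, and since each edge of $G(\rho)$ has length at least the $d$-distance between the two points of $M$ it connects, the triangle inequality shows any walk between two nodes of a common layer has length at least their $d$-distance; hence $\mathrm{len}(W_j)\ge d(x_{j-1},x')+\big(d(x',x_j)+\rho[j](x_j)\big)\ge d(x_{j-1},x_j)+\rho[j](x_j)$. Summing over $j$ (with $W_{T+1}$ contributing $\ge 0$) gives the claimed domination, and then taking expectations and using the offline bound, $\E[\text{induced }\MTS\text{ cost on }\rho]\le\E[c_\alg(G(\rho))]\le C\,c_\opt(G(\rho))+\kappa\le C\,c_\opt(\rho)+\kappa$, so the induced $\MTS$ algorithm is also $C$-competitive. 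Taking the infimum over admissible $C$ yields $C_{\rand}^{\MTS}(\mathcal M)\le C_{\rand}^{\LGT}(w)$.

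The only part that needs care --- the obstacle, such as it is --- is exactly this cost accounting, and specifically the decision to define the $\MTS$ state as the \emph{first}-visited node of $L_j$. That choice forces $\alg$ to enter $v^j_{x_j}$ along the cost-bearing edge from $L_{j-1}$, so the charge $\rho[j](x_j)$ is unavoidable. With a more naive definition (say, the node of $L_j$ where $\alg$ happens to be located later on), $\alg$ could slip into $L_j$ from $L_{j+1}$ and dodge the $\rho[j](\cdot)$ charge, and the domination $c_\alg\ge(\text{induced }\MTS\text{ cost})$ would break. I would also remark, harmlessly, that one should restrict attention to $\LGT$ algorithms that always eventually reach the last revealed layer, so that the online simulation above is well defined.
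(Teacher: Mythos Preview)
Your proposal is correct and follows exactly the paper's approach: the paper's proof idea is the single sentence ``construct a new layer $L_i$ with a vertex for each point in $\cM$, and the edge between any $x\in L_{i-1}$ and $y\in L_i$ has length $d(x,y)+\rho[i](y)$,'' which is precisely your construction. You have simply filled in the details the paper omits (the offline cost comparison, the first-visit rule for extracting an $\MTS$ trajectory, and the per-segment cost domination via the triangle inequality), all of which are correct.
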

\begin{proof}[Proof idea]
	When cost vector $\rho[i]$ is revealed in MTS, construct a new layer $L_i$ with a vertex for each point in $\cM$, and the edge between any $x\in L_{i-1}$ and $y\in L_i$ has length $d(x,y)+\rho[i](y)$.
\end{proof}
We note that the inequality here is known to be far from tight.

\paragraph{Small set chasing.}
There is an underlying metric space ${\cal M} = (M,d)$, not necessarily finite,
and a starting point $x_0\in M$. The adversary presents requests which are 
subsets of $M$, each of cardinality at most $w$. To serve a request, the 
algorithm must move to one of the points in the finite subset, incurring a
cost equal to the distance traversed. We denote this problem by $\MSS$.\footnote{The abbreviation stands for \emph{metrical service systems}, which is the historical name of the problem.}
The following propositions are known.
\begin{proposition}[folklore]\label{pr: MTS-kSRV-MSS}
	Let ${\cal M}$ be an $n$-point metric space and let $k=n-1$.
	Then, for every $w\in\{1,2,\dots,k\}$,
	$C_{\rand}^{\MTS}({\cal M})\ge C_{\rand}^{\kSRV}({\cal M})\ge
	C_{\rand}^{\MSS}({\cal M},w)$.
	The second inequality is an equality if $w=k$.
\end{proposition}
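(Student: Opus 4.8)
The plan is to prove the two inequalities separately and then read off the equality at $w=k$ from the second one together with a trivial reverse reduction. For the first inequality I would realize $k$-server with $k=n-1$ as a special case of $\MTS$ on the same space: since $n-1$ servers on $n$ points leave exactly one uncovered point, a configuration is equivalent to that single ``hole'', and a $k$-server move costs exactly the distance travelled by the hole. Forcing the hole off a point $x$ is achieved by the $\MTS$ request vector equal to $\infty$ at $x$ and $0$ elsewhere, which any finite-cost $\MTS$ algorithm must obey while paying exactly the hole's travel cost. Hence restricting any randomized $\MTS$ algorithm on $\mathcal M$ (started to match the initial hole) to inputs built from such vectors yields a randomized $k$-server algorithm with identical online cost and identical offline optimum on every request sequence, so $C_{\rand}^{\kSRV}(\mathcal M)\le C_{\rand}^{\MTS}(\mathcal M)$; no appeal to Yao's principle is needed here.

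For the second inequality I would pass to distributional lower bounds via Theorem~\ref{thm: Yao minimax} and transform a hard $\MSS(\mathcal M,w)$ distribution into a hard $k$-server one. The idea is to implement an $\MSS$ request to a set $S$ (with $|S|\le w\le n-1$) by $k$-server requests to the points of $M\setminus S$. If $|S|=n-1$ this is literally one $k$-server request, to the unique point outside $S$, forcing the hole into $S$ for free; if $|S|<n-1$ I would instead issue a long block cycling $N$ times through all points of $M\setminus S$. Then I would argue: (i) the transformed offline optimum is at most the $\MSS$ optimum, since lifting an $\MSS$-optimal trajectory to the hole keeps it inside the current set $S$, so every block request costs it nothing; (ii) the hole trajectory of any deterministic $k$-server algorithm induces a deterministic $\MSS$ algorithm, snapping the hole into $S$ at the end of a block if it is not already there; and (iii) a snap can occur only if the hole stayed inside $M\setminus S$ for the whole block, which costs the algorithm at least $N\delta$ there ($\delta>0$ being the smallest interpoint distance of $\mathcal M$), whereas the snap costs at most $\diam(\mathcal M)$, so the total correction is an $O(1/N)$ fraction of the $k$-server cost. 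Combining (i)--(iii), using that the additive constant $\kappa$ in the distributional definition may be chosen freely, and letting $N\to\infty$, gives $C_{\distr}^{\MSS}(\mathcal M,w)\le C_{\distr}^{\kSRV}(\mathcal M)$, hence $C_{\rand}^{\MSS}(\mathcal M,w)\le C_{\rand}^{\kSRV}(\mathcal M)$ by Yao's principle.

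Finally, for $w=k=n-1$ the reverse inequality $C_{\rand}^{\kSRV}(\mathcal M)\le C_{\rand}^{\MSS}(\mathcal M,n-1)$ is immediate, since a $k$-server request to $x$ is precisely the $\MSS$ request $M\setminus\{x\}$ and any $\MSS(\mathcal M,n-1)$ algorithm restricted to requests of this form is a $k$-server algorithm of the same cost; combined with the second inequality this yields equality. I expect step (iii) to be the main obstacle: without the repetition, a single round-robin pass over $M\setminus S$ does not force the hole into $S$ at all---since $|M\setminus S|\ge 2$ whenever $|S|<n-1$, the hole can ``dodge'' indefinitely within $M\setminus S$ at bounded cost per round---so it is exactly the $N$-fold repetition, the snapping correction, and the limit $N\to\infty$ that make the reduction cost-preserving, and one has to check carefully that the accumulated corrections and the additive constants are absorbed in the favorable direction.
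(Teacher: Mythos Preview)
Your proposal is correct and follows essentially the same approach as the paper: the hole correspondence for $(n-1)$-server, the $\MTS$ cost vector with $\infty$ at the requested point for the first inequality, and repeated $k$-server requests to $M\setminus S$ for the second. The paper gives only a two-sentence ``proof idea'' and leaves the meaning of ``repeated'' informal; your use of a finite repetition count $N$, the snapping correction bounded by an $O(1/N)$ fraction of the $k$-server cost (which indeed works once one restricts w.l.o.g.\ to lazy algorithms so that a hole that ever enters $S$ stays there for the remainder of the block), and passage through Yao's principle is a clean way to make that step rigorous and avoids the termination issue that arises if one tries to simulate an unbounded number of rounds directly.
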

\begin{proof}[Proof idea]
	In $(n-1)$-server there is exactly one point not covered by a server, and this point corresponds to the server location in MSS and MTS. Then a request to a set $S\subset M$ in MSS corresponds to repeated $k$-server requests to all the points in $M\setminus S$. A $k$-server request at a point $p\in M$ corresponds to the MTS cost vector that assigns cost $\infty$ to $p$ and cost $0$ to other points.
\end{proof}

\begin{proposition}[Fiat, Foster, Karloff, Rabani, Ravid, Vishwanathan~\cite{FFKRRV91}]\label{prop:MSSLGT}
	Let ${\cal M}$ be any metric space, and let ${\cal U}$ be the Urysohn universal
	metric space. Then, for every $w\in\N$,
	$C_{\rand}^{\MSS}({\cal U},w)\ge C_{\rand}^{\LGT}(w)\ge C_{\rand}^{\MSS}({\cal M},w)$.
\end{proposition}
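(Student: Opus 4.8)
The plan is to establish the two inequalities independently, each by exhibiting a cost-preserving online reduction and then invoking Yao's minimax principle (Theorem~\ref{thm: Yao minimax}) to pass between distributional lower bounds and randomized competitive ratios.

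For the right inequality, $C_{\rand}^{\LGT}(w)\ge C_{\rand}^{\MSS}({\cal M},w)$, I would turn an $\MSS$ adversary distribution on ${\cal M}$ into an $\LGT$ adversary distribution. Given a request sequence $S_1,\dots,S_m$ with each $|S_i|\le w$ and start point $x_0$, build the width-$w$ layered graph whose layer $1$ is $\{x_0\}$, whose layer $i+1$ is (a copy of) $S_i$, and whose last layer is a single vertex $t$; place a complete bipartite graph between consecutive layers, weighting the edge from $u$ to $v$ by $d_{\cal M}(u,v)$ (and by $0$ on the edges into $t$). Since the triangle inequality in ${\cal M}$ makes each direct edge a shortest path, the offline optimum of the graph equals the offline $\MSS$ optimum, and any deterministic $\LGT$ algorithm $A$ induces a causal deterministic $\MSS$ algorithm of no larger cost: read off, for each $i$, the vertex $v_i\in$ layer $i+1$ at which $A$'s walk first arrives in layer $i+1$; the portion of $A$'s walk between consecutive such vertices has length at least their graph distance, which is exactly $d_{\cal M}(v_{i-1},v_i)$, and causality holds because that vertex is decided before layer $i+2$ is revealed. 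Yao then gives the inequality.

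For the left inequality, $C_{\rand}^{\MSS}({\cal U},w)\ge C_{\rand}^{\LGT}(w)$, I would reduce in the other direction. Given an $\LGT$ adversary distribution over width-$w$ layered graphs $G$, take the shortest-path metric of $G$, embed it isometrically into the Urysohn space ${\cal U}$ by a map $\phi_G$ carrying the source of $G$ to a fixed basepoint $x_0$ (possible since ${\cal U}$ contains every finite metric space and is point-homogeneous), and let the $\MSS$ request sequence be $\phi_G(\text{layer }2),\phi_G(\text{layer }3),\dots,\phi_G(\{t\})$. A deterministic $\MSS$ algorithm's position after request $j$ is $\phi_G(v_j)$ for some vertex $v_j$ in layer $j+1$, so by isometry its total cost equals $\sum_j \dist_G(v_{j-1},v_j)$, which is precisely the cost of the induced $\LGT$ algorithm that walks $v_0\to v_1\to\cdots$ along shortest paths of $G$; moreover the offline optima agree, since a shortest $s$--$t$ path of $G$ uses exactly one vertex per layer. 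Applying Yao finishes this case, and the proposition follows.

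The one subtlety, and where I expect a short lemma is needed, is that the reduction behind the left inequality is genuinely online: when the $\MSS$ instance reveals $\phi_G(\text{layer }j+1)$, the algorithm instantly learns all ${\cal U}$-distances among $\phi_G(\text{layers }1,\dots,j+1)$, so one must verify these equal the shortest-path distances computed in the \emph{full} graph $G$ --- i.e.\ that appending later layers never shortens a path between already-revealed vertices. This is true because any path that leaves and re-enters the union of the first $j+1$ layers incurs a detour of nonnegative length, so the revealed metric is exactly what an $\LGT$ algorithm sees after layer $j+1$ and nothing about future layers leaks. The remaining work --- checking that the induced algorithms are causal, that the triangle inequality prevents "shortcuts" in either metric, and that the additive slack $\kappa$ passes unchanged through both reductions --- is routine.
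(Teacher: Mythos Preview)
The paper does not supply its own proof of this proposition; it is merely cited from~\cite{FFKRRV91}. I will therefore assess your argument on its own.

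Your reduction for the right inequality $C_{\rand}^{\LGT}(w)\ge C_{\rand}^{\MSS}({\cal M},w)$ is correct and essentially the standard one.

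The left inequality has a genuine gap. You embed the \emph{full} shortest-path metric of $G$ into $\cal U$ and then claim the induced $\LGT$ algorithm is causal because ``appending later layers never shortens a path between already-revealed vertices.'' This is false for general layered graphs. Take layers $\{s\},\{a,b\},\{c\}$ with $w(s,a)=w(s,b)=10$ and $w(a,c)=w(b,c)=1$: in the full graph $d_G(a,b)=2$ via $c$, but in the subgraph on the first two layers the distance is $20$ via $s$. Your justification --- that an excursion beyond layer $j{+}1$ has nonnegative length --- is true but does not help, because deleting the excursion leaves you at a \emph{different} vertex of layer $j{+}1$ than the one you left from, and there is no edge between them. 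Consequently an $\LGT$ algorithm that has seen only layers $1,\dots,j{+}1$ cannot compute your $\phi_G$ (which encodes full-graph distances), and your simulated $\MSS$ algorithm is not causally realisable on the $\LGT$ side. Passing through Yao's principle does not rescue this: the same non-causal simulation is needed to turn an arbitrary deterministic $\MSS$ algorithm into the deterministic $\LGT$ algorithm against which the distributional lower bound would be applied.

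The standard repair is to first reduce width-$w$ $\LGT$ to width-$w$ $\LGT$ on \emph{trees}, and only then carry out the Urysohn embedding. In a tree, the unique path between two already-revealed vertices lies entirely in the revealed layers, so adding later layers genuinely never changes existing distances; the embedding can then be built online one layer at a time, and the rest of your argument goes through verbatim.
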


\paragraph{Evolving tree game.}
The adversary maintains a rooted tree $T$ of maximum depth $w$ and
non-negative edge lengths. The root $r$ is fixed throughout the game,
and always has one child. Initially, $T$ has a single edge of length $0$.
At each round of the game, the adversary can choose one of three types
of moves: (a) Pick a non-root leaf and increase the length of the edge
incident on it. (b) Pick a leaf other than the root and its child and delete it, 
and if the parent's degree drops to $2$ merge the two edges. (c) Create two
or more new nodes and attach them with edges of length $0$ to an existing
leaf. The algorithm must occupy a leaf at all times. Moving between nodes
adds to the algorithm's cost the length of the path connecting them. Staying
at a leaf while the adversary increases its incident edge length adds to the cost of the algorithm
the increase in edge length. We denote
this problem by $\ETG$.
\begin{proposition}[Bubeck, Coester, Rabani~\cite{BCR22}]
	$C_{\rand}^{\ETG}(w)\ge C_{\rand}^{\LGT}(w)$.
\end{proposition}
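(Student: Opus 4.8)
I would prove this by a direct online simulation showing that any $C$-competitive randomized algorithm for the depth-$w$ evolving tree game induces a $C$-competitive randomized algorithm for width-$w$ layered graph traversal; since $C_{\rand}^{\ETG}(w)$ and $C_{\rand}^{\LGT}(w)$ are the infima of the achievable ratios, this yields $C_{\rand}^{\ETG}(w)\ge C_{\rand}^{\LGT}(w)$ (no appeal to Yao's principle is needed here). Concretely, as the $\LGT$ adversary reveals a width-$w$ layered graph $G$ layer by layer, I maintain an evolving tree $T$ fed to the given $\ETG$ algorithm $A$, a correspondence between the leaves of $T$ and the vertices of the current layer $L_i$, and a translation of $A$'s moves into moves of an $\LGT$ algorithm on $G$; the goal is that the resulting $\LGT$ cost never exceeds $A$'s cost on $T$, and that $\mathrm{opt}(T)\le\mathrm{opt}(G)$, both up to additive constants.

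For the construction I would keep $T$ equal to a topologically contracted copy of the subgraph of $G$ on layers $1,\dots,i$ (all degree-two vertices smoothed out), so every internal non-root node has at least two children; since each layer of $G$ has at most $w$ vertices, $T$ then has at most $w$ leaves and, with careful bookkeeping around the transient placeholder leaves introduced below, depth at most $w$ -- this is exactly how ``width $w$'' turns into ``depth $w$''. When $L_{i+1}$ is revealed with its edges and lengths, I update $T$ by $\ETG$ moves of the three available types: a leaf whose $L_i$-vertex has no successor in $L_{i+1}$ is deleted (move (b), the induced smoothings being handled verbatim by the ``merge when the parent's degree drops to two'' rule); an $L_i$-vertex with a unique successor keeps its leaf, relabelled, while the edge above it grows by the corresponding $G$-edge length (move (a)); and an $L_i$-vertex $v$ with several successors becomes a branch node, its successors attached below it by zero-length edges -- together with one extra ``placeholder'' leaf also at distance zero -- via move (c), the successor edges then being grown to their true lengths via move (a). The ``at least two new nodes'' requirement of move (c) is thereby automatically met, and each placeholder is deleted again at the end of the round.

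Next I would check the cost correspondence. In the online direction, I keep $A$ on the leaf corresponding to the $\LGT$ algorithm's current vertex; when a new layer is revealed, $A$ first descends for free onto a placeholder leaf, observes all the new edge lengths as they are grown (paying nothing, since its own placeholder edge is never grown), and then moves to a destination leaf, which the $\LGT$ algorithm mirrors -- the cost it pays, a tree distance, equals the corresponding $G$-distance and in particular is at least the $\LGT$ cost $d_G$ of that move. In the offline direction, an optimal $\LGT$ path $s=p_1,\dots,p_\ell=t$ with $p_j\in L_j$ projects to the leaves $x_{p_1},x_{p_2},\dots$, a legal $\ETG$ trajectory whose cost at each step is exactly the $G$-edge length $L(p_j,p_{j+1})$ (either as ``stay while the edge above grows'' or ``descend one edge''), so $\mathrm{opt}(T)\le\mathrm{opt}(G)$.

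The step I expect to be the main obstacle is making this faithful for a \emph{general} layered graph rather than just a layered forest. The correspondence ``leaves of $T\leftrightarrow$ vertices of $L_i$'' with $d_T$ equal to the $G$-metric on $L_i$ can be maintained verbatim only when each vertex of $L_{i+1}$ has a unique predecessor in $L_i$; when it has several, a tree cannot encode the metric without distortion (already for $w\ge 4$ the induced layer metric need not be a tree metric at all), and attaching such a vertex below a single chosen predecessor can inflate $\mathrm{opt}(T)$ relative to $\mathrm{opt}(G)$. Resolving this -- either by a preliminary normalization reducing general layered graphs to forest-like ones at only constant-factor cost, or by a more careful attachment rule combined with an amortized charging scheme that absorbs the distortion into the competitive constant -- is where most of the work lies and is the part I would develop most carefully. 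The subsidiary timing point (that $\ETG$ forces the algorithm onto a leaf after \emph{every} adversary move, whereas in $\LGT$ the algorithm sees a whole layer before responding) is, I expect, dispatched by exactly the placeholder-leaf device above.
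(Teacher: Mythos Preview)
The paper does not contain a proof of this proposition; it is merely quoted as a result from \cite{BCR22}, so there is no ``paper's own proof'' to compare against.

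On the substance of your plan: the reduction direction and the bookkeeping (contracting degree-$2$ nodes to keep depth $\le w$, placeholder leaves to satisfy the ``at least two new children'' rule and to let the $\ETG$ player wait until a full layer's lengths are grown) are all the right ideas, and your cost-correspondence argument is fine once the underlying layered graph is a tree. The obstacle you flag is genuine, but it has a clean resolution that you should state rather than leave open: when layer $L_{i+1}$ is revealed, for each $v\in L_{i+1}$ keep only the single incoming edge that minimises $d_G(s,v)$ and discard the others. This is an online operation, it turns $G$ into a layered tree $G'$ of the same width, and since it preserves all distances from $s$ we have $c_{\opt}(G')=d_{G'}(s,t)=d_G(s,t)=c_{\opt}(G)$; meanwhile any $\LGT$ algorithm on $G'$ can be run verbatim on $G$ at no greater cost because $G'\subseteq G$. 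With this preprocessing your simulation goes through as written and yields $C_{\rand}^{\LGT}(w)\le C_{\rand}^{\ETG}(w)$. So the ``most of the work'' you anticipate is actually a one-line normalisation, not an amortised charging scheme.
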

Our lower bound for LGT in this paper, along with the upper bounds in~\cite{BCR22},
establish that the competitive ratio of LGT is actually the same (up to constant factors) as that of the version of ETG where the tree is binary (or of constantly bounded degree) at all times.

\section{Basic Construction and Analysis}\label{sec: basic}

The main goal of this section is to describe, in a somewhat informal style, a simple construction 
and analysis which yields a lower bound of $\Omega\left(\left( \frac{\log n}{\log\log n } \right)^2\right)$ 
for $\MSS$ in $n$-point metric spaces with request sets of arbitrary size. Notice that this lower bound implies
immediately the same lower bound for $\MTS$ and for $\kSRV$, putting $k=n-1$,
on account of Proposition~\ref{pr: MTS-kSRV-MSS}. In Section~\ref{sec:LGT} we show that this lower bound can also be achieved with request sets of size at most $w=O\left(\frac{\log n}{\log \log n}\right)$, which implies the lower bound of $\Omega(w^2)$ for LGT on account of Proposition~\ref{prop:MSSLGT}.

\subsection{The family of metric spaces}

First we describe our family of metric spaces by induction. Let $m_0 \le m_1 \le m_2 \le \cdots$ 
be a sequence of natural numbers, to be determined later. We construct a sequence of finite
metric spaces ${\cal M}_0 = (M_0,d_0)$, ${\cal M}_1 = (M_1,d_1)$, ${\cal M}_2 = (M_2,d_2)$, 
$\dots$ of growing size as follows. Each metric in this sequence is the shortest paths metric of 
an underlying graph. The base case ${\cal M}_0$ is a single edge of weight $1$. Next, 
${\cal M}_1$ is a cycle with $6 m_0$ edges. In other words, we form a cycle of $6 m_0$ copies 
of ${\cal M}_0$. We also choose two special antipodal vertices/points $s$ and $t$, so 
$\diam({\cal M}_1) = d_1(s,t)$.

More generally, ${\cal M}_{w+1}$ is built from ${\cal M}_w$ in the same way as ${\cal M}_1$ was 
built from ${\cal M}_0$ (see Figure~\ref{fig:basic}). Namely, $M_{w+1}$ is a ``cycle" made of $6 m_w$ copies of ${\cal M}_w$. 
Slightly more precisely, consider a cycle with $6 m_w$ edges, where each edge $\{u,v\}$ is replaced 
by putting a copy of ${\cal M}_w$ with the two special vertices at $u$ and $v$. In particular, in 
${\cal M}_{w+1}$, each special vertex of one copy of ${\cal M}_w$ is identified with another special 
vertex in another copy of ${\cal M}_w$. In ${\cal M}_{w+1}$, the special vertices $s$ and $t$ are special 
vertices in two distinct copies of ${\cal M}_w$ such that $\diam({\cal M}_{w+1}) = d_{w+1}(s,t)$. Notice 
that ${\cal M}_{w+1}$ can be viewed as consisting of a {\em left path} of $3 m_w$ copies of ${\cal M}_w$ 
and a {\em right path} of another $3m_w$ copies of ${\cal M}_w$. For two subsets of points 
$S, S'\subset M_w$, we denote by $(S,S',i,j)$ the subset of $M_{w+1}$ that is made of the union of the 
set $S$ in the $i^{th}$ copy of ${\cal M}_w$ on the left path, and the set $S'$ in the $j^{th}$ copy of 
${\cal M}_w$ on the right path.

\begin{figure}
	\begin{center}
		\includegraphics[width=0.35\textwidth]{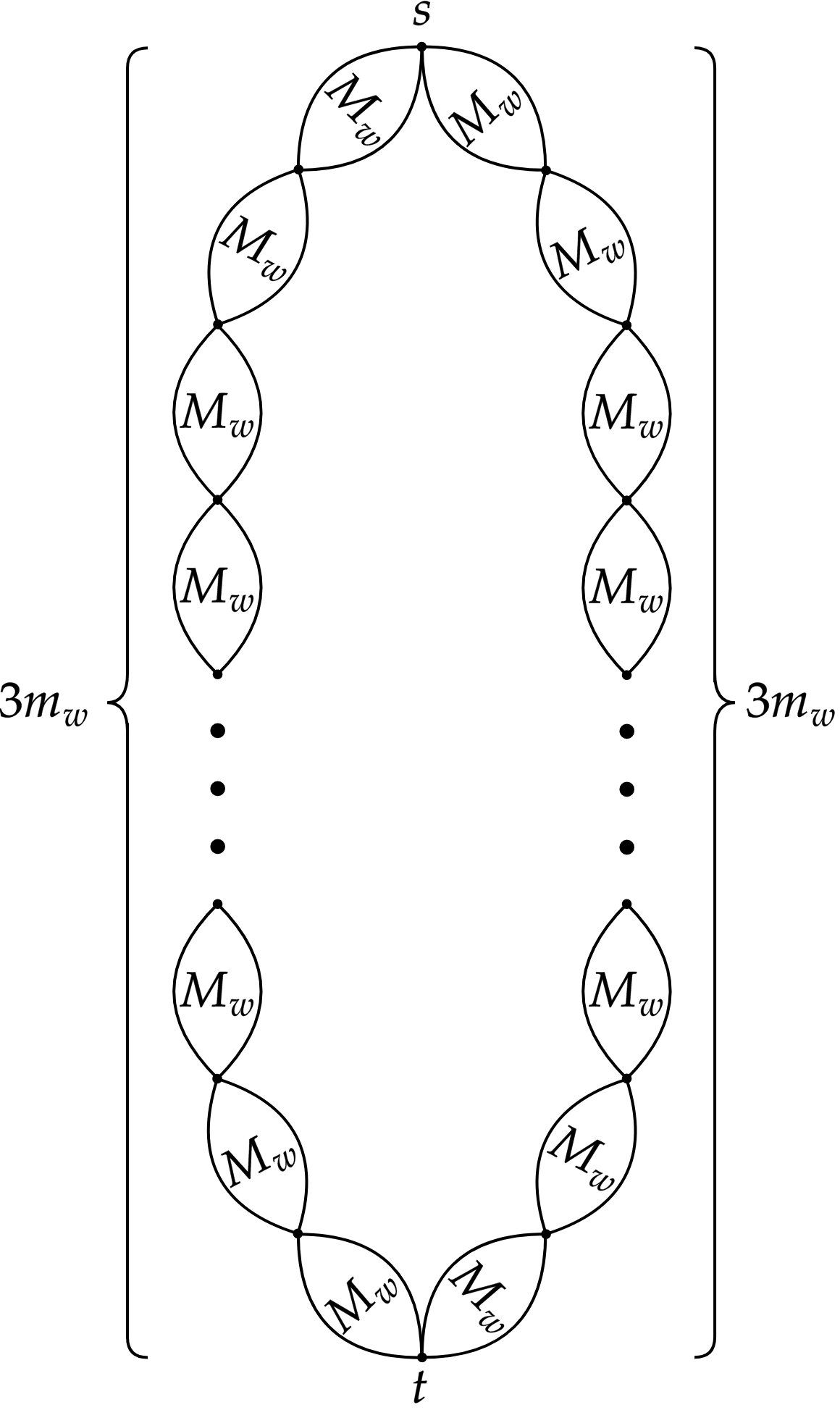}
	\end{center}
	\caption{Construction of $\cM_{w+1}$ in the $\Omega\left(\left(\frac{\log n}{\log \log n}\right)^2\right)$ lower bound.}\label{fig:basic}
\end{figure}

\subsection{The hard sequence}

Now let us describe by induction a hard random sequence $\rho^w$ of subsets of $M_w$. 
Recall that $\rho^w[j]$ denotes the $j^{th}$ request in the sequence $\rho^w$. This is a random 
subset of $M_w$. Also let $T_w$ denote the length of $\rho^w$. The construction gives $T_w$ 
as a random variable, but for simplicity we ignore this aspect as we can always pad a 
sequence with some dummy sets to attain a fixed length. We will always have $\rho^w[1]$ be 
the special vertex $s$ (the source) and $\rho^w[T_w]$ be the special vertex 
$t$ (the target). Moreover, our construction will guarantee that the request sequence can be 
satisfied by following a shortest path connecting $s$ and $t$ (so $c_{\opt}(\rho^w)$ is simply 
the length of this path). We denote by $\rho^{w,1}, \rho^{w,2}, \hdots$ a sequence of i.i.d. copies 
of $\rho^w$. The inductive construction of $\rho^{w+1}$ uses $\rho^w$ and proceeds in three stages.

\paragraph{Stage 1:} 
For every $i = 1, \hdots, m_{w}$ and $j \in \{1,2,\dots,T_w\}$, let 
\[
\rho^{w+1}[(i-1) T_w + j] = (\rho^{w,i}[j], \rho^{w,i}[j], i, i) \,. 
\]
In words, we traverse simultaneously the first third of the left and right path of ${\cal M}_{w+1}$, one copy of
${\cal M}_w$ at a time, using the hard sequence for ${\cal M}_w$.

\paragraph{Stage 2:} 
For every $i= 1, \hdots, m_{w}$, let $\epsilon_i \in \{0,1\}$ be a Bernoulli random variable independent of 
everything else. Let $\lleft(i) = m_w + \epsilon_1 + \hdots + \epsilon_i$, and let 
$\rright(i) = m_w + (1-\epsilon_1) + \hdots + (1-\epsilon_i)$. Now, for every $j \in \{1,2,\dots,T_w\}$, let
\[
\rho^{w+1}[(m_w + i-1) T_w + j] = 
(\rho^{w,m_w+i}[\epsilon_i j + (1-\epsilon_i) T_w], \rho^{w,m_w+i}[(1-\epsilon_i) j + \epsilon_i T_w], \lleft(i), \rright(i)) \,.
\]
In words, if $\epsilon_i=1$, we present the hard sequence on the next copy of ${\cal M}_w$ on the left path, 
and on the right path we ``stay put" (i.e., we keep requesting the target of the last copy of ${\cal M}_w$ that 
was previously traversed). Conversely, if $\epsilon_i=0$ we ``stay put" on the left path, and on the right path 
we traverse the next copy of ${\cal M}_w$ using the hard sequence. (Recall that always $\rho^w[T_w] = \{t\}$.)

\paragraph{Stage 3:} 
Assume that $\lleft(m_{w}) \geq \rright(m_{w})$ (the other case is dealt with by symmetry). Then we set for every 
$i = 1, \hdots, 3 m_{w} - \rright(m_{w})$, and for every $j \in \{1,2,\dots,T_w\}$,
\[
\rho^{w+1}[(2 m_{w} + i - 1) T_w + j] = (\emptyset , \rho^{w,2m_w + i}[j], 1, \rright(m_w) + i) \,.
\]
In words, the path along which we advanced more at the end of stage 2 is ``killed." We continue advancing along
the other path, until we reach the final target (which is the target special vertex in ${\cal M}_{w+1}$).

\subsection{The cost analysis}

Let's scale uniformly the edge weights in ${\cal M}_{w+1}$, so that the diameter of any copy of ${\cal M}_w$
becomes equal to $1$ (this is just for notational simplicity). In particular the optimal cost is $3 m_w$. Let us
assume by induction that we have proved for ${\cal M}_w$ a lower bound of $C_w$ on the competitive ratio
of any deterministic algorithm against the random sequence $\rho^w$. We now analyze the cost of any
deterministic algorithm in the three stages of the random sequence $\rho^{w+1}$.

\paragraph{Stage 1:} 
Here, by induction, the expected cost is simply lower bounded by $m_w C_w$.

\paragraph{Stage 2:} 
Assume that $m_w \geq C_w$. (Note that this is a somewhat stricter assumption than seems to be needed here, but we
will need this stricter condition later.)
Then we claim that the expected cost in this stage is lower bounded by $\frac{m_w}{2} C_w$. Indeed, in each new ``phase" (where the hard sequence $\rho^w$ is presented either on the left path or on the 
right path), the algorithm, with probability $1/2$, has to choose between traversing a copy of ${\cal M}_w$ 
against $\rho^w$, at expected cost $C_w$, or alternatively backtracking (i.e., switching) to the other path (either right away, 
or after a while) which costs at least $2m_w \geq C_w$. This tentatively concludes the proof of the claim, up to the minor issue that
a priori we do not control the expected cost of traversal of ${\cal M}_w$ {\em conditioned on the fact that the algorithm does not switch}. To see why this might be a problem, consider the fictitious situation (indeed, our request sequence will {\em not} be like this) where with probability $1-\epsilon$ the traversal is easy (say cost $0$) and with probability $\epsilon$ the traversal is hard (say cost $1/\epsilon$ times the expectation). In that case switching when the hardness of the traversal is revealed might lower the expected cost by a multiplicative factor $\epsilon$. We explain below in Section~\ref{sec:escape} how to deal with this issue and complete the proof of this paragraph's claim.

\paragraph{Stage 3:} 
This is the key part of the argument. We have $\min\{\lleft(m_w),\rright(m_w)\} \leq \frac{3}{2} m_w - \Omega(\sqrt{m_w})$ with high probability (e.g., on account of the Berry-Esseen Inequality). Therefore, with high probability we face in this stage at least 
$\frac{3 m_w}{2} + \Omega(\sqrt{m_w})$ copies of ${\cal M}_w$ to traverse. Thus, the expected cost of this stage is 
$\left(\frac{3 m_w}{2} + \Omega(\sqrt{m_w}) \right) C_w$.

\subsection{Selecting the parameters and conclusion}

Overall the analysis shows that the algorithm pays in expectation $(3 m_w + \Omega(\sqrt{m_w}) ) C_w$, whereas OPT pays
just $3 m_w$. Thus, we get the recurrence relation $C_{w+1} \geq \left(1 + \Omega\left(\frac{1}{\sqrt{m_w}}\right) \right) C_w$. Clearly,
we want to choose $m_w$ to be as small as possible. With the constraint $m_w \geq C_w$ from the stage 2 analysis we obtain
\[
C_{w+1} \geq C_w + \Omega(\sqrt{C_w}) \,.
\]
In particular we easily get by induction $C_w \geq \Omega(w^2)$. Finally, let us calculate $|M_w|$. We have 
$|M_{w+1}| \le 6 m_w |M_w| \le 6^w \prod_{w'=1}^w m_{w'}$. As $C_w$ is of order $w^2$, and $m_w$ is of order of $C_w$, we get that 
$|M_w|$ is of order $\exp( C w \log w)$ for some constant $C >0$. In particular, if we denote that number 
by $n$, we have $\log n = C w \log w$, so that $w=\Theta\left(\frac{\log n}{\log \log n}\right)$. To extend the
bound to any number of points $n'$, simply choose the largest $w$ for which $|M_w|\le n'$, and then 
extend ${\cal M}_w$ arbitrarily to contain exactly $n'$ points (the extra points will be ignored in the request 
sequence). This concludes the proof of the lower bound up to the conditioning issue in the stage 2 analysis, which we deal with next.

\subsection{Escape price} \label{sec:escape}

Here we resolve the issue of controlling the expected cost obtained by induction for a traversal of ${\cal M}_w$, 
when the algorithm is allowed to abort this traversal and switch to the other branch in ${\cal M}_{w+1}$.
Switching to the other branch in stage $2$, which is the only stage where it is not obvious that aborting 
cannot help, costs at least $2 m_w \geq 2 C_w = 2\diam({\cal M}_w) C_w$. (Recall that when considering
${\cal M}_{w+1}$, we scale the distances so that the diameter of each copy of ${\cal M}_w$ is $1$.)

To control this cost, we actually show inductively a lower bound on the competitive ratio of the escape price 
relaxation of $\MSS$ in ${\cal M}_w$, rather than on $C_{\rand}^{\MSS}({\cal M}_w,|M_w|)$. Let $p_w$ denote the 
escape price for the game in ${\cal M}_w$. We now show by induction that, for $p_w = 2\diam({\cal M}_w) C_w$, 
the escape price option does not enable an online algorithm facing the hard random sequence $\rho^w$ to
overcome the stated lower bound. This in turn will conclude the proof.

The base case is trivial, so let us assume it is true for some $w$. In ${\cal M}_{w+1}$ the escape price is $2\diam({\cal M}_{w+1}) C_{w+1} = 6 m_w C_{w+1}$. (Recall again the uniform scaling of distances when 
considering ${\cal M}_{w+1}$.) Now, assume that the algorithm decides to escape (at the induction level $w+1$), 
and that this escape happens in some copy of ${\cal M}_w$. Using the induction hypothesis, the cost of instead escaping only
at the lower induction level $w$ in the aborted copy, but then proceeding to serve the rest of the hard request sequence 
by going through all the remaining copies of ${\cal M}_w$ and paying the expected cost there, is at most the following:
$2\diam({\cal M}_w) C_w = 2 C_w$ for the escape price, plus $m_w C_w$ for stage $1$, plus $m_w C_w / 2$ for 
stage $2$, plus $2 m_w C_w$ for stage $3$. 
Overall, this cost is at most $(2 + 3.5 m_w) C_w\le 6 m_w C_{w+1}$, as $m_w\ge 1$. In other words, the escape price 
option on ${\cal M}_{w+1}$ does not add any benefit to the algorithm compared to the escape price option on ${\cal M}_w$, 
and thus by induction it does not enable the algorithm to overcome the lower bound.

\subsection{\texorpdfstring{\boldmath Lower bound for $\LGT$}{Lower bound for LGT}}\label{sec:LGT}

We briefly describe how this construction (slightly modified) also gives an $\Omega(w^2)$ lower bound for MSS with sets of size at most $w$ (and hence for layered graph traversal). Notice that in stages 2 and 3 the sets being requested in $\rho^{w+1}$ have the size of the sets in $\rho^w$ plus at most $1$. If this property was also true for stage 1 we would be done. To achieve this, we modify stage 1 as follows: redefine ${\cal M}_{w+1}$ by replacing the first $m_w$ copies of $\cM_w$ on the left path (and similarly on the right path) by a single edge of length $\frac{C_w}{m_w}$ followed by $m_w^2$ copies of $\frac{1}{m_w}\cM_w$, where $\frac{1}{m_w}\cM_w$ denotes the metric space $\cM_w$ with distances scaled by $\frac{1}{m_w}$. The request sequence of stage 1 is similar to that of stage 2, in each step presenting the hard sequence on the next copy of $\frac{1}{m_w}\cM_w$ on one path while staying put on the other path, but rather than choosing the advancing path at random, we alternate between the two sides. In each pair of steps, the expected online cost is at least $\frac{C_w}{m_w}$ (either due to movement through the advancing copy on the path where the online algorithm is located, or to switch to the other path). Since stage 1 consists of $m_w^2$ such pairs of steps, the expected cost of stage 1 is still at least $m_wC_w$, so we get the same lower bound on the online cost as before. The cost of OPT is slightly higher by an additive $\frac{C_w}{m_w}\le 1$ due to the length of the initial extra edge, but since this is much smaller than $m_w$ we still get a recurrence of the same form $C_{w+1} \geq C_w + \Omega(\sqrt{C_w})$, yielding $C_w=\Omega(w^2)$.


\section{\texorpdfstring{\boldmath An Existential $\Omega(\log^2 n)$ Lower Bound for $\MSS$}{An Existential Omega((log n)\textasciicircum 2) Lower Bound for MSS}}\label{sec: main}

In this section, we refine the bound from the previous section and show the following theorem.

\begin{theorem}\label{thm:mainMSS}
	For each $n\in\mathbb N$, there exists an $n$-point metric space $\cM$ such that 
	$C_{\rand}^{\MSS}({\cal M},n-1)=\Omega(\log^2 n)$.\footnote{In fact, our construction of the bad sequence uses only sets of size $O(n^{\log_6 2})$ instead of $n-1$.}
\end{theorem}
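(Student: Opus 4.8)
The plan is to keep the three-stage recursive framework of Section~\ref{sec: basic}, but to fix the efficiency loss that capped the previous bound at $\Omega((\log n/\log\log n)^2)$. In the basic construction the recursion used cycles of length $6m_w$ with $m_w = \Theta(C_w) = \Theta(w^2)$, so $|M_w| = \exp(\Theta(w\log w))$ and we only get $w = \Theta(\log n/\log\log n)$. To get the full $\Omega(\log^2 n)$ we must build $\cM_{w+1}$ from $\cM_w$ using cycles of \emph{constant} length (length $6$), so that $|M_w| = \exp(\Theta(w))$ and hence $w = \Theta(\log n)$. With cycles of length $6$, the left and right ``paths'' each consist of only $3$ copies of $\cM_w$, so the ``stage~1 / stage~2 / stage~3'' split must be applied \emph{within} a single level rather than across many copies. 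The key new idea is that because the construction is symmetric in $s$ and $t$, once we have traversed $\cM_{w+1}$ from $s$ to $t$ we can turn around and traverse it from $t$ to $s$, and repeat this ``back-and-forth'' pingponging $m_w = \Theta(w^2)$ times. Each such round is a miniature version of the old three-stage argument and forces an additive $\Omega(\sqrt{C_w})$ excess on top of the $C_w$ already charged per traversal, so summing over the $\Theta(w^2)$ rounds we recover $C_{w+1} \ge C_w + \Omega(\sqrt{C_w})$, hence $C_w = \Omega(w^2) = \Omega(\log^2 n)$.

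Concretely, I would define $\cM_{w+1}$ as a $6$-cycle of copies of $\cM_w$ (suitably scaled), with $s$ and $t$ antipodal, giving a left path $L_1L_2L_3$ and a right path $R_1R_2R_3$ of three copies each. The hard random sequence $\rho^{w+1}$ would consist of $m_w$ phases. In each phase: first drive the anti-server through $L_1$ (or the analogous first copy, depending on current orientation) using a fresh i.i.d.\ copy of $\rho^w$ to create separation from the current source; then play a ``middle'' sub-stage in which we flip a fair coin $\Theta(1)$ times — actually the coin-flipping needs to be repeated enough to generate a $\sqrt{\cdot}$ gap, so the middle sub-stage itself presents $\Theta(m_w)$ hard sequences on randomly-chosen sides while staying put on the other, exactly as in old Stage~2; and finally, in the ``kill'' sub-stage, shut down the path that advanced further and drive the anti-server along the road less traveled to the current target. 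Then swap the roles of $s$ and $t$ and repeat. The initial separation guarantees (as before) that switching paths cannot save cost, so each phase costs $\ge (\text{something}) \cdot C_w + \Omega(\sqrt{m_w})\cdot C_w$ in expectation; the Berry–Esseen estimate on $\min\{\lleft,\rright\}$ supplies the $\sqrt{m_w}$ gap with high probability. One must also propagate the escape-price relaxation inductively, as in Section~\ref{sec:escape}, with $p_w = 2\diam(\cM_w)C_w$, to handle the conditioning issue (the expected cost of a traversal conditioned on not aborting) — the bookkeeping there is slightly more delicate because of the repeated back-and-forth, but the same ``escaping one level down is never worse'' comparison goes through since the diameter of $\cM_w$ inside $\cM_{w+1}$ is a constant fraction of $\diam(\cM_{w+1})$.

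For the size bound: $|M_{w+1}| \le 6\,|M_w|$, so $|M_w| = O(6^w)$ and $w = \Omega(\log n)$, giving $C_w = \Omega(w^2) = \Omega(\log^2 n)$; the set sizes grow by at most a constant per level (indeed the footnote's $O(n^{\log_6 2})$ bound comes from the sets possibly doubling at each of the $\Theta(\log n)$ levels). To pass to an arbitrary $n$, pick the largest $w$ with $|M_w| \le n$ and pad with isolated far-away points. Finally invoke Yao's minimax (Theorem~\ref{thm: Yao minimax}) to convert the distributional lower bound against deterministic algorithms into a lower bound on $C_{\rand}^{\MSS}(\cM, n-1)$.

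The main obstacle I anticipate is making the ``repeat the three-stage argument $m_w$ times by pingponging'' rigorous: one has to argue that the cost savings an algorithm might accrue in one round (e.g.\ by being cleverly positioned at the end of the previous round) cannot be amortized against future rounds, and that the separation/no-switching invariant is re-established at the start of each round despite the constant-length cycle giving very little room. This is exactly the point where the paper warns that ``this complicates the argument considerably,'' and I expect the bulk of Section~\ref{sec: main} to be devoted to a careful potential/invariant argument tracking the anti-server's position relative to the current source and target across all $\Theta(w^2)$ rounds, together with the inductive escape-price accounting.
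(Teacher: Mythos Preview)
Your proposal has a genuine gap at the point you flag as the ``initial separation'' claim. With a length-$6$ cycle, the separation created by Stage~1 is only $\Theta(\diam(\cM_w))$: after traversing $L_1$ and $R_1$, the two branches are at distance $2\diam(\cM_w)$ apart. But a \emph{full} copy of the inductive hard sequence $\rho^w$ costs $C_w\cdot\diam(\cM_w)=\Theta(w^2)\diam(\cM_w)$ to serve. So when your ``middle sub-stage'' presents $\rho^w$ on the chosen side, the online algorithm simply switches to the other side for cost $2\diam(\cM_w)\ll C_w\diam(\cM_w)$ and pays nothing further for that round. The assertion ``switching paths cannot save cost'' is exactly what breaks when $m_w$ drops from $\Theta(C_w)$ to $\Theta(1)$; this is why the basic construction needed $m_w\ge C_w$ in the first place. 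Moreover, your attempt to recover $\Theta(m_w)$ coin flips inside a single middle copy cannot work ``exactly as in old Stage~2'': there is only one copy $L_2$ (resp.\ $R_2$), so you cannot advance monotonically through $m_w$ copies, and if you instead bounce back and forth inside $L_2$ then OPT must do the same, erasing any ratio gain. The pingponging across $s\leftrightarrow t$ is used in the paper only to handle the additive constant $\kappa$, not to generate the $\sqrt{\cdot}$ excess.

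The paper's actual mechanism is different and is precisely designed to fix this scale mismatch: the inductive hard sequence $\rho^w$ is itself \emph{decomposed into $\Theta(\alpha\beta w^2)$ chunks}, each forcing online cost only $\Theta(\diam(\cM_w)/\beta)$ (Lemma~\ref{lem:main}). In Stage~2 one issues a single \emph{chunk} at a time on a randomly chosen side, not a full $\rho^w$; now the switching cost $2\diam(\cM_w)$ is large relative to a chunk, so the escape-price argument goes through. This yields $\Theta(w^2)$ coin flips while OPT still only walks a single shortest $s$--$t$ path of length $3\diam(\cM_w)$. The random side is chosen with probabilities weighted by chunk sizes so that the resulting process is a martingale, and a martingale Berry--Esseen bound (Lemma~\ref{lem:martingaleBerryEsseen}) supplies the $\sqrt{\cdot}$ gap. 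A separate lemma (Lemma~\ref{lem:combiningSubchunks}) then reassembles the level-$(w{+}1)$ subchunks into chunks of the right size to close the induction. The chunk decomposition, not back-and-forth repetition, is the missing idea.
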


By Proposition~\ref{pr: MTS-kSRV-MSS}, this implies that there exists no $o(\log^2n)$-competitive algorithm for $\MTS$ on general $n$-point metrics, and no $o(\log^2k)$-competitive algorithm for $\kSRV$ on general $(k+1)$-point metrics. These lower bounds are tight for $\MTS$, and for $\kSRV$ at least in the case of $(k+1)$-point metrics, on account of the known $O(\log^2n)$ upper bound for $\MTS$ on arbitrary $n$-point metrics~\cite{BCLL19}.

To avoid the $\log^2 \log n$ divisor in the bound proved in Section~\ref{sec: basic}, we will use only $6$ instead of $6m_w$ copies of $\cM_w$ when constructing $\cM_{w+1}$, so that the metric space is of smaller size. Note that the reason why $m_w$ needed to be chosen large in Section~\ref{sec: basic} is to ensure that the cost of switching between the left and right paths is large. The key idea that will enable us to allow smaller switching cost here is that rather than issuing the inductive request sequence in $\cM_w$ in its entirety each time, we can break it into smaller subsequences (``chunks'') and only issue a single chunk at a time. Since the cost of a chunk is smaller than the cost of the entire inductive sequence, we do not need as large of a switching cost to discourage the algorithm from switching between the left and right paths. A similar idea of decomposing inductive request sequences into chunks is also the key idea that will allow us to tighten the universal lower bounds later in Section~\ref{sec: universal}.

Let $0<\alpha<1$ and $\beta\in\mathbb N$ be constants that we determine later. We will 
show that for every $w\in\mathbb N$ there exists a metric space ${\cal M}_w = (M_w,d_w)$ 
with $|M_w|\le \beta\cdot 6^w$ where any randomized algorithm for $\MSS$ (with request sets of size at most $2^w$) has competitive 
ratio at least $\alpha w^2$. The metric space ${\cal M}_w$ has two special points $s_w,t_w\in M_w$ 
that we use as the initial and final location of the lower bound instance (i.e., $s_w$ is the initial 
location of the server, and the request sequence will force the server to terminate at $t_w$). 
The request sequence will be such that an optimal offline algorithm can serve it for cost 
$d_w(s_w,t_w)$ by moving along a shortest path from $s_w$ to $t_w$.

If $\alpha w^2\le 1$, the lower bound is trivial and we choose ${\cal M}_w$ to be $\beta+1$ equally 
spaced points on a line, with $s_w$ and $t_w$ being the two outermost points.

For larger $w$, we proceed by induction. Specifically, for $w\in\mathbb N$ with $\alpha(w+1)^2>1$, 
we construct ${\cal M}_{w+1}$ by taking six copies of ${\cal M}_{w}$ and gluing them together in a
circle as shown in Figure~\ref{fig:constructionStep6}. We view $\cM_{w+1}$ as consisting of a left path and right path of $3$ copies of $\cM_w$ each. For $j\in\{1,2,3\}$, we denote by ${\cal M}_{w,L,j} = (M_{w,L,j},d_w)$ and ${\cal M}_{w,R,j} = (M_{w,R,j},d_w)$ 
the $j$th copy of ${\cal M}_{w}$ on the left and right, respectively. If $p$ is a (set of) point(s) in $M_{w}$, 
we denote by $p_{L,j}$ and $p_{R,j}$ the corresponding (sets of) points in the respective copy. Some members of the family of metric spaces constructed in this way are depicted in Figure~\ref{fig:badSpaceManyLevels}.

\begin{figure}
	\begin{center}
	\includegraphics[width=0.6\textwidth]{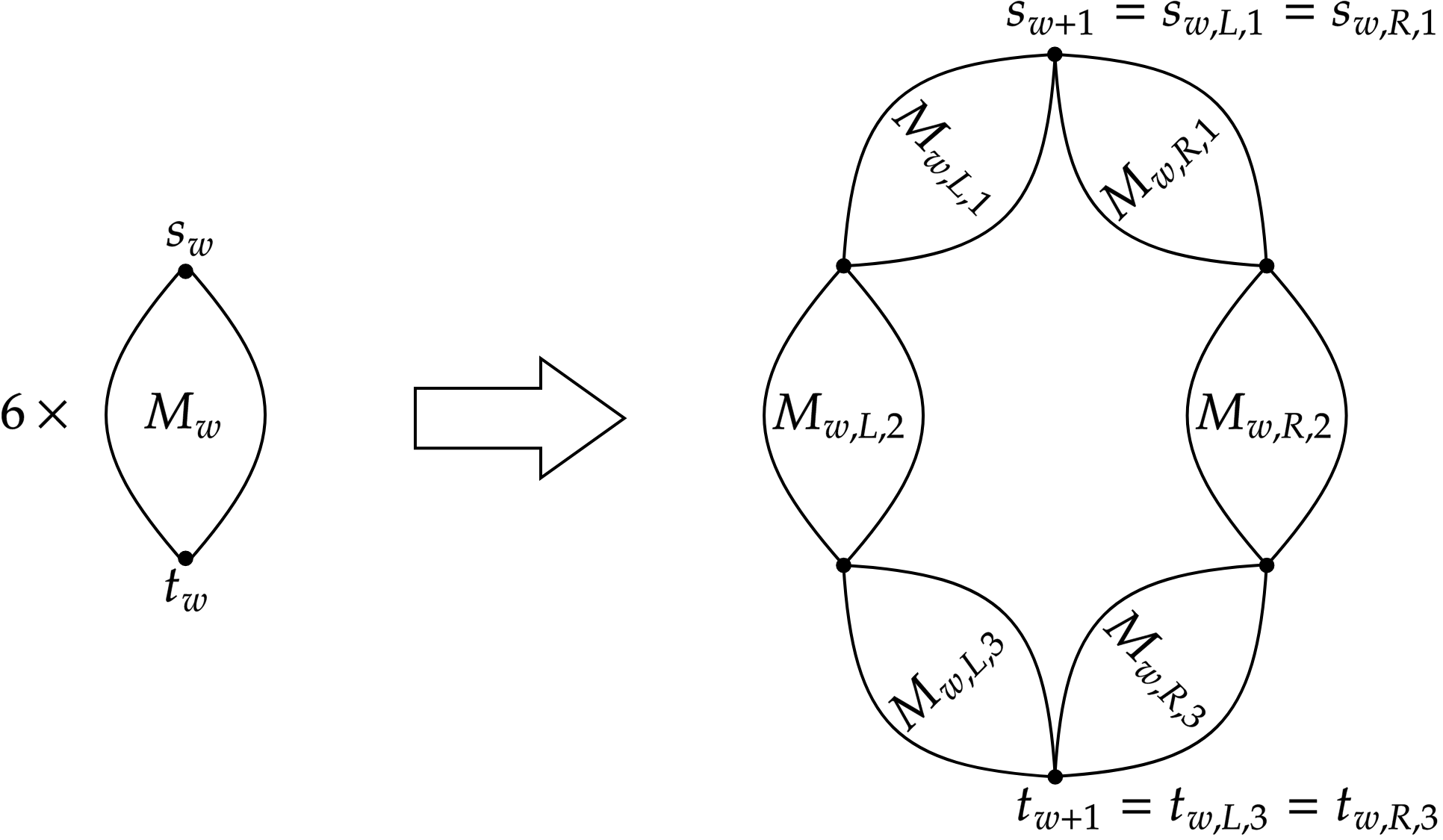}
	\end{center}
	\caption{Construction of $\cM_{w+1}$ and choice of $s_{w+1}$ and $t_{w+1}$ for the $\Omega(\log^2 n)$ lower bound.}\label{fig:constructionStep6}
\end{figure}

\begin{figure}
\begin{center}
	\includegraphics[width=0.99\textwidth]{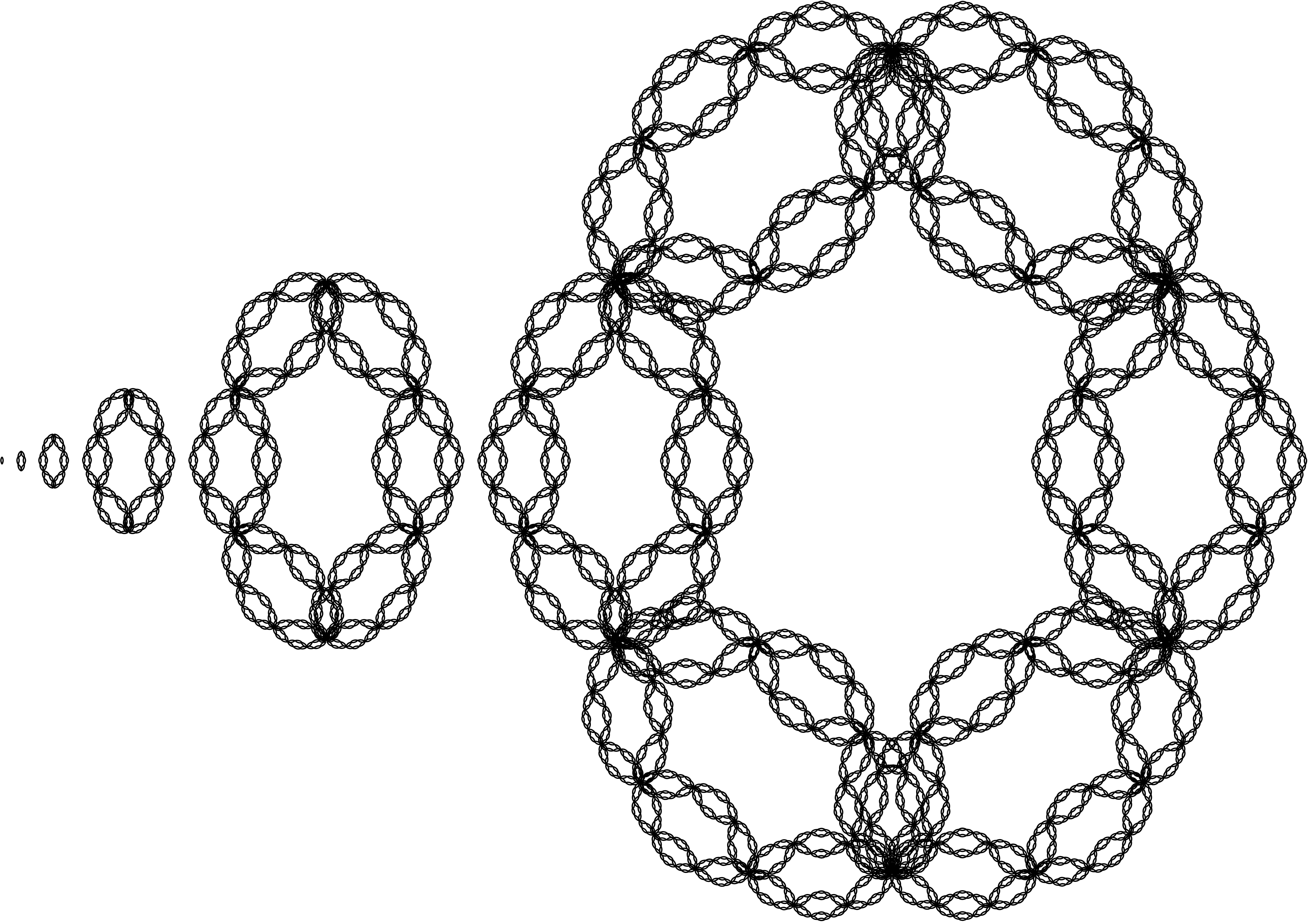}
\end{center}
\caption{Metric spaces where the competitive ratio of MTS and $(n-1)$-server is $\Theta(\log^2 n)$.}\label{fig:badSpaceManyLevels}
\end{figure}

To prove the lower bound, we construct by induction on $w$ a random request sequence $\rho$ for 
${\cal M}_w$ on which \emph{every} deterministic online algorithm has expected cost at least 
$\alpha w^2 d_w(s_w,t_w)$. However, to construct these sequences, we will require an induction 
hypothesis that yields stronger and more delicate properties. Roughly, we will require that the random 
request sequence $\rho$ can be decomposed into  $m\ge \alpha\beta w^2$ (random) 
subsequences $\rho_1,\dots,\rho_{m}$ such that any online algorithm's expected cost on each $\rho_i$ is at least 
some $c_i\approx d_w(s_w,t_w)/\beta$, and the expected sum of these $c_i$ is at least $\alpha w^2d_w(s_w,t_w)$. 
This will be captured more precisely by Lemma~\ref{lem:main} below.

\paragraph{Terminology and notation.} 
A \emph{chunk} is a sequence of requests. We use this term especially for cases where a sequence is 
meant to be used as a subsequence of a longer sequence. Sometimes we also use the term \emph{subchunk} 
instead of chunk, to emphasize that a sequence will be used as a subsequence of a chunk, which in turn 
will be a subsequence of a longer sequence. Recall that in the context of a sequence of chunks 
$\rho_1\rho_2\dots\rho_{m}$, we denote by $\rho_{\le i}=\rho_1\dots\rho_{i}$ the concatenation of the first $i$ 
chunks, and by $\rho_{\le 0}$ an empty sequence. If the $\rho_i$ are random, we denote by $\sigma(\rho_{\le i})$ 
the $\sigma$-algebra generated by $\rho_{\le i}$. The property that $c_i$ is $\sigma(\rho_{\le i-1})$-measurable 
in the following lemma simply means that $c_i$ is a function of $\rho_{\le i-1}$.

\begin{lemma}\label{lem:main}
	For every $w\in\mathbb N$,
	there exists a random sequence of chunks $\rho_1\rho_2\dots\rho_{m}$ for $\MSS$ in 
	${\cal M}_w$, and a choice of random variables $c_1,\dots,c_{m}$, such that for all $i=1,\dots,m$
	the following conditions hold.
	\begin{enumerate}
		\item The last request of $\rho_{m}$ is $\{t_w\}$.
		\item $c_{\opt}(\rho_{\le m})=d_w(s_w,t_w)$.
		\item $c_i$ is $\sigma(\rho_{\le i-1})$-measurable.
		\item For every choice of $\rho_{\le i-1}$ in the sample space,
		         $\E\left[c_{\alg}(\rho_{i}\mid \rho_{\le i-1})\bigm\vert \rho_{\le i-1}\right]\ge c_i$ for any 
		         deterministic online algorithm $\alg$, even if an escape price of $2d_w(s_w,t_w)$ is available 
		         on the suffix $\rho_i$.\label{it:costChunk}
		\item $c_i\in\left[\frac{d_w(s_w,t_w)}{2\beta} ,\frac{3d_w(s_w,t_w)}{2\beta}\right]$.\label{it:ciBound}
		\item $\E\left[\sum_{i=1}^{m} c_i\right] \ge \alpha w^2d_w(s_w,t_w)$.\label{it:sumSizes}
		\item $m\ge \alpha\beta w^2$.\label{it:mBound}
	\end{enumerate}
\end{lemma}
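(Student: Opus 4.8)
The plan is to prove Lemma~\ref{lem:main} by induction on $w$, following the three‑stage template from Section~\ref{sec: basic} but now issuing the inductive sequence of $\cM_w$ one chunk at a time rather than in its entirety. For the base case (small $w$, where $\alpha w^2\le 1$), the claim is trivial: take a single trivial chunk with $c_1=0$ and observe all conditions hold vacuously or by the line‑metric choice of $\cM_w$. For the inductive step, assume Lemma~\ref{lem:main} for $w$, giving a random chunk sequence $\rho_1\dots\rho_m$ for $\cM_w$ with associated costs $c_1,\dots,c_m$, and build $\cM_{w+1}$ from six copies of $\cM_w$ as in Figure~\ref{fig:constructionStep6}. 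I would then assemble the request sequence for $\cM_{w+1}$ in the three familiar stages, but where ``issuing the hard sequence on a copy of $\cM_w$'' is replaced by ``issuing the next chunk of a fresh i.i.d.\ copy of the $\cM_w$‑sequence on that copy'', interleaving across the left and right paths at the granularity of individual chunks.

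The three stages work as follows. \emph{Stage 1} traverses the first copy of $\cM_w$ on each of the left and right paths, simultaneously, chunk by chunk; this produces $2m$ chunks for $\cM_{w+1}$ (one per chunk per side), each inheriting a cost $c_i$ from the induction hypothesis, establishing a far‑apart pair of positions on the two paths. \emph{Stage 2} is where the gain happens: for the second copy of $\cM_w$ on each path, I would process it chunk by chunk, and for each chunk independently flip a fair coin $\epsilon$ to decide whether to issue that chunk on the left copy (requesting $\{t_w\}$ on the right, i.e.\ ``stay put'') or on the right copy (``stay put'' on the left). The escape‑price clause \eqref{it:costChunk} of the induction hypothesis with escape price $2d_w(s_w,t_w)$ is exactly what guarantees that within a single chunk the algorithm cannot cheaply ``bail out and switch'': switching paths costs at least $2d_w(s_w,t_w)$, which dominates the $O(d_w(s_w,t_w)/\beta)$ cost of honestly serving the chunk, so the expected cost of each Stage‑2 chunk is still $\ge c_i$ regardless of the coin and regardless of conditioning on past chunks — this is precisely why the per‑chunk escape‑price formulation was set up. Let $L$ and $R$ count the chunks routed left and right; by Berry–Esseen, $|L-R|=\Omega(\sqrt{m})$ with constant probability. \emph{Stage 3} then ``kills'' the path that advanced further (say right) by requesting $\emptyset$ on it, and forces the server to complete the remaining $\cM_w$ copy(ies) on the other (left) path as well as to walk through all of the third copies down to $t_{w+1}$; the extra distance gained in Stage~2 forces an excess of $\Omega(\sqrt{m})$ chunks, each of cost $\Omega(d_w(s_w,t_w)/\beta)$, on top of the baseline. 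Summing, the expected total cost is $\big(\text{(Stage 1+3 baseline)} + \Omega(\sqrt m)\big)\cdot\Theta(d_w(s_w,t_w)/\beta)$, and with $d_{w+1}(s_{w+1},t_{w+1})=3\cdot 2\, d_w(s_w,t_w)$ (six copies, shortest path through three) and $m=\Theta(\alpha\beta w^2)$ one recovers the recurrence $\alpha(w+1)^2 d_{w+1} \le \alpha w^2 d_{w+1} + \Omega(\sqrt{\alpha\beta}\, w)\cdot d_w$, i.e.\ $C_{w+1}\ge C_w+\Omega(\sqrt{C_w})$ after tuning $\alpha,\beta$; this in turn yields $C_w=\Omega(\alpha w^2)$.

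Concretely for the bookkeeping of Lemma~\ref{lem:main}'s conditions: the new $c_i$ for $\cM_{w+1}$ are the old $c_i$'s (inherited from whichever copy of $\cM_w$ is active in that chunk), possibly rescaled by the relative diameter, which keeps \eqref{it:ciBound} of the form $[d_{w+1}/2\beta,\,3d_{w+1}/2\beta]$ provided $\beta$ is chosen so each copy of $\cM_w$ has diameter $d_{w+1}/(6)$ — one needs to check the scaling constants line up, which is routine. Condition~\eqref{it:costChunk} at level $w+1$ with escape price $2d_{w+1}(s_{w+1},t_{w+1})$ follows because $2d_{w+1}\ge$ the cost of ``escape at level $w$ in the current copy, then honestly serve the rest'', exactly the computation carried out in Section~\ref{sec:escape}; this is what makes the escape clause inductively self‑sustaining. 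Condition~\eqref{it:sumSizes}, $\E[\sum c_i]\ge\alpha(w+1)^2 d_{w+1}$, is the quantitative payoff of the Stage‑3 excess, and \eqref{it:mBound} follows since we produce $\Theta(w^2)$ chunks at level $w+1$ from the $\Theta(w^2)$ at level $w$ times a constant number of copies. Measurability~\eqref{it:costChunk}/\eqref{it:mBound} of the $c_i$ w.r.t.\ $\sigma(\rho_{\le i-1})$ holds because the coin flips $\epsilon$ and the choice of which old‑chunk is next are all revealed by the preceding chunks.

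The main obstacle is the conditioning in Stage~2: one must control the expected cost of each chunk \emph{conditioned on the entire past}, including conditioned on the coin $\epsilon_i$ for that very chunk and on the algorithm not having switched, and here the naive argument fails for exactly the ``$\epsilon$‑hardness'' reason flagged in Section~\ref{sec:escape} — a chunk could secretly be cheap‑with‑high‑probability and expensive‑with‑tiny‑probability, so that bailing out upon learning it is hard saves a multiplicative factor. The per‑chunk escape‑price induction hypothesis is the device that defuses this: clause~\eqref{it:costChunk} says the expected cost is $\ge c_i$ \emph{even if} an escape of price $2d_w(s_w,t_w)$ is available on the chunk, so the algorithm's option to backtrack‑and‑switch (which costs at least that much) is already priced in and cannot beat $c_i$. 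Getting this interplay exactly right — making sure the escape price available at level $w+1$ is large enough to cover a level‑$w$ escape plus the honest completion cost, while still being only $2d_{w+1}(s_{w+1},t_{w+1})$ — is the delicate accounting the lemma's hypotheses are engineered to support, and is where the bulk of the care in the full proof will go.
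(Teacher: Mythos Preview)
Your plan captures the three-stage skeleton, but it has a genuine structural gap: you never explain how the chunk sizes at level $w+1$ land in the required interval $[d_{w+1}/(2\beta),\,3d_{w+1}/(2\beta)]=[3/2,\,9/2]$ (with the normalization $d_w=\beta$). The inductive chunks you inherit from $\cM_w$ have sizes in $[1/2,\,3/2]$, and in Stage~2 they are further halved: conditioned on $\rho_{\le i-1}$ (which does \emph{not} reveal the coin $\epsilon$), the chunk lands on the algorithm's side only with probability $1/2$, so the conditional expected cost is at most $c_i^{\text{old}}/2\in[1/4,\,3/4]$, not $c_i^{\text{old}}$ as you claim. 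Your sentence ``the expected cost of each Stage-2 chunk is still $\ge c_i$ regardless of the coin'' conflates the bound conditioned on being on the active side (which the escape price does secure) with the bound conditioned only on the past. There is no ``rescaling by relative diameter'' that fixes this: the $c_i$ are absolute costs, and nothing in your construction multiplies them by~$3$. The paper handles this by an explicit two-step induction: first build \emph{subchunks} with sizes in $[0,3/2]$ (your stages essentially do this), and then prove a separate combining lemma that merges consecutive subchunks into chunks of size $\approx 3$ via a stopping-time argument, carefully propagating the escape-price guarantee (the new escape price $6\beta$ must cover the old escape price $2\beta$ plus the size $\le 9/2$ of the partially-served new chunk). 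This combining step is not routine bookkeeping; it is where conditions~\ref{it:ciBound} and~\ref{it:mBound} are actually established, and without it the induction does not close.

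A few smaller issues: your base case (a single chunk with $c_1=0$) violates both condition~\ref{it:ciBound} and condition~\ref{it:mBound}; the paper uses $m=\beta$ singleton chunks with $c_i=1$ on the line metric. Your diameter formula $d_{w+1}=3\cdot 2\,d_w$ is off by a factor~$2$ (the shortest $s_{w+1}$--$t_{w+1}$ path goes through three copies, so $d_{w+1}=3d_w$). Finally, your fair coin makes the \emph{count} difference $L-R$ a simple random walk, but the cost gain in Stage~3 depends on the \emph{size} difference $\sum c_{L,i}-\sum c_{R,i}$; the paper instead biases the coin as $p_{L,j}=n_{R,j}/(n_{L,j}+n_{R,j})$ so that the size difference is itself a martingale, and then invokes a martingale CLT (Ibragimov) rather than Berry--Esseen. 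Your version is not obviously wrong, but you would need to argue that the size difference is still $\Omega(\sqrt m)$ despite the variable chunk sizes.
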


In the statement of Lemma~\ref{lem:main}, we call the number $c_i$ the \emph{size} of the chunk $\rho_i$. 
Intuitively, the size of a chunk captures (a lower bound on) the online cost for serving the chunk as part of the 
longer request sequence $\rho_{\le m}$.

Before proving Lemma~\ref{lem:main}, we briefly argue that it implies Theorem~\ref{thm:mainMSS}.

\begin{proof}[Proof of Theorem~\ref{thm:mainMSS}]
	In the setting of Lemma~\ref{lem:main}, we have
	\begin{align*}
		\E[c_{\alg}(\rho_{\le m})] &= \E\left[\sum_{i=1}^{m} \E\left[c_{\alg}(\rho_{i}\mid \rho_{\le i-1})\bigm\vert \rho_{\le i-1}\right]\right]\\
		&\ge \E\left[\sum_{i=1}^{m} c_i\right]\\
		&\ge \alpha w^2 d(s_w,t_w)\\
		&= \alpha w^2 c_{\opt}(\rho_{\le m}).
	\end{align*}
	Due to the symmetry in the metric space ${\cal M}_w$, we can achieve cost ratio $\alpha w^2$ also in the opposite direction from $t_w$ to $s_w$, so the instance can be  repeated indefinitely\footnote{allowing for an arbitrarily large additive constant $\kappa$ in the definition of competitive ratio}. Since a randomized algorithm is a random distribution over deterministic algorithms, we conclude that for each randomized algorithm there exists a request sequence of arbitrarily high cost where the ratio between expected online cost and optimal cost is at least $\alpha w^2$. Since $\alpha$ and $\beta$ are constants and $n=|M_w|\le \beta \cdot 6^w$, we have $\alpha w^2\in\Omega(w^2)\subseteq\Omega(\log^2 n)$.
\end{proof}

The proof of Lemma~\ref{lem:main} is rather long, so let us first provide a roadmap to its structure. 
We prove Lemma~\ref{lem:main} by induction on $w$. The base case of the induction are all $w$  
with $\alpha w^2\le1$: Here, the metric space is $M_w=\{0,1,\dots,\beta\}$ with the line metric, and 
$s_w=0$ and $t_w=\beta$ so that $d_w(s_w,t_w)=\beta$. Each chunk $\rho_i$ consists only of 
one singleton request. The chunks are sorted from $\{1\}$ to $\{\beta\}$. I.e., $m=\beta$ and 
$\rho_i=\{i\}$ for $i=1,\dots,m$. For $c_i=1$, it is easy to check that all properties are satisfied.

For the induction step, consider some $w+1$ with $\alpha \cdot(w+1)^2>1$. Assuming by the
induction hypothesis that Lemma~\ref{lem:main} holds for $w$, we will show that it also holds 
for $w+1$. This is accomplished by first creating ``subchunks'' that are smaller than required, 
but satisfy similar properties to those in Lemma~\ref{lem:main}, and then combining them to 
chunks of appropriate size that satisfy all the properties.

To simplify notation, assume from now on (without loss of generality) that distances are scaled 
such that $d_w(s_{w},t_{w})=\beta$. Thus, $d_{w+1}(s_{w+1},t_{w+1})=3\beta$. We prove the 
induction step as follows. Rather than constructing the chunks $\rho_1,\dots,\rho_m$ 
and sizes $c_1,\dots,c_m$ for $\cM_{w+1}$ directly, we will first construct subchunks  
$\tilde\rho_1,\dots,\tilde\rho_{\tilde m}$ of smaller sizes $\tilde c_1,\dots,\tilde c_{\tilde m}$ that 
satisfy similar properties. The statement we prove is formalized in Claim~\ref{cl:createSubchunks}. 
Note that the first three properties in the claim are identical to the ones required to complete the induction step of Lemma~\ref{lem:main}, 
but the remaining properties are slightly different: Property~\ref{it:subchunkCost} grants an escape 
price of $2\beta$ instead of $6\beta$, the interval in property~\ref{it:subchunkSize} is $\left[0,\frac{3}{2}\right]$ 
instead of $\left[\frac{3}{2},\frac{9}{2}\right]$, property~\ref{it:subchunkTotalCost} is slightly strengthened 
by a ``$+3$" term, and no lower bound on $\tilde m$ is claimed.
\begin{claim}\label{cl:createSubchunks}
         Let $w\in\mathbb N$ be a natural number for which the properties listed in Lemma~\ref{lem:main} hold,
         and assume that $\alpha \cdot(w+1)^2>1$.
	 There exists a random sequence of subchunks $\tilde\rho_1,\tilde\rho_2,\dots,\tilde\rho_{\tilde m}$ 
	 in ${\cal M}_{w+1}$ and a choice of random variables $\tilde c_1,\dots,\tilde c_{\tilde m}$ such that 
	 for all $i=1,\dots,\tilde m$ the following conditions hold.
	\begin{enumerate}
		\item The last request of $\tilde\rho_{\tilde m}$ is $\{t_{w+1}\}$.
		\item $c_{\opt}(\tilde\rho_{\le \tilde m})=3\beta$.
		\item $\tilde c_i$ is $\sigma(\tilde\rho_{\le i-1})$-measurable.
		\item For every choice of $\tilde \rho_{\le i-1}$ in the sample space, 
		$\E\left[c_{\alg}(\tilde\rho_{i}\mid \tilde \rho_{\le i-1})\bigm\vert \tilde\rho_{\le i-1}\right]\ge \tilde c_i$ 
		for any deterministic online algorithm $\alg$, even if an escape price of $2\beta$ is available on the 
		suffix $\tilde\rho_i$.\label{it:subchunkCost}
		\item $\tilde c_i\in\left[0,\frac{3}{2}\right]$.\label{it:subchunkSize}
		\item $\E\left[\sum_{i=1}^{\tilde m} \tilde c_i\right] \ge \alpha (w+1)^2 \cdot3\beta \,\,\,+ \,\,3$.\label{it:subchunkTotalCost}
	\end{enumerate}
\end{claim}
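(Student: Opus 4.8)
The plan is to mirror the three-stage construction of the basic section, but with each stage now producing \emph{subchunks} built from the chunks $\rho_1,\dots,\rho_m$ promised by the induction hypothesis for $\cM_w$. Since $\cM_{w+1}$ consists of a left path $\cM_{w,L,1}\cM_{w,L,2}\cM_{w,L,3}$ and a right path $\cM_{w,R,1}\cM_{w,R,2}\cM_{w,R,3}$, each of diameter-sum $3\beta$, the key difference from Section~\ref{sec: basic} is that instead of issuing an entire inductive sequence before deciding whether to switch paths, we issue one $\rho_w$-chunk at a time. First I would set up Stage~1: present the inductive chunk sequences simultaneously in $\cM_{w,L,1}$ and $\cM_{w,R,1}$, one chunk at a time, so that the ``anti-server'' (or server, in the MSS formulation) is driven through the first copy on both paths. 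Each such paired chunk becomes a subchunk $\tilde\rho_i$ of size $\tilde c_i = c_i$ (the size of the underlying $\cM_w$-chunk). The total Stage~1 size is then $\E[\sum_i c_i] \ge \alpha w^2\beta$ by property~\ref{it:sumSizes}, and each $\tilde c_i\le \tfrac{3\beta}{2\beta}=\tfrac32$ by property~\ref{it:ciBound}.

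Next comes Stage~2, the randomized middle portion, which I would handle as follows. For the second copy on each path, I flip a fair coin $\epsilon$: if $\epsilon=1$ I issue the $\cM_w$-chunks on the left copy while repeating $\{t_w\}_{R,1}$ on the right (``staying put''), and symmetrically if $\epsilon=0$. Because the algorithm, at the start of each subchunk, does not know the coin, with probability $\tfrac12$ it is caught on the wrong side and must either pay the chunk cost $c_i$ (via property~\ref{it:costChunk}, even with its escape option of price $2\beta$ — here is where the escape-price induction hypothesis is crucial, since switching to the other path costs at least $2\beta$, matching the escape price exactly) or backtrack across at least one full copy of $\cM_w$, which costs at least $\beta \ge c_i$ by property~\ref{it:ciBound}. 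So each Stage~2 subchunk has size $\tilde c_i \ge \tfrac12 c_i$, and I would track $\lleft = $ (number of left copies advanced) and $\rright=$ (right copies advanced) exactly as in the basic construction. Then Stage~3: whichever path advanced further at the end of Stage~2 gets ``killed'' (issue $\emptyset$ on it), and the algorithm must traverse the remaining copies of the less-advanced path all the way to $t_{w+1}$; by a Berry–Esseen estimate the shortfall is $\Omega(\sqrt{m})=\Omega(\sqrt{w^2})=\Omega(w)$ copies beyond the $3\beta$ minimum, so Stage~3 contributes at least $(3\beta/2 + \Omega(w))$ worth of chunk-sizes. I would make the constants in $\Omega(\sqrt m)$ explicit enough to absorb the small slack needed for the ``$+3$'' in property~\ref{it:subchunkTotalCost} and for the fact that OPT still pays exactly $3\beta$ (property 2 of the claim). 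Summing the three stages: Stage~1 gives $\ge \alpha w^2\beta$, Stage~2 gives $\ge \tfrac12\alpha w^2\beta$ (roughly, its own copy of the inductive sum, halved), Stage~3 gives $\ge 3\beta/2 + \Omega(w)\cdot\Omega(1)$; choosing $\alpha$ small enough and $\beta$ large enough makes the total exceed $3\alpha(w+1)^2\beta + 3 = \alpha w^2\beta(3) + \text{lower order} + 3$ — this is the parameter-chasing step, and it closes because $3(w+1)^2 - 3w^2 = 6w+3$ is only linear in $w$, matched by the $\Omega(w)$ gain from Stage~3.

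The main obstacle I anticipate is the same ``conditioning'' subtlety flagged in Section~\ref{sec: basic}: when I claim each Stage~2 subchunk has size $\ge \tfrac12 c_i$, I must ensure that property~\ref{it:costChunk}'s lower bound on $\E[c_{\alg}(\rho_i\mid\rho_{\le i-1})\mid\rho_{\le i-1}]$ survives even when the algorithm may \emph{abort} the chunk (escape). The escape-price formulation of the induction hypothesis is precisely designed to neutralize this — an algorithm that escapes at level $w{+}1$ inside some copy of $\cM_w$ does no better than escaping at level $w$ in that copy (price $2\beta$) and then optimally serving the rest — but I would need to carefully verify that the escape price $2\beta$ available on each subchunk $\tilde\rho_i$ (property~\ref{it:subchunkCost}) genuinely dominates the cost of every alternative ``global'' escape route inside $\cM_{w+1}$ during that subchunk, including routes that backtrack across a partially-served left copy. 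A secondary bookkeeping obstacle is that the subchunks $\tilde\rho_i$ straddling the Stage~1/Stage~2 and Stage~2/Stage~3 boundaries need their sizes defined so that properties~\ref{it:subchunkSize} and~\ref{it:subchunkTotalCost} both hold; I expect to simply inherit sizes from the underlying $\cM_w$-chunks in Stages~1 and~2, and to introduce a few extra ``short'' subchunks (of size $0$, or size inherited from $\cM_w$-chunks) in Stage~3 to pad the killed path's traversal, using the $[0,\tfrac32]$ interval's inclusion of $0$ to make this harmless.
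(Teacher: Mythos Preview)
Your proposal has the right three-stage skeleton, but there are two genuine gaps that would prevent it from going through as written.

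\textbf{The accounting does not close.} Your tally is
\[
\underbrace{\alpha\beta w^2}_{\text{Stage 1}} \;+\; \underbrace{\tfrac12\alpha\beta w^2}_{\text{Stage 2}} \;+\; \underbrace{\tfrac{3\beta}{2}+\Omega(w)}_{\text{Stage 3}} \;=\; \tfrac{3}{2}\alpha\beta w^2 + O(\beta+w),
\]
which is only half of the target $3\alpha\beta(w+1)^2+3\approx 3\alpha\beta w^2$. Taking $\alpha$ small does not help: it shrinks both sides proportionally. What you are missing is that in the paper's construction each of the three copies of $\cM_w$ contributes a full $\alpha\beta w^2$ to the sum of subchunk sizes. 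Concretely, Stage~2 is split into Stage~2a (the random coin phase, contributing $\E[L_\kappa]=\tfrac12\E[L_\kappa+R_\kappa]$) and Stage~2b (finish the \emph{remaining} inductive chunks in copy~2 on the less-advanced side, contributing $\alpha\beta w^2-\E[\min\{L_\kappa,R_\kappa\}]$). Stage~3 then runs a fresh inductive sequence in copy~3, contributing another $\alpha\beta w^2$. Summing gives exactly $3\alpha\beta w^2+\tfrac12\E[|L_\kappa-R_\kappa|]$, and it is only this last $\tfrac12\E[|S_\kappa|]=\Omega(\sqrt{\alpha\beta}\,w)$ term that provides the gain. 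Your ``$3\beta/2$'' for Stage~3 looks like an offline traversal cost, not the subchunk-size contribution.

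\textbf{Stage 2 needs two independent inductive chunk sequences, and a martingale rather than a fair coin.} With a single sequence $\rho_1,\dots,\rho_m$ distributed to left and right by coin flips, neither side receives a prefix $\rho_{\le i}$, so property~\ref{it:costChunk} of the induction hypothesis no longer applies to the chunks that land there. The paper samples \emph{independent} sequences $(\rho_{L,i})$ and $(\rho_{R,i})$ for the two sides; each step issues the next \emph{unused} chunk from one side. Moreover, because the two candidate chunk sizes $n_{L,j},n_{R,j}\in[\tfrac12,\tfrac32]$ differ, the fair coin does not make the size difference $L_j-R_j$ a martingale. The paper instead picks the left side with probability $p_{L,j}=n_{R,j}/(n_{L,j}+n_{R,j})$, sets the subchunk size to $n_{L,j}n_{R,j}/(n_{L,j}+n_{R,j})$, and then invokes Ibragimov's martingale CLT (not vanilla Berry--Esseen) to bound $\E[|S_\kappa|]$. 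Your language of ``$\Omega(\sqrt m)$ \emph{copies}'' also suggests you are still picturing the basic construction with $m_w$ copies per third; here there is exactly one copy per third, and all the variance lives at the chunk level.
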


Thus, the proof of Lemma~\ref{lem:main} by induction proceeds in two steps. We first prove 
Claim~\ref{cl:createSubchunks}. This asserts that if the properties listed in Lemma~\ref{lem:main} 
hold for $w$, then some weaker properties hold for $w+1$. Then we prove that if 
Claim~\ref{cl:createSubchunks} holds, then also the properties listed in Lemma~\ref{lem:main} 
hold for $w+1$. This is done in the following sections. In Section~\ref{sec: subchunks} we construct 
the subchunks and sizes stipulated in Claim~\ref{cl:createSubchunks}. In Section~\ref{sec:subchunkAna} 
we prove the weak inductive step stated in Claim~\ref{cl:createSubchunks}. Finally, in 
Section~\ref{sec:combiningSubchunks} we deduce the properties in Lemma~\ref{lem:main} 
for $w+1$ from Claim~\ref{cl:createSubchunks}.

\subsection{Constructing subchunks}\label{sec: subchunks}

Suppose Lemma~\ref{lem:main} holds for some fixed $w$ with $\alpha \cdot(w+1)^2>1$. To deduce 
the properties in Claim~\ref{cl:createSubchunks} from the induction
hypothesis, we first describe in this subsection the construction of the subchunks 
$\tilde\rho_1,\tilde\rho_2,\dots,\tilde\rho_{\tilde m}$ and sizes $\tilde c_1,\dots,\tilde c_{\tilde m}$.
Recall that distances are scaled such that $d_w(s_{w},t_{w})=\beta$.

The 
request sequence $\tilde\rho_{\le\tilde m}$ consists of three stages that are executed one after another. 
In stage $j$, requests will only involve points in $M_{w,L,j}\cup M_{w,R,j}$.

\paragraph{Stage 1:} 
Sample a sequence of chunks in ${\cal M}_w$ by the induction hypothesis, and replace 
each request set $u\subseteq {\cal M}_w$ in it by $u_{L,1}\cup u_{R,1}\subseteq M_{w,L,1}\cup M_{w,R,1}$. 
The sequence obtained this way constitutes the start of our sequence of subchunks $\tilde\rho_1,\tilde\rho_2,\dots$, 
and we define the size $\tilde c_i$ of each subchunks of stage 1 to be equal to the size of the corresponding chunk 
from which it is created.

\paragraph{Stage 2:} 
Invoking the induction hypothesis two more times,\footnote{Recall that we are assuming inductively the stronger
statement of Lemma~\ref{lem:main}} sample independently two sequences of chunks $\rho_{L,1},\dots,\rho_{L,m_L}$ 
and $\rho_{R,1},\dots,\rho_{R,m_R}$ in $M_{w,L,2}$ and $M_{w,R,2}$, respectively, and let 
$c_{L,i}, c_{R,i}\in\left[\frac{1}{2},\frac{3}{2}\right]$ be the corresponding sizes.

To construct a subchunk in stage 2, we repeatedly do the following: Select a chunk from one of the two inductively 
constructed sequences of chunks, say $\rho_{L,i}$, and replace in it each request set by the union of itself and the 
set $u$ of points on the \emph{other} side (i.e., on the right when the selected chunk is $\rho_{L,i}$) that were 
contained in the last request set of the previous subchunk. We then say that the construction of this subchunk 
\emph{uses} chunk $\rho_{L,i}$. The chunk used to construct a subchunk in this way is always the first unused 
chunk from either the left or the right (with the side determined randomly in a way described shortly). Thus, each 
subchunk advances the request sequence on one of the two sides by one chunk, incurring cost for any algorithm 
that has its server on that side, whereas an algorithm would pay cost $0$ for the subchunk if it has its server on 
the other side (since it would already be at a point of $u$).

Let $\lleft(j)$ and $\rright(j)$ be the number of chunks used on the left and right for the construction 
of the first $j$ chunks of stage $2$ (so $\lleft(j)+\rright(j)=j$). Let 
\begin{align*}
	n_{L,j}:=c_{L,\lleft(j-1)+1} \qquad\text{ and } \qquad n_{R,j}:=c_{R,\rright(j-1)+1} 
\end{align*}
be the sizes of the \emph{next unused} chunks on the two sides immediately \emph{before} the construction 
of subchunk $j$ of stage $2$. These are the two candidate chunks for use in the construction of subchunk $j$. 
Then the construction of the $j$th subchunk of stage 2 uses the next unused chunk on the left with probability 
$p_{L,j}$, and it uses the next unused chunk on the right otherwise (with probability $p_{R,j} = 1 - p_{L,j}$), where
\begin{align*}
	p_{L,j}:=\frac{n_{R,j}}{n_{L,j}+n_{R,j}} \qquad\text{ and } \qquad p_{R,j}:=\frac{n_{L,j}}{n_{L,j}+n_{R,j}}.
\end{align*}
These probabilities are chosen so that if we let $L_j:=\sum_{i=1}^{\lleft(j)} c_{L,i}$ and 
$R_j:=\sum_{i=1}^{\rright(j)} c_{R,i}$ be the total size of chunks used on the left and right, respectively, for the first 
$j$ subchunks of stage 2, then for $S_j:=L_j-R_j$, the sequence $(S_j)_{j=1,2,\dots}$ is a martingale. We set the size of the 
$j$th subchunk to be $n_{L,j} n_{R,j} / (n_{L,j} + n_{R,j})$.

These steps continue until the number of subchunks created in stage 2 is
\begin{align}
	\kappa=\min\left\{k\colon \sum_{j=1}^{k+1} n_{L,j} n_{R,j} \ge \frac{\alpha \beta w^2}{4}\right\}.\label{eq:defk}
\end{align}
Note that it can be checked after the construction of each subchunk whether this stopping condition is satisfied, as 
it only depends on random choices made beforehand. Since each summand $n_{L,j} n_{R,j} \ge \frac{1}{4}$ (as the
inductive hypothesis assumes Property~\ref{it:ciBound} of Lemma~\ref{lem:main}), we have 
$\kappa< \alpha\beta w^2\le \min\{m_L,m_R\}$. Thus, the stopping condition is reached before running out of chunks.

We refer to these first $\kappa$ subchunks of stage 2 as \emph{stage 2a}, and the following part as \emph{stage 2b}: 
The subchunks of stage 2b are simply the unused chunks on the side whose total size of used chunks in stage 2a 
was smaller, and the other side gets ``killed''. More precisely, if $L_\kappa\le R_\kappa$ then the subchunks of stage 2b are 
$\rho_{L,\lleft(\kappa)+1}, \rho_{L,\lleft(\kappa)+2}, \dots, \rho_{L,m_L}$, and otherwise they are 
$\rho_{R,\rright(\kappa)+1}, \rho_{R,\rright(\kappa)+2}, \dots, \rho_{R,m_R}$. The corresponding sizes are the ones 
given by the induction hypothesis.

\paragraph{Stage 3:} The last request of stage 2b was either $\{t_{w,L,2}\}=\{s_{w,L,3}\}$ or $\{t_{w,R,2}\}=\{s_{w,R,3}\}$. 
In the former case, the subchunks and sizes of stage $3$ are obtained by invoking the induction hypothesis in 
${\cal M}_{w,L,3}$, and otherwise in ${\cal M}_{w,R,3}$.

\subsection{Analysis}\label{sec:subchunkAna}

Let $\tilde\rho_1,\dots,\tilde\rho_{\tilde m}$ and $\tilde c_1,\dots,\tilde c_{\tilde m}$ be the entire sequences of subchunks and sizes constructed in stages 1, 2 and 3. The first three properties in Claim~\ref{cl:createSubchunks} follow immediately from the construction and the induction hypothesis of Lemma~\ref{lem:main}. In particular, an optimal offline algorithm can serve the request sequence for cost $3\beta$ by moving through the three copies of ${\cal M}_w$ that are on the side (left or right) where stages 2b and 3 are played.

Property~\ref{it:subchunkCost} follows easily from the induction hypothesis if $\tilde \rho_i$ belongs to stage 1, 2b or 3. If $\tilde\rho_i$ belongs to stage 2a, say it is the $j$th subchunk of stage 2a, then suppose the algorithm is on the left side right before this subchunk is issued. (The case that the algorithm is on the right is symmetric.) With probability $p_{L,j}$ the subchunk is constructed using the next unused inductive chunk on the left. Conditioned on this being the case, the algorithm pays expected cost at least $n_{L,j}$ for this subchunk by the induction hypothesis. Indeed, even if the algorithm switches to a point in $u$ (the set on the right included in each request) in order to avoid paying more cost for the subchunk, switching to $u$ costs at least $2\beta$, which is the escape price granted by the induction hypothesis. As the event we conditioned on has probability $p_{L,j}$, the (unconditioned) expected cost of the algorithm for the subchunk is at least $p_{L,j} n_{L,j}=n_{L,j}n_{R,j}/(n_{L,j}+n_{R,j})$. This is precisely the size of this subchunk by definition, proving property~\ref{it:subchunkCost}.

The property $\tilde c_i\in\left[0,\frac{3}{2}\right]$ is immediate from the induction hypothesis for stages 1, 2b and 3, recalling $\beta=d_w(s_w,t_w)$. For stage 2a it also holds because the size of the $j$th subchunk is $n_{L,j}n_{R,j}/(n_{L,j}+n_{R,j})\le \max\{n_{L,j}, n_{R,j}\} \le \frac{3}{2}$.

Finally and most crucially, we wish to show that
\begin{align}
	\E\left[\sum_{i=1}^{\tilde m} \tilde c_i\right] \ge 3\alpha\beta (w+1)^2 + 3.\label{eq:goalBound}
\end{align}
By construction and the induction hypothesis, the expected subchunk sizes sum to at least $\alpha\beta w^2$ in stage 1, $\alpha\beta w^2-\E\left[\min\{L_\kappa,R_\kappa\}\right]$ in stage 2b, and $\alpha\beta w^2$ in stage 3. To analyze stage 2a, recall the definitions $L_j:=\sum_{i=1}^{\lleft(j)} c_{L,i}$ and $R_j:=\sum_{i=1}^{\rright(j)} c_{R,i}$ and $S_j:= L_j-R_j$.
Notice that the size of the $j$th subchunk of stage 2a, which is $n_{L,j} n_{R,j}/(n_{L,j} + n_{R,j})=p_{L,j} n_{L,j}$, is precisely equal to the expectation of $L_j-L_{j-1}$ (conditioned on prior randomness). Thus, the expected sum of chunk sizes in stage 2a is $\E[L_\kappa]$, which by symmetry is equal to $\E[R_\kappa]$ and thus equal to $\frac{1}{2}\E[L_\kappa+R_\kappa]$. Combining these lower bounds for all stages, we obtain

\begin{align}
	\E\left[\sum_{i=1}^{\tilde m} \tilde c_i\right] &\ge \underbrace{\alpha\beta w^2}_{\text{stage 1}} + \underbrace{\alpha\beta w^2- \E\left[\min\{L_\kappa,R_\kappa\}\right]}_{\text{stage 2b}} + \underbrace{\alpha\beta w^2}_{\text{stage 3}} + \underbrace{\frac{1}{2}\E[L_\kappa+R_\kappa]}_{\text{stage 2a}}\notag\\
	&= 3\alpha\beta w^2 + \frac{1}{2}\E[|S_\kappa|].\label{eq:boundWithSk}
\end{align}
where the equation uses that $\min\{x,y\}=\frac{x+y-|x-y|}{2}$ for $x,y\in\R$. It remains to bound $\E[|S_\kappa|]$. To do so, we will use the following result about the convergence rate in the martingale central limit theorem.

\begin{lemma}[Ibragimov~\cite{ibragimov}]\label{lem:martingaleBerryEsseen}
	Let $\gamma, \eta>0$ be constants. Let $S_0=0, S_1,S_2,\dots$ be a martingale and let $X_j=S_j-S_{j-1}$. If $|X_j|\le \gamma$ and $\sum_{j=1}^\infty\E[X_j^2|X_1,\dots,X_{j-1}]=\infty$ a.s., then for all $z\in\R$ and $\kappa=\min\left\{k\colon \sum_{j=1}^{k+1}\E[X_j^2|X_1,\dots,X_{j-1}]\ge \eta^2\right\}$ it holds that
	\begin{align*}
		\left|\Pr\left[S_\kappa<z\eta\right]-\Phi(z)\right|\le 2\sqrt{\frac{\gamma}{\eta}}\left(1+\frac{3}{2}\frac{\gamma}{\eta} + \frac{\gamma^2}{3\eta^2}\right),
	\end{align*}
	where $\Phi$ is the standard normal distribution function.
\end{lemma}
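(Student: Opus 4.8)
The plan is to obtain this as a form of the martingale Berry--Esseen theorem, namely Ibragimov's classical convergence-rate estimate for martingale central limit theorems. The only genuinely new thing to check is that the stopping rule $\kappa$ fits the martingale framework. Since $\kappa=\min\{k:\sum_{j=1}^{k+1}\E[X_j^2\mid X_1,\dots,X_{j-1}]\ge\eta^2\}$ and each conditional variance $\E[X_j^2\mid X_1,\dots,X_{j-1}]$ is a function of $X_1,\dots,X_{j-1}$, the event $\{\kappa\le j-1\}$ — and hence its complement $\{j\le\kappa\}$ — is $\sigma(X_1,\dots,X_{j-1})$-measurable. Therefore the stopped differences $\bar X_j:=X_j\mathbf{1}_{\{j\le\kappa\}}$ again form a martingale difference sequence with $|\bar X_j|\le\gamma$, and their total conditional variance $V:=\sum_j\E[\bar X_j^2\mid X_1,\dots,X_{j-1}]=\sum_{j\le\kappa}\E[X_j^2\mid\cdots]$ satisfies $V\in[\eta^2-\gamma^2,\eta^2)$: the upper bound is minimality of $\kappa$, and the lower bound follows since $V=\sum_{j\le\kappa+1}\E[X_j^2\mid\cdots]-\E[X_{\kappa+1}^2\mid\cdots]\ge\eta^2-\gamma^2$. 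Thus $S_\kappa=\sum_j\bar X_j$ is the terminal value of a martingale with $\gamma$-bounded increments whose cumulative conditional variance is deterministic up to an additive $O(\gamma^2)$ — exactly the normalized regime to which the martingale Berry--Esseen bound applies — and the small discrepancy $|V-\eta^2|\le\gamma^2$ only feeds the lower-order correction factors.

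For completeness I would sketch the standard route to such a bound. Apply Esseen's smoothing inequality to reduce $\sup_z|\Pr[S_\kappa<z\eta]-\Phi(z)|$ to $\tfrac1\pi\int_{-T}^{T}|\varphi(t)-e^{-t^2/2}|\,|t|^{-1}\,dt+\Theta(1/T)$, where $\varphi$ is the characteristic function of $S_\kappa/\eta$. To control $\varphi$, compare $\E\,e^{itS_\kappa}$ with $\E\,e^{-t^2V/2}$ by replacing the $\bar X_j$ by conditionally Gaussian increments one index at a time, equivalently by tracking the nearly multiplicative process $e^{itS_k+t^2V_k/2}$ with $V_k:=\sum_{j\le k}\E[\bar X_j^2\mid\cdots]$. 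Using $\E[\bar X_j\mid\cdots]=0$, the matching of conditional second moments, and $|\bar X_j|\le\gamma$, the per-step discrepancy is $O(|t|^3\gamma\,\E[\bar X_j^2\mid\cdots])$, so summing over $j$ gives a total discrepancy $O(|t|^3\gamma V)$ that retains Gaussian decay as long as $|t|\lesssim\gamma^{-1}$; adding $|\E\,e^{-t^2V/2}-e^{-t^2/2}|\le\tfrac{t^2}{2}\E|V-\eta^2|/\eta^2=O(t^2\gamma^2/\eta^2)$ and then optimizing the cutoff $T$ in the smoothing inequality yields the $\sqrt{\gamma/\eta}$ rate together with the explicit polynomial correction $1+\tfrac32\tfrac\gamma\eta+\tfrac13\tfrac{\gamma^2}{\eta^2}$.

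The main obstacle — and the reason it is cleanest to invoke Ibragimov's theorem rather than reprove it here — is the martingale (as opposed to independent) structure: the characteristic function of $S_\kappa$ does not factor over the indices, so the telescoping over $j$ must be carried out for the ``almost-martingale'' $e^{itS_k+t^2V_k/2}$ with its residual increments controlled by a Gronwall-type estimate, and one must preserve the Gaussian factor $e^{-t^2/2}$ in the final bound (rather than incurring a crude $e^{t^2/2}$ blow-up), since it is precisely this that lets the cutoff $T$ be taken large enough to reach the $\sqrt{\gamma/\eta}$ rate rather than a weaker one. Pinning down the explicit constants $2$, $\tfrac32$, $\tfrac13$ is then bookkeeping within Ibragimov's estimate.
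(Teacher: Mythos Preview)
The paper does not prove this lemma at all: it is quoted verbatim as a result of Ibragimov and used as a black box in Section~\ref{sec:subchunkAna}. So there is no ``paper's own proof'' to compare against; your proposal goes strictly beyond what the paper does.

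That said, your sketch is a reasonable outline of how such martingale Berry--Esseen bounds are obtained (stopping-time measurability, near-deterministic cumulative conditional variance, Esseen smoothing plus a Lindeberg-style swap controlled by third moments). One small correction: your claimed upper bound $V<\eta^2$ is not quite right. By minimality of $\kappa$, what you get is $\sum_{j\le\kappa}\E[X_j^2\mid\cdots]<\eta^2$ only if you interpret the defining sum as starting at the $(k{+}1)$st term being the one that first pushes you over; but as written, $\kappa$ is the first $k$ for which $\sum_{j=1}^{k+1}\E[X_j^2\mid\cdots]\ge\eta^2$, so $V=\sum_{j\le\kappa}\E[X_j^2\mid\cdots]$ could equal or exceed $\eta^2-\gamma^2$ from below but the strict upper bound should read $V<\eta^2$ only because the sum up to index $\kappa$ (not $\kappa{+}1$) is still below threshold---which is exactly what minimality gives, so your statement is fine after all. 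The essential point, that $|V-\eta^2|\le\gamma^2$, is correct and is what drives the lower-order terms.

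For the purposes of this paper, however, none of this detail is needed: the lemma is simply cited, and only the qualitative consequence $\E[|S_\kappa|]\ge\Omega(\eta)$ for $\gamma/\eta$ small is actually used (see \eqref{eq:absSkBound}).
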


Let us first calculate the conditional variances of the martingale difference terms $X_j=S_j-S_{j-1}$ in our setting:\footnote{The first equation in~\eqref{eq:condVar} only makes sense if we consider the two sequences of chunks sampled in stage 2 as fixed (rather than random), so that the sizes $c_{L,i}$ and $c_{R,i}$ may be viewed as constants rather than random variables. Otherwise, $n_{L,j}$ and $n_{R,j}$ would not be measurable with respect to the $\sigma$-algebra generated by $X_1,\dots,X_{j-1}$, which the expectation is conditioned on. Thus, the bound on $\E[|S_k|]$ that we will obtain in \eqref{eq:absSkBound} holds for any fixed choice of the two chunk sequences sampled in stage 2. Therefore, it also holds when the expectation is taken over the randomness of these chunks as well.}
\begin{align}
	\E[X_j^2\mid  X_1,\dots,X_{j-1}]&=\frac{n_{R,j} n_{L,j}^2}{n_{L,j}+n_{R,j}} + \frac{n_{L,j} n_{R,j}^2}{n_{L,j}+n_{R,j}} = n_{L,j} n_{R,j} \in \left[\frac{1}{4},\frac{9}{4}\right]\label{eq:condVar}
\end{align}

The premise $\sum_{j=1}^\infty\E[X_j^2|X_1,\dots,X_{j-1}]=\infty$ in Lemma~\ref{lem:martingaleBerryEsseen} can easily be satisfied in our case by extending the martingale with additional terms after step $\kappa$.\footnote{In the lemma, this condition merely serves to guarantee the existence of $\kappa$ for any $\eta$. For the specific $\eta$ that we will use, we justified the existence of $\kappa$ already when we constructed stage 2a.} We invoke Lemma~\ref{lem:martingaleBerryEsseen} with $\gamma=\frac{3}{2}$ and $\eta=\frac{\sqrt{\alpha\beta}\cdot w}{2}$, so that $|X_j|\le\max\{n_{L,j},n_{R,j}\}\le\gamma$ and the definition of $\kappa$ in the lemma coincides with the one given in~\eqref{eq:defk}. Since $\sqrt{\alpha} \cdot (w+1) > 1$ by the assumption of the induction step, we then have $\frac{\gamma}{\eta} < \frac{3(w+1)}{w\sqrt{\beta}}= O\left(\frac{1}{\sqrt\beta}\right)$. Thus,
\begin{align}
	\E[|S_\kappa|]&\ge \eta\cdot\left(\Pr[S_\kappa<-\eta] + \Pr[S_\kappa \ge \eta]\right)\notag\\
	&\ge \eta\cdot \left(2\Phi(-1)-O\left(\beta^{-1/4}\right)\right)\notag\\
	&\ge \frac{\sqrt{\alpha\beta}\cdot w}{2}\cdot\Phi(-1)\label{eq:absSkBound}
\end{align}
for a sufficiently large constant $\beta\in\N$.

Combining \eqref{eq:boundWithSk} and \eqref{eq:absSkBound} and choosing $\alpha$ such that $\frac{\Phi(-1)}{4}=9\sqrt{\alpha\beta}$, we obtain
\begin{align*}
	\E\left[\sum_{i=1}^{\tilde m} \tilde c_i\right] &\ge 3\alpha\beta w^2 + 9\alpha\beta w\\
	&= 3\alpha\beta (w+1)^2 +3\alpha\beta(w-1)\\
	&> 3\alpha\beta (w+1)^2 +\sqrt{\alpha}\beta\\
	&=3\alpha\beta (w+1)^2 +\frac{\Phi(-1)\sqrt\beta}{36}\\
	&\ge 3\alpha\beta (w+1)^2 + 3
\end{align*}
where the strict inequality uses $\sqrt\alpha\cdot(w+1)>1$ and $3(w-1)\ge w+1$, and the last inequality holds for a sufficiently large constant $\beta$. This completes the proof of Claim~\ref{cl:createSubchunks}.

\subsection{Combining subchunks}\label{sec:combiningSubchunks}

The next lemma shows that random subchunks $\tilde\rho_1,\dots,\tilde\rho_{\tilde m}$ of sizes $\tilde c_i\le \tilde c_{\max}$ can be combined into chunks $\rho_1,\dots,\rho_m$ of greater sizes $c_i\approx\cavg$, and so that $m=\left\lfloor\E\left[\sum_{i=1}^{\tilde m}\tilde c_i\right]/\cavg\right\rfloor$. Note that the lemma is not specific to MSS, so it may also be used for constructing similar chunk decompositions for other problems with recursively defined lower bounds that might benefit from this technique.
\begin{lemma}\label{lem:combiningSubchunks}
	Let $0<\tilde c_{\max}\le \cavg$ be constants. Let $\tilde\rho_1,\tilde\rho_2,\dots,\tilde\rho_{\tilde m}$ be a random sequence of (sub)chunks and let $\tilde c_1,\dots,\tilde c_{\tilde m}\in[0,\tilde c_{\max}]$ be random variables such that $\tilde c_j$ is $\sigma(\tilde\rho_{\le j-1})$-measurable and
	\begin{align}
		\E\left[c_{\alg}(\tilde \rho_{j}\mid\tilde \rho_{\le j-1})\bigm\vert \tilde \rho_{\le j-1}\right]&\ge \tilde c_j\label{eq:combinePremise}
	\end{align}
	for any deterministic online algorithm $\alg$.
	Then there exists a random sequence $\rho_1,\dots,\rho_m$ of chunks and random variables $c_1,\dots,c_m$ such that:
	\begin{enumerate}
		\item $\rho_{\le m}=\tilde\rho_{\le \tilde m}$,
		\item $c_i$ is $\sigma(\rho_{\le i-1})$-measurable,
		\item $\E\left[c_{\alg}(\rho_i\mid\rho_{\le i-1})\bigm\vert \rho_{\le i-1}\right]\ge c_i$ for any deterministic online algorithm $\alg$,\label{it:combineECost}
		\item $c_i\in[\cavg-\tilde c_{\max},\cavg+\tilde c_{\max}]$,\label{it:combinecBound}
		\item $\E\left[\sum_{i=1}^m c_i\right] \ge \E\left[\sum_{i=1}^{\tilde m}\tilde c_i\right]-\cavg$,\label{it:combineSumc}
		\item $m=\left\lfloor\E\left[\sum_{i=1}^{\tilde m}\tilde c_i\right]/\cavg\right\rfloor$.\label{it:combineM}
	\end{enumerate}
	If~\eqref{eq:combinePremise} holds also when an escape price $\pesc$ is available on the suffix $\tilde \rho_j$, then property~\ref{it:combineECost} holds also when an escape price of at least $\pesc+\cavg+\tilde c_{\max}$ is available on the suffix $\rho_i$.
\end{lemma}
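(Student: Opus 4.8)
The plan is to construct the chunks $\rho_i$ greedily by grouping consecutive subchunks until their accumulated sizes first reach $\cavg$. Concretely, I would define index cut-points $0 = j_0 < j_1 < \dots < j_m$ by letting $j_{i}$ be the smallest index $j$ such that $\sum_{\ell = j_{i-1}+1}^{j} \tilde c_\ell \ge \cavg$, and set $\rho_i := \tilde\rho_{j_{i-1}+1}\tilde\rho_{j_{i-1}+2}\cdots\tilde\rho_{j_i}$. Since each $\tilde c_\ell \le \tilde c_{\max} \le \cavg$, overshoot is at most $\tilde c_{\max}$, so $\sum_{\ell = j_{i-1}+1}^{j_i} \tilde c_\ell \in [\cavg, \cavg + \tilde c_{\max}]$ for every completed chunk. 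The natural choice of size is then $c_i := \left(\sum_{\ell = j_{i-1}+1}^{j_i} \tilde c_\ell\right) - \tilde c_{\max}$, which lands in $[\cavg - \tilde c_{\max}, \cavg + \tilde c_{\max}]$ as required by property~\ref{it:combinecBound}; the subtracted slack of $\tilde c_{\max}$ is what will later absorb the measurability subtlety. One must be careful about the last (possibly incomplete) group: stop the construction once the remaining subchunks can no longer accumulate to $\cavg$, i.e.\ take $m = \left\lfloor \E[\sum_{\ell} \tilde c_\ell]/\cavg \right\rfloor$ chunks and simply discard or absorb leftover subchunks — but note $\rho_{\le m}$ must equal $\tilde\rho_{\le \tilde m}$, so the leftover subchunks should be appended to the final chunk $\rho_m$ rather than discarded. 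Re-examining, the cleanest route is: form the first $m-1$ chunks by the greedy rule, and let $\rho_m$ consist of all remaining subchunks; then verify $m$ has the claimed floor value in expectation using that $\sum \tilde c_\ell$ concentrates enough — actually, since the $j_i$ are random stopping times this needs care, so I would instead keep $m$ itself random and argue the bound $\E[\sum c_i] \ge \E[\sum \tilde c_\ell] - \cavg$ directly via a telescoping/optional-stopping argument.

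The key verifications, in order, are: (1) measurability of $c_i$, i.e.\ that $c_i$ is $\sigma(\rho_{\le i-1})$-measurable — this is the crucial point, because $j_i$ depends on $\tilde c_{j_{i-1}+1},\dots,\tilde c_{j_i}$, and $\tilde c_{j_i}$ is only $\sigma(\tilde\rho_{\le j_i-1})$-measurable, hence knowable before $\tilde\rho_{j_i}$ is issued but the chunk $\rho_i$ itself includes $\tilde\rho_{j_i}$. The fix is exactly why $c_i$ is defined with the $-\tilde c_{\max}$ slack: I would redefine things so that $c_i$ depends only on $\tilde c_{j_{i-1}+1},\dots,\tilde c_{j_i - 1}$ plus the knowledge that the threshold is crossed at step $j_i$ — since $\tilde c_{j_i}$ is determined by $\rho_{\le i-1}$ together with $\tilde\rho_{j_{i-1}+1},\dots,\tilde\rho_{j_i-1}$, all of which are part of $\rho_{\le i}$ but the relevant prefix sum $\sum_{\ell < j_i}\tilde c_\ell < \cavg$ is $\sigma(\rho_{\le i-1}, \tilde\rho_{j_{i-1}+1},\dots,\tilde\rho_{j_i-1})$-measurable... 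I need to instead argue that the stopping index $j_i$ and the partial sum up to $j_i - 1$ (which lies in $[\cavg - \tilde c_{\max}, \cavg)$) are measurable with respect to the information available strictly before issuing $\tilde\rho_{j_i}$, hence with respect to $\sigma(\rho_{\le i-1})$ together with the earlier subchunks of the current chunk — and then set $c_i$ to be that partial sum up to $j_i-1$, which \emph{is} $\ge \cavg - \tilde c_{\max}$ and $\le \cavg$. This choice makes $c_i$ measurable with respect to the right $\sigma$-algebra while still summing correctly; I would need to re-derive the interval in property~\ref{it:combinecBound} accordingly (it becomes $[\cavg - \tilde c_{\max}, \cavg + \tilde c_{\max}]$ after accounting for the chunk $\tilde\rho_{j_i}$ contributing its own $\tilde c_{j_i} \le \tilde c_{\max}$ on top).

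(2) The cost lower bound, property~\ref{it:combineECost}: for any deterministic online algorithm $\alg$, $\E[c_{\alg}(\rho_i \mid \rho_{\le i-1}) \mid \rho_{\le i-1}] = \E[\sum_{\ell=j_{i-1}+1}^{j_i} c_{\alg}(\tilde\rho_\ell \mid \tilde\rho_{\le \ell-1}) \mid \rho_{\le i-1}]$, and I would apply the tower property, conditioning successively on $\tilde\rho_{\le \ell-1}$ and using \eqref{eq:combinePremise} to get $\ge \E[\sum_{\ell=j_{i-1}+1}^{j_i} \tilde c_\ell \mid \rho_{\le i-1}]$; since $j_i$ is a stopping time and the partial sum up through $j_i - 1$ is deterministically $< \cavg$ while adding $\tilde c_{j_i}$ crosses $\cavg$, this conditional expectation is $\ge$ the (measurable) partial sum, which is exactly $c_i$. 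Care is needed that the summation runs to a random upper limit $j_i$, so I would phrase this via a supermartingale/optional-stopping argument: $M_k := \sum_{\ell=j_{i-1}+1}^{k} \big(c_{\alg}(\tilde\rho_\ell \mid \tilde\rho_{\le\ell-1}) - \tilde c_\ell\big)$ is a submartingale in $k$ (for $k$ in the range of chunk $i$), and $j_i$ is a bounded stopping time, so $\E[M_{j_i} \mid \rho_{\le i-1}] \ge 0$. (3) The escape-price addendum: if the premise holds with escape price $\pesc$ on $\tilde\rho_j$, then within chunk $\rho_i$ an escape at any subchunk $\tilde\rho_\ell$ costs $\pesc$ but also forgoes all cost on $\tilde\rho_{\ell},\dots,\tilde\rho_{j_i}$, whose sizes sum to at most $\cavg + \tilde c_{\max}$; so charging the escape against the lost chunk-cost, the algorithm cannot beat $c_i$ as long as the escape price available on $\rho_i$ is at least $\pesc + \cavg + \tilde c_{\max}$, via essentially the same submartingale argument with the stopped process accounting for a possible early escape. (4) Finally, property~\ref{it:combineSumc} and \ref{it:combineM}: by construction $\sum_{i=1}^{m-1} c_i = \sum_{\ell=1}^{j_{m-1}} \tilde c_\ell$ (telescoping the partial sums), and $\sum_{\ell = j_{m-1}+1}^{\tilde m} \tilde c_\ell < \cavg$ by the stopping rule, so $\sum_i c_i \ge \sum_\ell \tilde c_\ell - \cavg$ after also accounting for $c_m$; taking expectations and using that $m$ is within $1$ of $\E[\sum \tilde c_\ell]/\cavg$ gives the floor identity — though here I anticipate the honest statement requires letting $m$ be deterministic equal to $\lfloor\E[\sum\tilde c_\ell]/\cavg\rfloor$ and padding/truncating, so I would make the construction produce exactly that many chunks by construction (absorbing any surplus subchunks into $\rho_m$ and, if the greedy process runs out early on some sample path, padding with empty chunks of size $0$, whose cost bound $0$ trivially holds).

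I expect the measurability bookkeeping in step~(1) to be the main obstacle: the chunk boundary $j_i$ is a stopping time whose value is revealed only upon crossing the threshold $\cavg$, so the ``size'' $c_i$ charged to chunk $i$ must be chosen to be predictable (known before $\rho_i$ is issued), which forces the somewhat awkward definition $c_i = \sum_{\ell=j_{i-1}+1}^{j_i-1}\tilde c_\ell \in [\cavg - \tilde c_{\max}, \cavg)$ and a correspondingly loose interval $[\cavg - \tilde c_{\max}, \cavg + \tilde c_{\max}]$ in property~\ref{it:combinecBound}. Everything else — the telescoping sum bounds, the submartingale/optional-stopping argument for the cost lower bound, and the escape-price accounting — is routine once the right definitions are in place.
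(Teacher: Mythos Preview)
Your greedy construction has a genuine gap at exactly the point you flagged as ``the main obstacle,'' and your proposed fix does not close it. The requirement is that $c_i$ be $\sigma(\rho_{\le i-1})$-measurable, where $\rho_{\le i-1} = \tilde\rho_{\le j_{i-1}}$; in other words, $c_i$ must be determined \emph{before chunk $i$ even begins}. But your stopping index $j_i$, and hence any quantity like $\sum_{\ell=j_{i-1}+1}^{j_i}\tilde c_\ell$ or $\sum_{\ell=j_{i-1}+1}^{j_i-1}\tilde c_\ell$, depends on the sizes $\tilde c_{j_{i-1}+2},\tilde c_{j_{i-1}+3},\dots$, which are only $\sigma(\tilde\rho_{\le j_{i-1}+1})$-, $\sigma(\tilde\rho_{\le j_{i-1}+2})$-, $\dots$-measurable --- i.e., they depend on subchunks \emph{inside} chunk $i$. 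Your sentence ``measurable with respect to $\sigma(\rho_{\le i-1})$ together with the earlier subchunks of the current chunk'' gives the game away: that enlarged $\sigma$-algebra is not what the lemma asks for. A second, related problem is that your greedy process produces a \emph{random} number of chunks (roughly $\sum_\ell\tilde c_\ell/\cavg$), whereas property~\ref{it:combineM} demands the deterministic value $m=\lfloor\E[\sum_\ell\tilde c_\ell]/\cavg\rfloor$; your proposed padding with empty size-$0$ chunks violates property~\ref{it:combinecBound} whenever $\cavg>\tilde c_{\max}$.

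The paper's proof resolves both issues at once by a different choice of stopping rule. Instead of accumulating the realized $\tilde c_\ell$, it tracks the \emph{Doob martingale of the tail}: set $h_i:=\min\bigl\{h:\ \E\bigl[\sum_{j>h}\tilde c_j\bigm|\tilde\rho_{\le h}\bigr]\le C-i\cavg\bigr\}$, where $C=\E[\sum_j\tilde c_j]$. The quantity inside the minimum is $\sigma(\tilde\rho_{\le h})$-measurable, starts at $C$, ends at $0$, and decreases by at most $\tilde c_{\max}$ per step; hence each level $C-i\cavg$ for $i\le m=\lfloor C/\cavg\rfloor$ is crossed \emph{surely}, giving a deterministic $m$. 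One then takes $c_i:=\E\bigl[\sum_{j=h_{i-1}+1}^{h_i}\tilde c_j\bigm|\tilde\rho_{\le h_{i-1}}\bigr]$, which is $\sigma(\rho_{\le i-1})$-measurable by construction, lies in $[\cavg-\tilde c_{\max},\cavg+\tilde c_{\max}]$ because it is a difference of two tail expectations each pinned to within $\tilde c_{\max}$ of their target levels, and the cost bound follows by the tower property (with a short charging argument for the escape price, essentially the one you sketched). The missing idea in your attempt is precisely this switch from the realized partial sum to its conditional expectation.
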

The proof of the induction step of Lemma~\ref{lem:main} is completed by applying Lemma~\ref{lem:combiningSubchunks} with $\tilde c_{\max}=3/2$, $\cavg=3$ and $\pesc=2\beta$ to the subchunks guaranteed by Claim~\ref{cl:createSubchunks}. For $\beta\ge 2$, the escape price $6\beta$ in $\cM_{w+1}$ is indeed at least $\pesc+\cavg+\tilde c_{\max}$.
\begin{proof}[Proof of Lemma~\ref{lem:combiningSubchunks}]
Let $C:=\E\left[\sum_{i=1}^{\tilde m}\tilde c_i\right]$ and $m:=\left\lfloor C/\cavg\right\rfloor$, satisfying 
property~\ref{it:combineM}. We define the chunks $\rho_i$ and sizes $c_i$ based on random indices 
$h_i$ as follows. Set for every $i=0,1,\dots, m$, 
$$
h_i := \min\left\{h\colon \E\left[\sum_{j=h+1}^{\tilde m}\tilde c_j\Biggm\vert \rho_{\le h}\right]\le C-i\cdot\cavg\right\}.
$$
Next, put
$$
\rho_i := \left\{\begin{array}{ll}
		        \tilde\rho_{h_{i-1}+1\le\cdot\le h_i} &\hbox{if } i\in\{1,\dots,m-1\},\\
		        \tilde\rho_{h_{m-1}+1\le\cdot\le \tilde m} & \hbox{if } i = m.
		 \end{array}\right.
$$
Finally, put for all $i=1,2,\dots,m$,
$$
c_i := \E\left[\sum_{j=h_{i-1}+1}^{h_i}\tilde c_j\Biggm\vert\tilde\rho_{\le h_{i-1}}\right].
$$
Note that $h_0=0\le h_1\le\dots\le h_m\le \tilde m$, and therefore $\rho_{\le i}= \tilde\rho_{\le h_{i}}$ 
for $i\le m-1$, and $\rho_{\le m}=\tilde\rho_{\le\tilde m}$. In particular, $c_i$ is $\sigma(\rho_{\le i-1})$-measurable 
by design. For any online algorithm $\alg$,
	\begin{align}
		\E\left[c_{\alg}(\rho_i\mid \rho_{\le i-1})\bigm\vert \rho_{\le i-1}\right] &\ge \E\left[\sum_{j=h_{i-1}+1}^{h_i}\E\left[c_{\alg}(\tilde \rho_{j}\mid \tilde \rho_{\le j-1})\bigm\vert \tilde \rho_{\le j-1}\right]\Biggm\vert\tilde\rho_{\le h_{i-1}}\right]\notag\\
		&\ge  \E\left[\sum_{j=h_{i-1}+1}^{h_i}\tilde c_j\Biggm\vert\tilde\rho_{\le h_{i-1}}\right]\label{eq:combineDelicate}\\
		&= c_i.\notag
	\end{align}
	In the setting with escape prices this is still true, but inequality~\eqref{eq:combineDelicate} requires a charging argument: Recall that we assume here that inequality~\eqref{eq:combinePremise} holds for algorithms that are allowed to escape during $\tilde\rho_j$ (but fails for algorithms that escape earlier).
	If $A$ escapes for cost $\pesc+\cavg+\tilde c_{\max}$ during $\rho_i$, let $j^*\in\{h_{i-1}+1,\dots,h_i\}$ be the index of the subchunk $\tilde\rho_{j^*}$ of $\rho_i$ during which it escapes. We charge only $\pesc$ of this cost $\pesc+\cavg+\tilde c_{\max}$ to the subchunk $\tilde\rho_{j^*}$, and we charge a ``fake cost'' of $\E[\tilde c_{j}\mid\tilde\rho_{\le h_{i-1}}]$ to each subsequent subchunk $\tilde\rho_j$ for $j=j^*+1,\dots,h_i$ (where the true cost of $A$ would be $0$ because the algorithm already escaped). So the total amount charged in this way is
	\begin{align*}
		\pesc+\sum_{j=j^*+1}^{h_i}\E[\tilde c_j\mid\tilde\rho_{\le h_{i-1}}]\le \pesc+c_i,
	\end{align*}
	which does not overcharge the escape price of $\pesc+\cavg+\tilde c_{\max}$ provided that the bound $c_i\le \cavg+\tilde c_{\max}$ of property~\ref{it:combinecBound} holds. For this charging scheme, inequality~\eqref{eq:combineDelicate} follows by invoking~\eqref{eq:combinePremise} for $j\le j^*$ and substituting the fake charge $\E[\tilde c_{j}\mid\tilde\rho_{\le h_{i-1}}]$ for $j>j^*$.
	
	To see that property~\ref{it:combinecBound} holds, note that
	\begin{align*}
		c_i&=\E\left[\sum_{j=h_{i-1}+1}^{h_i}\tilde c_j\Biggm\vert\tilde\rho_{\le h_{i-1}}\right]\\
		&=\E\left[\sum_{j=h_{i-1}+1}^{\tilde m}\tilde c_j \Biggm\vert\tilde\rho_{\le h_{i-1}}\right] - \E\left[\E\Biggl[\sum_{j=h_{i}+1}^{\tilde m}\tilde c_j \biggm\vert\tilde\rho_{\le h_{i}}\Biggr]\Biggm\vert\tilde\rho_{\le h_{i-1}}\right].
	\end{align*}
	By definition of $h_{i}$ and the assumption that $\tilde c_j\in[0,\tilde c_{\max}]$, the first term lies in $[C-(i-1)\cdot\cavg-\tilde c_{\max}, C-(i-1)\cdot\cavg]$ and the second in $[C-i\cdot\cavg-\tilde c_{\max}, C-i\cdot\cavg]$. Thus, their difference must lie in $[\cavg-\tilde c_{\max},\cavg+\tilde c_{\max}]$.
	
	Finally, property~\ref{it:combineSumc} holds because
	\begin{align*}
		\E\left[\sum_{i=1}^m c_i\right]&=\E\left[\sum_{i=1}^{h_m} \tilde c_i\right]\\
		&\ge \E\left[\sum_{i=1}^{\tilde m} \tilde c_i\right]-(C-m\cdot\cavg)\\
		&\ge \E\left[\sum_{i=1}^{\tilde m} \tilde c_i\right]-\cavg,
	\end{align*}
	completing the proof.
\end{proof}

\section{\texorpdfstring{A Universal \boldmath$\Omega(\log n)$ Lower Bound for $\MSS$}{A Universal Omega(log n) Lower Bound for MSS}}\label{sec: universal}

We now prove the tight universal lower bounds for $\kSRV$ and $\MTS$. The main technical 
contribution of this section is a lower bound on $\MSS$ in
hierarchically separated tree metrics (HSTs). For definitions and further discussion,
see Appendix~\ref{sec: metrics}. The role that HSTs play in proving universal lower 
bounds is a consequence of the following theorem. 
\begin{theorem}[{Bartal, Linial, Mendel, Naor~\cite[Theorem 3.26]{BLMN03}}]
For every $\epsilon\in (0,1)$ there exists $\delta > 0$ such that for every 
$q\ge 1$, for every $n\in \mathbb N$, and for every $n$-point metric 
space, there exists an $n^{\delta/\log 2q}$-point subspace that is a $q$-HST 
up to bi-Lipschitz distortion $\le 2+\epsilon$.
\end{theorem}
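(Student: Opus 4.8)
The plan is to run the standard recursive ball-carving machinery of metric Ramsey theory. Let $R(n)$ denote the largest $N$ such that every $n$-point metric space contains an $N$-point subspace that embeds into a $q$-HST with distortion $\le 2+\epsilon$; I want to prove $R(n) \ge n^{\delta/\log 2q}$ by induction on $n$, building the subset and its $q$-HST representation top-down. Given an $n$-point space $X$, normalize $\diam(X) = 1$. The goal of one step is to partition a large subset of $X$ into clusters $A_1, \dots, A_t$, each of diameter $\le 1/q$ and pairwise at distance $\ge 1/(2+\epsilon)$; these become the children of the root (whose label is $1$), and one recurses inside each $A_i$ at scales shrunk by $q$. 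The root is then Lipschitz because the kept subset has diameter $\le 1$, and co-Lipschitz with constant $2+\epsilon$ because cross-cluster distances are $\ge 1/(2+\epsilon)$; since the factor-$q$ decrease of labels is built in, composing this with the distortion guaranteed recursively inside each $A_i$ yields overall distortion $\le 2+\epsilon$.

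The carving itself is greedy: repeatedly pick a center $x_i$ not yet covered, let $A_i = B(x_i, r_i)$ for a radius $r_i \le 1/(2q)$ (so $\diam(A_i) \le 1/q$), discard the ``moat'' $B(x_i, r_i + 1/(2+\epsilon)) \setminus A_i$, and continue on $X \setminus B(x_i, r_i + 1/(2+\epsilon))$ until no points remain. The $A_i$ are pairwise at distance $\ge 1/(2+\epsilon)$ by construction, so the kept subset is $\bigcup_i S_i$ with each $S_i$ a good subset of $A_i$, and the inductive hypothesis gives $R(n) \ge \sum_i R(|A_i|) \ge \sum_i |A_i|^{\delta/\log 2q}$. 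The crux is to choose the centers and radii so that $\sum_i |A_i|^{\delta/\log 2q} \ge n^{\delta/\log 2q}$ in spite of the mass lost to moats: for each center one should not take $r_i = 1/(2q)$ blindly, but pick, by pigeonhole over a geometric sequence of candidate radii (the ball sizes grow from $1$ to at most $n$), a scale at which the ball has slow growth over a factor-$O(q)$ window of radii, so that each carved ball $A_i$ accounts for all but a $(1/\log q)$-fraction-in-the-exponent of the mass it consumes. The extremal case is the equally spaced path (or, more generally, a grid), where a radius-$\approx 1/(2q)$ ball together with its moat occupies a constant fraction of $X$ while retaining only a $\Theta(1/q)$ fraction of its own mass; unwinding the recursion there yields $\Theta(n^{\log 2/\log 2q})$ survivors, which is precisely the claimed form, and the slack in the moat width $1/(2+\epsilon)$ is what leaves room to close the induction with a fixed $\delta = \delta(\epsilon)>0$.

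The main obstacle is the one intrinsic to metric Ramsey phenomena: one must guarantee the co-Lipschitz (separation) property \emph{simultaneously at every level of the recursion} while discarding only a controlled amount of mass. A random partition at scale $r$ pads a point inside its own cluster only with probability roughly $1/\log(\text{local aspect ratio})$, which forces a $\log n$ loss in the distortion and would only produce $O(\log n)$-HSTs; to obtain a genuine $q$-HST with distortion $2+\epsilon$ one cannot afford any such loss, and this is exactly why the radii $r_i$ must be chosen adaptively via the slow-growth argument rather than by randomization. A secondary technical point is that the recursion must be organized by the number of points rather than the number of scales — the aspect ratio of $X$ can be arbitrarily large relative to $n$, and scales at which $X$ does not split off new clusters contribute no loss — so the per-step mass bound must be phrased to telescope over points; and once one has $R(n) \ge c(q,\epsilon)\,n^{\delta/\log 2q}$ with a bounded prefactor, the prefactor is absorbed by shrinking $\delta$ slightly (or by noting the recursion is self-improving, so the prefactor need only survive boundedly many steps).
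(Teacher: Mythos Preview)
The paper does not prove this theorem. It is quoted verbatim as a result of Bartal, Linial, Mendel, and Naor~\cite{BLMN03} and is used as a black box to derive Corollary~\ref{cor: HST sufficient}; no argument for it appears anywhere in the paper. So there is no ``paper's own proof'' to compare your proposal against.

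For what it is worth, your sketch does track the high-level architecture of the actual proof in~\cite{BLMN03}: a top-down recursion that, at each scale, carves out well-separated clusters of diameter shrunk by a factor $q$, with the radii chosen adaptively (via a slow-growth / region-growing argument) so that the mass lost to moats is controlled in the exponent. Your identification of the equally spaced path as the extremal case and of the need to recurse on point count rather than scale count is also correct. That said, what you have written is a plan rather than a proof: the inequality $\sum_i |A_i|^{\delta/\log 2q} \ge n^{\delta/\log 2q}$ is asserted but not derived, and making it go through requires a careful accounting (in~\cite{BLMN03} this is done via a weighted version of the recursion and a specific choice of the radius-selection pigeonhole) that your sketch only gestures at. If you were asked to supply a self-contained proof here, the gap is precisely in closing that inductive inequality with an explicit $\delta(\epsilon)$.
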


\begin{corollary}\label{cor: HST sufficient}
Let $q:\mathbb N\rightarrow [1,\infty)$. Suppose that for all $n\in \mathbb N$,
that for all $n$-point $q(n)$-HSTs ${\cal H}$ we have that 
$C_{\rand}^{\MSS}({\cal H},n) = \Omega(\log n)$. Then, for all $n\in \mathbb N$,
for all $n$-point metric spaces ${\cal M}$,
$C_{\rand}^{\MSS}({\cal M},n-1) = \Omega(\log n/\log 2q(n))$.
\end{corollary}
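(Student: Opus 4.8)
The plan is to reduce the claim for an arbitrary $n$-point metric space to the hypothesized HST lower bound by passing to a large HST-like subspace via the Bartal--Linial--Mendel--Naor theorem. First I would apply that theorem with a fixed $\epsilon$, say $\epsilon = 1$, obtaining the associated $\delta > 0$. Given an $n$-point metric space ${\cal M}$, set $q := q(n)$ and invoke the theorem to extract a subspace $S \subseteq M$ with $|S| = N := \lceil n^{\delta/\log 2q}\rceil$ that is a $q$-HST up to bi-Lipschitz distortion $\le 3$. Call the associated $q$-HST metric ${\cal H}$ on the point set $S$, so that $d_{\cal H} \le d_{\cal M}|_S \le 3\, d_{\cal H}$ (after rescaling).

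The second step is to transfer an $\MSS$ lower bound on ${\cal H}$ to one on ${\cal M}$. An $\MSS$ instance on ${\cal H}$ with request sets of size at most $N-1 \le |S| - 1$ can be run verbatim on ${\cal M}$ restricted to the points of $S$: any request set is a subset of $S$, and any algorithm for ${\cal M}$ that only serves requests inside $S$ is an algorithm for ${\cal H}$, up to the factor-$3$ distortion of distances; conversely the optimal offline cost in ${\cal M}$ restricted to $S$ is within a factor $3$ of the optimal cost in ${\cal H}$. Since bi-Lipschitz distortion only changes costs by a constant factor (and the additive constant $\kappa$ is immaterial), we get $C_{\rand}^{\MSS}({\cal M}, N-1) \ge \tfrac{1}{3} C_{\rand}^{\MSS}({\cal H}, N) = \Omega(\log N)$, where the last step is the hypothesis of the corollary applied to the $N$-point $q(N)$-HST ${\cal H}$. (One should note $q(N) \le q(n)$ may fail in general, but the hypothesis is stated for \emph{all} $N$-point $q(N)$-HSTs; since ${\cal H}$ is a $q(n)$-HST, one applies the hypothesis with whatever separation it actually has --- a $q$-HST is also a $q'$-HST for $q' \le q$, so it suffices that ${\cal H}$ be a $q(N)$-HST, which holds as long as $q$ is monotone or one simply reads the hypothesis as quantifying over all HSTs of separation at least $q(N)$; I would phrase the hypothesis accordingly.) Finally, since request sets of size $N-1 < n-1$ are a special case of size-$(n-1)$ request sets, $C_{\rand}^{\MSS}({\cal M}, n-1) \ge C_{\rand}^{\MSS}({\cal M}, N-1)$.

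The third step is the arithmetic: $\log N = \Theta\!\left(\tfrac{\delta}{\log 2q(n)}\log n\right) = \Omega\!\left(\tfrac{\log n}{\log 2q(n)}\right)$ since $\delta$ is a constant, giving $C_{\rand}^{\MSS}({\cal M}, n-1) = \Omega(\log n / \log 2q(n))$ as claimed. The main obstacle is bookkeeping around the HST separation parameter in the induction hypothesis --- ensuring that the subspace ${\cal H}$, which is a $q(n)$-HST of size $N < n$, still qualifies for the hypothesized bound, which nominally speaks of $q(N)$-HSTs on $N$ points; this is handled by the standard observation that a $q$-HST is also a $q'$-HST for every $q' \le q$ together with treating $q$ as non-decreasing (or restating the hypothesis to quantify over separation \emph{at least} $q(N)$, which is how it will be used). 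A secondary, purely technical point is to confirm that passing to a bi-Lipschitz copy of a subspace preserves the competitive ratio up to constants and an additive term, which is immediate since both online and offline costs scale by factors in $[1,3]$.
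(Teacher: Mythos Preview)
Your approach is essentially the same as the paper's, which in fact gives no explicit proof and treats the corollary as an immediate consequence of the Bartal--Linial--Mendel--Naor theorem. Your explicit handling of the bi-Lipschitz transfer and the $q(N)$ versus $q(n)$ bookkeeping is more careful than the paper itself; the latter issue is moot in the paper's only application (where $q\equiv 1$), and your observation that the corollary as stated tacitly needs $q$ non-decreasing (equivalently, the hypothesis quantified over separation at least $q(\cdot)$) is correct.
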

The previously best universal lower bound is implied by the $\Omega(\log n)$ 
lower bound of~\cite[Theorem 3]{BBM01} for all $n$-point $\Omega(\log^2 n)$-HSTs. 
Here we improve this lower bound by giving a lower bound of $\Omega(\log n)$ 
for all $n$-point $1$-HSTs. The reason why the lower bound construction in~\cite[Theorem 3]{BBM01} works for $\Omega(\log^2n)$-HSTs but not for $1$-HSTs is that a stretch of $\Omega(\log^2n)$ is needed to ensure that the cost of switching between subtrees is large relative to the cost of an inductive request sequence within a subtree. To overcome this, we use a similar idea to the one in the previous section of decomposing the recursive request sequence into smaller chunks. Apart from this, our construction is similar to the one in \cite{BBM01}.

\begin{lemma}\label{lem:universalMain}
There is a constant $\alpha>0$ such that for any $1$-HST (a.k.a. an ultrametric space) 
${\cal U}' = (U',d)$ there exists a subspace ${\cal U} = (U,d_{|U})$ of ${\cal U}'$ with 
$\diam({\cal U})=\diam({\cal U}')$ and there exists a distribution $\cD$ of request sequences 
in $U$ satisfying the following properties for every initial location $s\in U$:
\begin{itemize}
	\item For any deterministic online algorithm $\alg$, and even if an escape price of 
	         $2\diam({\cal U})$ is available:
	\begin{align}
		\E_{\rho\sim\cD}\left[c_{\alg,s}(\rho)\right]\ge \diam({\cal U})\ge 
		          \E_{\rho\sim\cD}\left[c_{\opt,s}(\rho)\right]. \label{eq:universalCost}
	\end{align}
	\item There exists some $h({\cal U})\in\N$ such that for all $h\ge h({\cal U})$:
	\begin{align}
		h\cdot\diam({\cal U}) \ge \alpha\cdot\log |U'|\cdot \E_{(\rho_1,\dots,\rho_h)\sim\cD^h}
	               \left[c_{\opt,s}(\rho_1\rho_2\dots\rho_h)\right]. \label{eq:universalCR}
	\end{align}
\end{itemize}
\end{lemma}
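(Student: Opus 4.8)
\textbf{Proof plan for Lemma~\ref{lem:universalMain}.}
The plan is to mirror the chunking strategy of Section~\ref{sec: main}, but in the simpler setting of ultrametrics (1-HSTs). I will prove by induction on the depth of the HST the following strengthened statement: for every 1-HST $\mathcal U'$ there is a subspace $\mathcal U$ of the same diameter, together with a distribution $\mathcal D$ of request sequences that can be broken into \emph{chunks} $\rho = \rho_1\rho_2\dots\rho_m$ with associated sizes $c_1,\dots,c_m$ such that: (a) $c_i$ is $\sigma(\rho_{\le i-1})$-measurable; (b) for any deterministic online algorithm $\alg$, even with an escape price of $2\diam(\mathcal U)$ available on the suffix $\rho_i$, the conditional expected cost on $\rho_i$ is at least $c_i$; (c) the $c_i$ lie in a bounded interval around some target average size proportional to $\diam(\mathcal U)/(\text{branching factor})$; (d) $\E[\sum_i c_i] \ge \diam(\mathcal U)$ and the total optimal cost is at most $\diam(\mathcal U)$; and (e) $\E[\sum_i c_i]$ grows like $\alpha \log|U'| \cdot \E[c_{\opt}(\text{repeated instance})]$ when the instance is repeated $h$ times. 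The inequality~\eqref{eq:universalCost} is exactly the summation of (b) over chunks (as in the proof of Theorem~\ref{thm:mainMSS}), and~\eqref{eq:universalCR} follows from (e) by summing over $h$ repetitions, using that $c_{\opt}$ is roughly additive across repetitions up to the diameter.

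The inductive step is where the construction lives. Given a 1-HST $\mathcal U'$ whose root has children subtrees $\mathcal U'_1,\dots,\mathcal U'_r$ (each of diameter equal to the root's separation parameter, since it is a 1-HST), I will first pass to subspaces of the children by the induction hypothesis, then \emph{select} a constant fraction $w$ of these children to use as the ``branches'' of the construction (this is where the $\Omega(\log n)$, rather than $\Omega(\log n/\log\log n)$, will come from: each level contributes a constant to the competitive ratio while shrinking the point count only by a constant factor, whereas $|U|$ shrinks geometrically with depth). The request sequence visits the chosen branches one at a time; within a branch it issues the recursively-guaranteed chunks, but only a bounded number of chunks at a time before it may move the ``request pressure'' to the next branch — exactly the chunk-decomposition trick described in the paragraph before the lemma. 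The key quantitative point, as in Section~\ref{sec: main}, is that switching between branches costs $\diam(\mathcal U)$, which is large compared to the size of a single chunk (bounded by $\diam(\mathcal U)/w$ times a constant), so an algorithm that has committed to the ``wrong'' branch is forced to pay. Because the branches are explored in an order the adversary controls (or randomizes), the online algorithm cannot know in advance which branch will be the ``last'' one where the bulk of the cost accrues; a coupon-collector / averaging argument over the $w$ branches then yields an extra additive constant to the competitive ratio at this level. I will also need to propagate the escape-price bookkeeping, increasing the escape price from $2\diam$ at the child level to $2\diam$ at the parent level (the geometric separation of the HST makes this clean: $\diam$ of a child is a $1/$(separation) fraction of $\diam(\mathcal U')$), and invoke Lemma~\ref{lem:combiningSubchunks} to merge the small subchunks into chunks of the required uniform size.

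For the base case I take $\mathcal U$ to be two points at distance $\diam(\mathcal U')$ (or a short line, as in the base case of Lemma~\ref{lem:main}), with a single chunk forcing the server to the far endpoint; properties (a)--(e) are immediate.

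\textbf{Main obstacle.} The hard part is verifying property (e) — the correct growth of $\E[\sum c_i]$ against $\E[c_{\opt}]$ under $h$-fold repetition — simultaneously with the escape-price-robust lower bound (b). Unlike the existential construction, here the number of branches $w$ is only a constant, so each level buys only a constant gain; I must be careful that the recursion $C_{\text{level}+1} \ge C_{\text{level}} + \Omega(1)$ holds with the \emph{same} constant at every level and that the accumulated additive error (from merging subchunks, from the coupon-collector slack, from the $c_{\opt} \le \diam$ boundary terms over repetitions) does not swamp the per-level gain across the $\Theta(\log|U'|)$ levels. Getting the constants to line up — in particular choosing the branching fraction $w$, the chunk-batch size, and the target chunk size so that Lemma~\ref{lem:combiningSubchunks}'s hypotheses are met and the escape price $2\diam$ suffices at every level — is the delicate calculation, though each individual step is of the same flavor as in Section~\ref{sec:subchunkAna}.
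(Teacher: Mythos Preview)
Your chunking intuition is right and is indeed the key new idea, but the inductive scheme you sketch has a structural gap: it tacitly assumes the HST is regular enough that ``each level contributes a constant to the competitive ratio while shrinking the point count only by a constant factor.'' Arbitrary $1$-HSTs have no such guarantee. Two extreme examples break the plan as stated: (i) a star (root with $n$ leaf children) has depth $1$, so a constant-per-level recursion yields $O(1)$, not $\Omega(\log n)$; (ii) a caterpillar where at every internal node one child is a leaf and the other carries all remaining points has $n_1=n-1$, $n_2=1$, so ``selecting a constant fraction of the children'' gives you nothing useful to recurse on. Relatedly, your parenthetical ``each of diameter equal to the root's separation parameter, since it is a $1$-HST'' is false --- in a $1$-HST a child subtree's diameter is only \emph{at most} the root weight, and may be arbitrarily smaller --- and this matters for how many subchunks go into a chunk.

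The paper's proof deals with exactly this irregularity by a case analysis inherited from~\cite{BBM01} (Lemma~\ref{lem:sqrt}): either two subtrees dominate ($\sqrt{n_1}+\sqrt{n_2}\ge\sqrt n$, the \emph{binary} case), or some $\ell\ge 3$ subtrees are comparably large ($\ell\sqrt{n_\ell}\ge\sqrt n$, the \emph{balanced} case, with a further \emph{uniform} subcase when $\ell$ is large relative to $n_\ell$). The induction is on $n=|U'|$, not on depth, and the gain at a level is not a universal constant but is calibrated to the case: a coupon-collector $\log\ell$ in the uniform case, a balls-in-bins deviation (Lemma~\ref{lem:ballsBinsMin}) in the balanced case, and a binomial tail bound (Lemma~\ref{lem:binom}) in the binary case. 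The chunk idea you describe is used --- a chunk in $U_i$ is $\diam(\mathcal U)/(2\diam(\mathcal U_i))$ independent draws from $\mathcal D_i$ --- but Lemma~\ref{lem:combiningSubchunks} is not invoked here; the sizes are controlled directly. Your identified ``main obstacle'' (lining up constants) is real, but the missing ingredient is this structural case split, without which the recursion does not yield $\Omega(\log n)$.
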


Before delving into the proof of Lemma~\ref{lem:universalMain}, we state and prove its
consequences, the main results of this section.
\begin{theorem}\label{thm: main universal}
For all $n\in\N$, in \emph{every} $n$-point metric space ${\cal M}$, $C_{\rand}^{\MSS}({\cal M},n-1) = \Omega(\log n)$.
\end{theorem}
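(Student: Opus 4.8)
The plan is to deduce Theorem~\ref{thm: main universal} by combining Lemma~\ref{lem:universalMain} with the reduction machinery already set up: Corollary~\ref{cor: HST sufficient} (applied with $q\equiv 1$) reduces the universal lower bound on general metrics to a lower bound of $\Omega(\log n)$ for $\MSS$ on $n$-point $1$-HSTs, so the real work is to show that Lemma~\ref{lem:universalMain} yields such an HST bound. Concretely, given an $n$-point $1$-HST $\mathcal U'$, Lemma~\ref{lem:universalMain} produces a subspace $\mathcal U$ of the same diameter together with a distribution $\cD$ of request sequences such that (i) on a single draw from $\cD$, every deterministic algorithm pays at least $\diam(\mathcal U)$ in expectation while $\opt$ pays at most $\diam(\mathcal U)$, and (ii) concatenating $h$ i.i.d.\ draws, $\opt$'s expected cost is at most $\tfrac{h\cdot\diam(\mathcal U)}{\alpha\log|U'|}$ for all $h\ge h(\mathcal U)$.

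The key step is to chain many i.i.d.\ copies of $\cD$ and apply Yao's principle (Theorem~\ref{thm: Yao minimax}). First I would let $\rho=\rho_1\rho_2\dots\rho_h$ with each $\rho_i\sim\cD$ independent, $h\ge h(\mathcal U)$. For any deterministic algorithm $\alg$, I write its total cost as a telescoping sum $c_{\alg,s}(\rho)=\sum_{i=1}^h c_{\alg}(\rho_i\mid\rho_{\le i-1})$; since $\rho_i$ is independent of $\rho_{\le i-1}$ and the algorithm's configuration at the start of $\rho_i$ is some fixed point determined by $\rho_{\le i-1}$, property~\eqref{eq:universalCost} (instantiated with that configuration as the initial location $s$) gives $\E[c_{\alg}(\rho_i\mid\rho_{\le i-1})\mid \rho_{\le i-1}]\ge\diam(\mathcal U)$. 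Summing, $\E[c_{\alg,s}(\rho)]\ge h\cdot\diam(\mathcal U)$. On the other hand, by~\eqref{eq:universalCR}, $\E[c_{\opt,s}(\rho)]\le \tfrac{h\cdot\diam(\mathcal U)}{\alpha\log|U'|}$. Hence the ratio of expected online cost to expected $\opt$ cost is at least $\alpha\log|U'|$, and taking $h\to\infty$ makes $h\cdot\diam(\mathcal U)$ arbitrarily large so that the required additive slack $\kappa$ in the definition of $C_{\distr}$ is absorbed. This shows $C_{\distr}^{\MSS}(\mathcal U,|U|)\ge\alpha\log|U'|$, hence $C_{\rand}^{\MSS}(\mathcal U,|U|)\ge\alpha\log|U'|$ by Yao.

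It remains to convert this into a statement about the original HST $\mathcal U'$ on all $n$ points. Since $\mathcal U$ is a subspace of $\mathcal U'$, any $\MSS$ instance on $\mathcal U$ (with request sets of size $\le|U|$) is also an $\MSS$ instance on $\mathcal U'$, and an algorithm on $\mathcal U'$ can only do worse (the optimal offline cost is unchanged, as $\opt$ never benefits from leaving $U$ when all requests lie in $U$ — in a tree metric, projecting any trajectory onto the convex hull of the requested points does not increase cost). Therefore $C_{\rand}^{\MSS}(\mathcal U',n)\ge C_{\rand}^{\MSS}(\mathcal U', |U|)\ge C_{\rand}^{\MSS}(\mathcal U,|U|)\ge \alpha\log|U'|$, where the first inequality uses that larger request sets cannot help the algorithm and allow simulating the smaller-set instance. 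Since $|U'|=n$, this gives $C_{\rand}^{\MSS}(\mathcal U',n)=\Omega(\log n)$ for every $n$-point $1$-HST $\mathcal U'$. Feeding this into Corollary~\ref{cor: HST sufficient} with $q(n)\equiv 1$ (so $\log 2q(n)=\log 2=O(1)$) yields $C_{\rand}^{\MSS}(\mathcal M,n-1)=\Omega(\log n/\log 2)=\Omega(\log n)$ for every $n$-point metric space $\mathcal M$, which is the theorem.

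I expect the only genuinely delicate point to be the bookkeeping around the additive constant $\kappa$ and the "for every initial location $s$" quantifier when chaining copies: one must make sure that~\eqref{eq:universalCost} is used with the (random, $\rho_{\le i-1}$-measurable) position of the algorithm as the starting location, which is exactly why Lemma~\ref{lem:universalMain} is stated uniformly over all initial locations $s\in U$. Everything else — Yao's principle, the subspace/large-request-set monotonicity, and plugging $q\equiv1$ into Corollary~\ref{cor: HST sufficient} — is routine.
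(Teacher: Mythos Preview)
Your proposal is correct and follows essentially the same route as the paper: derive an $\Omega(\log n)$ distributional lower bound for $\MSS$ on every $n$-point $1$-HST from Lemma~\ref{lem:universalMain} (chaining $h$ i.i.d.\ copies of $\cD$, telescoping the online cost via~\eqref{eq:universalCost}, and bounding $\opt$ via~\eqref{eq:universalCR}), then invoke Yao's principle and plug $q\equiv 1$ into Corollary~\ref{cor: HST sufficient}. The paper's proof is terser and does not spell out the telescoping or the passage from the subspace $\mathcal U$ back to $\mathcal U'$, but your more explicit treatment of those points is sound.
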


\begin{proof}
By Corollary~\ref{cor: HST sufficient}, it is sufficient to prove lower bounds for all $1$-HSTs.
Given an $n$-point $1$-HST ${\cal U}' = (U',d)$, Lemma~\ref{lem:universalMain} asserts that
there exists $h\in\N$ and a probability distribution $\tilde{\rho}$ on request sequences (namely 
$\rho_1\rho_2\dots\rho_h$ in Inequality~\ref{eq:universalCR}), such that for every deterministic
algorithm $\alg$,
$$
\E[c_{\alg}(\rho):\ \rho\sim\tilde{\rho}]\ge h\cdot\diam({\cal U}')\ge 
       \alpha\cdot\log n\cdot \E[c_{\opt}(\rho):\ \rho\sim\tilde{\rho}],
$$
where $\alpha > 0$ is an absolute constant. Therefore,
$C_{\distr}^{\MSS}({\cal U}',n) = \Omega(\log n)$, and the proof is concluded by 
Theorem~\ref{thm: Yao minimax}.
\end{proof}
\begin{corollary}
	For all $k<n\in\mathbb N$, in \emph{every} $n$-point metric space $\cM$, $C_{\rand}^{\MTS}(\cM)=\Omega(\log n)$ and $C_{\rand}^{\kSRV}(\cM)=\Omega(\log k)$
\end{corollary}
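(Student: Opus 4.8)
The final statement to prove is the corollary deriving universal lower bounds for $\MTS$ and $\kSRV$ from the $\MSS$ bound in Theorem~\ref{thm: main universal}. This is a short reduction argument, so the plan is essentially to invoke the chain of inequalities already assembled in the preliminaries.

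The plan is to apply Proposition~\ref{pr: MTS-kSRV-MSS}, which states that for an $n$-point metric space $\cM$ with $k=n-1$, we have $C_{\rand}^{\MTS}(\cM)\ge C_{\rand}^{\kSRV}(\cM)\ge C_{\rand}^{\MSS}(\cM,w)$ for every $w\in\{1,\dots,k\}$, with equality in the second inequality when $w=k=n-1$. Combined with Theorem~\ref{thm: main universal}, which gives $C_{\rand}^{\MSS}(\cM,n-1)=\Omega(\log n)$ for every $n$-point metric space, this immediately yields $C_{\rand}^{\MTS}(\cM)\ge C_{\rand}^{\MSS}(\cM,n-1)=\Omega(\log n)$ and $C_{\rand}^{\kSRV}(\cM)\ge C_{\rand}^{\MSS}(\cM,n-1)=\Omega(\log n)$ whenever $k=n-1$.

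The only wrinkle is that the corollary claims $C_{\rand}^{\kSRV}(\cM)=\Omega(\log k)$ for \emph{all} $k<n$, not just $k=n-1$. To handle general $k<n$, I would restrict attention to a subset $\cM'\subseteq\cM$ of exactly $k+1$ points (for instance, by choosing any $k+1$ points of $\cM$ and taking the induced metric): the $k$-server problem on $\cM'$ is a special case of the $k$-server problem on $\cM$ (the adversary simply never requests points outside $\cM'$, and an optimal algorithm keeps all servers inside $\cM'$), so $C_{\rand}^{\kSRV}(\cM)\ge C_{\rand}^{\kSRV}(\cM')$. Since $\cM'$ has $n'=k+1$ points, the case $k=n'-1$ of the argument above gives $C_{\rand}^{\kSRV}(\cM')=\Omega(\log n')=\Omega(\log k)$, hence $C_{\rand}^{\kSRV}(\cM)=\Omega(\log k)$. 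Similarly, for $\MTS$ on an $n$-point space, restricting the state space to $n$ points is already the full space, so the $k=n-1$ instance of Theorem~\ref{thm: main universal} directly applies and gives $C_{\rand}^{\MTS}(\cM)=\Omega(\log n)$.

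There is no real obstacle here; the work was done in Theorem~\ref{thm: main universal} and in establishing Proposition~\ref{pr: MTS-kSRV-MSS}. The only point requiring a sentence of care is the passage from ``$k=n-1$'' to ``all $k<n$'' for the $\kSRV$ bound, via the trivial monotonicity of the $k$-server competitive ratio under passing to a subspace of the appropriate cardinality. I would write the proof as essentially two lines: first the $\MTS$ and $(n-1)$-server conclusions as immediate corollaries of Theorem~\ref{thm: main universal} and Proposition~\ref{pr: MTS-kSRV-MSS}, then the subspace reduction for general $k$.

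\begin{proof}
For the $\MTS$ bound, fix an $n$-point metric space $\cM$ and set $k=n-1$. By Proposition~\ref{pr: MTS-kSRV-MSS} (with $w=n-1$) and Theorem~\ref{thm: main universal},
\begin{align*}
C_{\rand}^{\MTS}(\cM)\ \ge\ C_{\rand}^{\MSS}(\cM,n-1)\ =\ \Omega(\log n).
\end{align*}
For the $\kSRV$ bound, let $k<n$ and let $\cM$ be any $n$-point metric space. Choose an arbitrary subset $M'\subseteq M$ of exactly $k+1$ points and let $\cM'=(M',d_{|M'})$ be the induced subspace. The $k$-server problem on $\cM'$ is a special case of the $k$-server problem on $\cM$: the adversary requests only points of $M'$, an optimal offline algorithm keeps all servers within $M'$ at no extra cost, and the online algorithm may likewise project any move onto $M'$ without increasing its cost. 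Hence $C_{\rand}^{\kSRV}(\cM)\ge C_{\rand}^{\kSRV}(\cM')$. Since $\cM'$ has $k+1$ points, applying Proposition~\ref{pr: MTS-kSRV-MSS} and Theorem~\ref{thm: main universal} to $\cM'$ (with the value ``$n$'' there equal to $k+1$) gives
\begin{align*}
C_{\rand}^{\kSRV}(\cM')\ \ge\ C_{\rand}^{\MSS}(\cM',k)\ =\ \Omega(\log(k+1))\ =\ \Omega(\log k).
\end{align*}
Combining the two displays yields $C_{\rand}^{\kSRV}(\cM)=\Omega(\log k)$, as claimed.
\end{proof}
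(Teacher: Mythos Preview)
Your proof is correct and follows the same approach as the paper: invoke Proposition~\ref{pr: MTS-kSRV-MSS} together with Theorem~\ref{thm: main universal}, and for general $k<n$ pass to an arbitrary $(k+1)$-point subspace, using that the $k$-server problem in $\cM$ is at least as hard as in any subspace. The paper's proof is the same argument compressed into a single sentence.
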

\begin{proof}
	From Proposition~\ref{pr: MTS-kSRV-MSS}, noting that the $k$-server problem in $\cM$ is at least as hard as it is in any subspace of size $k+1$.
\end{proof}
These universal lower bounds are asymptotically tight due to the matching upper bounds known in some special metrics.

We may assume without loss of generality that all the internal node weights in the HST representation
of ${\cal U}'$ are at least $1$ (otherwise, just scale all the weights uniformly). By first rounding weights 
to powers of $2$ and then contracting all edges whose incident vertices have the same weight, we may further 
assume without loss of generality (losing a factor $2$ in the constant $\alpha$) that the ratio between 
the weights of two adjacent internal nodes is always of the form $2^i$ for some integer $i\ge 1$. (In
particular, the modified HST is a $2$-HST.)

We prove Lemma~\ref{lem:universalMain} by induction on $n:=|U'|$. The constant $\alpha$ will be 
determined later. For $n=1$ the lemma is trivial. Suppose now that $n\ge2$. We prove the inductive
step in the following subsections. In Section~\ref{sec: universal construction} we construct inductively
the subspace ${\cal U}$ and the probability distribution $\cD$. In Section~\ref{sec: universal cost} we 
prove Inequality~\eqref{eq:universalCost}. In Sections~\ref{sec: universal uniform},~\ref{sec: universal balanced}, 
and~\ref{sec: universal binary} we prove Inequality~\eqref{eq:universalCR}, through a case analysis.

\subsection{Construction}\label{sec: universal construction}

Let ${\cal U}'_1,{\cal U}'_2,\dots$ be the ultrametric spaces corresponding to the subtrees rooted at 
children of the root of ${\cal U}'$, and let $n_i:=|U'_i|$, where ${\cal U}'_i = (U'_i,d_{|U'_i})$. We sort 
the subtrees so that $n_1\ge n_2\ge\dots$. There are two cases, defined by Lemma~\ref{lem:sqrt}: 
either $\sqrt{n_1}+\sqrt{n_2}\ge \sqrt{n}$, or there exists $\ell\ge3$ such that $\ell\cdot\sqrt{n_\ell}\ge\sqrt{n}$.  
In the former case, we let $\ell:=2$ and we call this the \emph{binary case}. The latter case is called 
the \emph{balanced case} because the proof will invoke the same lower bound $\alpha\log n_\ell$ 
for each of the first $\ell$ subtrees.

In the balanced case, if additionally $\log\ell\ge 2\alpha n_\ell$, then we set $U_i = \{x_i\}$ for all 
$i\in\{1,2,\dots,\ell\}$, where $x_i\in U'_i$ is an arbitrary point. We call this the special case of the 
balanced case the \emph{uniform case}. In all other cases, we set $U_i\subseteq U'_i$ to be the 
set of points of the subspace ${\cal U}_i$ of ${\cal U}'_i$ that the induction hypothesis stipulates,
for all $i\in\{1,2,\dots,\ell\}$. Finally, we let $U:=\bigcup_{i=1}^\ell U_i$. Notice that in the uniform 
case, ${\cal U}$ is an $\ell$-point uniform metric space.

For convenience, we will use in this proof the convention that each request specifies the set of 
points where the algorithm must \emph{not} be, i.e., the complement of the sets we used earlier. 
This means that if we issue an inductive request sequence in some $U_i$, then any algorithm 
located in some $U_j$ for $j\ne i$ does not incur any cost. Note that the entire proof is written
in a manner that restricts the request sequence to the ${\cal U}$ subspace. To apply the lower
bound to the original ${\cal U}'$ space, we need under the above convention to add to each
request in the sequence all the points in $U'\setminus U$.

If $|U_i|\ge 2$ (so that $\diam({\cal U}_i)>0$), let $\cD_i$ be the distribution on request sequences 
in $U_i$ induced by the induction hypothesis. In this case, we call a \emph{chunk in $U_i$} a 
concatenation of $\frac{\diam({\cal U})}{2\diam({\cal U}_i)}$ independent random request sequences 
from $\cD_i$. We call each such sequence from $\cD_i$ a \emph{subchunk}. If $|U_i|=1$, we define 
a \emph{chunk in $U_i$} to be either the empty request sequence or a single request to the singleton 
point in $U_i$, chosen with probabilitiy $1/2$ each. A random request sequence in $\cD$ is obtained 
by repeating the following $2\ell$ times independently: Choose $i\in\{1,\dots,\ell\}$ uniformly at random, 
then issue a chunk in $U_i$. See Figure~\ref{fig:chunks} for an illustration of these definitions.

\begin{figure}
	\begin{center}
		\includegraphics[width=0.35\textwidth]{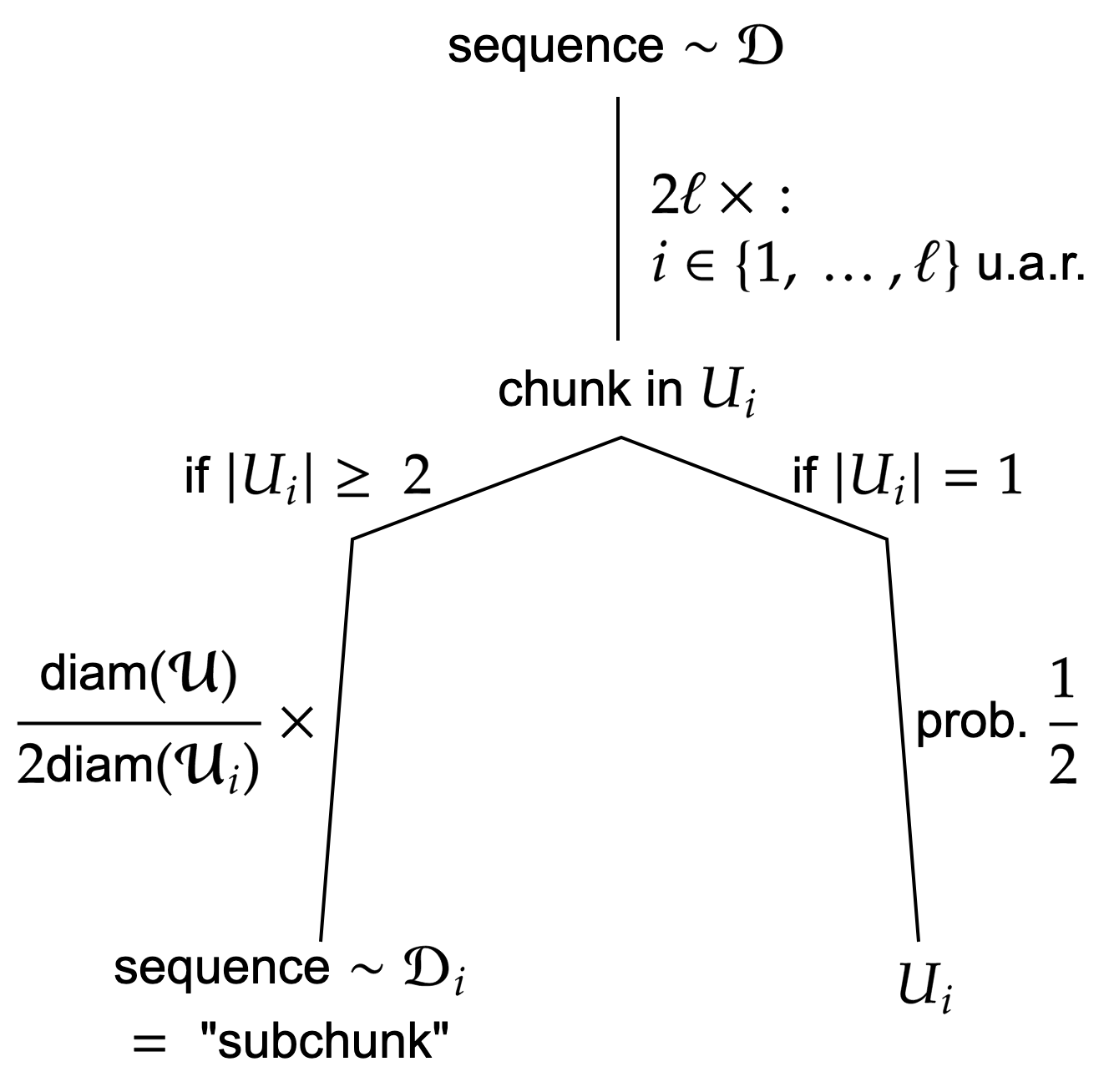}
	\end{center}
	\caption{Construction of sequences in $\cD$ from chunks and subchunks in the universal lower bound.}\label{fig:chunks}
\end{figure}

\subsection{\texorpdfstring{Proof of inequalities~\eqref{eq:universalCost}}{Proof of inequalities}}\label{sec: universal cost}

For the first inequality, it suffices to show that we can charge expected cost at least $\frac{\diam({\cal U})}{2\ell}$ 
to each chunk. Note that if the algorithm invokes the escape price of $2\diam({\cal U})$, the true cost on all 
subsequent chunks would be $0$. But in this case we charge only cost $\diam({\cal U})$ instead of the full
escape price $2\diam({\cal U})$ to the chunk on which the escape price is invoked. We can then use the
uncharged part of the escape price to charge $\frac{\diam({\cal U})}{2\ell}$ to each subsequent chunk. This charging
scheme does not overcharge the escape price, and guarantees that we can charge the claimed amount even 
to chunks starting after the algorithm has already escaped. It remains to consider chunks that are issued when 
the algorithm has not yet escaped.

Let ${\cal U}_i$ be the subtree where the online algorithm $\alg$ is located before a chunk is issued. With 
probability $\frac{1}{\ell}$, the chunk is issued in $U_i$, so it suffices to show that conditioned on this being 
the case, $\alg$ has to pay at least $\diam({\cal U})/2$ in expectation on this chunk. If $|U_i|=1$, this is trivial
(as $\alg$ has to move elsewhere in $U$). If $|U_i|\ge 2$, we argue that we can charge an expected cost 
of $\diam({\cal U}_i)$ to each of the $\frac{\diam({\cal U})}{2\diam({\cal U}_i)}$ subchunks of the chunk. If 
$\alg$ stays inside $U_i$ during the chunk, we charge to each subchunk the cost that is actually suffered 
on it. 

Now suppose that $\alg$ invokes the escape price $2\diam({\cal U})$ or switches to some $U_j$ with $j\ne i$.
We call the or-combination of these events the \emph{escape-or-switch} event. Such a move incurs a one-time 
cost of $\diam({\cal U})$. The reason is that this is the cost of switching from $U_i$ to $U_j$. Recall that in 
the case of escape we charge only $\diam({\cal U})$ of the actual escape cost $2\diam({\cal U})$ to this 
chunk, and the true cost on the remaining subchunks would be $0$. Of this cost $\diam({\cal U})$, we charge 
$2\diam({\cal U}_i)$ as escape cost to the subchunk where the escape-or-switch happens, and we charge 
$\diam({\cal U}_i)$ to all remaining at most $\frac{\diam({\cal U})}{2\diam({\cal U}_i)}-1$ subchunks belonging 
to this chunk (whose true cost would have been $0$). So, in total we charge at most 
$\frac{\diam({\cal U})}{2}+\diam({\cal U}_i)\le \diam({\cal U})$ for an escape-or-switch, and in particular we do 
not overcharge. 

Given this charging scheme, the induction hypothesis implies that the expected cost charged to each subchunk 
is at least $\diam({\cal U}_i)$. As there are $\frac{\diam({\cal U})}{2\diam({\cal U}_i)}$ subchunks in the chunk, 
the expected cost of a chunk in the subtree ${\cal U}_i$ where $\alg$ is located at the start of the chunk is at 
least $\diam({\cal U})/2$, as desired. This completes the proof of the first inequality in~\eqref{eq:universalCost}.

To see the second inequality in~\eqref{eq:universalCost}, consider the following algorithm (which is actually an 
online algorithm and therefore shows that our analysis for the first inequality was tight): Let ${\cal U}_i$ be the 
subtree where the algorithm is currently located. If $|U_i|\ge 2$, then the algorithm stays in $U_i$ and plays 
according to the induction hypothesis. If $|U_i|=1$, then the algorithm switches to a different $U_j$ only if $U_i$ 
is requested. For this algorithm, a chunk in the subtree ${\cal U}_i$ where the algorithm is located incurs expected 
cost $\diam({\cal U})/2$. As it happens twice in expectation during a random $\rho\sim\cD$ that a chunk is issued 
in the subtree where the algorithm is located, the second inequality in~\eqref{eq:universalCost} follows.

It remains to prove inequality~\eqref{eq:universalCR}. We will do so separately for the uniform, balanced 
non-uniform, and binary case.

\subsection{The uniform case}\label{sec: universal uniform}

We first consider the case that there exists $\ell\ge 3$ such that $\ell\cdot\sqrt{n_\ell}\ge\sqrt{n}$ and additionally 
$\log\ell\ge 2\alpha \log n_\ell$. Recall that in this case we choose ${\cal U}$ to be an $\ell$-point uniform metric. 
The request sequence $\rho_1\dots\rho_h$ for $(\rho_1,\dots,\rho_h)\sim\cD^h$ has that each $\rho_j$ is a sequence of at most $2\ell$ singleton point sets from $U$ that are chosen independently and uniformly 
at random.
For large $h$, this is precisely the sequence 
from~\cite{BLS87} that yields a lower bound of $H_{\ell-1}>\log(\ell-1)$, where $H_j=1+\frac 1 2+\frac 1 3+\cdots+\frac 1 j$, 
on the competitive ratio of $\MTS$ 
(and, in fact, $\MSS$) in uniform metric spaces.\footnote{The proof follows from a coupon collector argument.} The expected online cost on this 
sequence is precisely $h\cdot\diam({\cal U})$. 
We thus have for large $h$ that
\begin{align*}
        \frac{h\cdot\diam({\cal U})}{\E_{(\rho_1,\dots,\rho_h)\sim\cD^h}\left[c_{\opt,s}(\rho_1\rho_2\dots\rho_h)\right]} 
&\ge \log (\ell-1)\\
&\ge 2\alpha\log\ell + \alpha\log n_\ell \\
&= 2\alpha\log(\ell\sqrt n_\ell)\\
&\ge \alpha\log n,
\end{align*}
where the second inequality holds for a sufficiently small constant $\alpha$ and uses the assumption 
$\log\ell\ge 2\alpha \log n_\ell$, and the last inequality uses $\ell\sqrt{n_\ell}\ge \sqrt{n}$. This proves
Inequality~\eqref{eq:universalCR} in the uniform case.

\subsection{The balanced non-uniform case}\label{sec: universal balanced}

Here, we consider the case that there exists $\ell\ge 3$ such that $\ell\cdot\sqrt{n_\ell}\ge\sqrt{n}$ and 
$\log\ell< 2\alpha \log n_\ell$. In particular, we have $n_\ell>1$, so none of the $U_i$-s is a singleton.

Let
\begin{align}
\mu:=\left\lceil\frac{\alpha\cdot\log^2 n_\ell}{\log\ell}\right\rceil\in\left[\log\ell, \frac{2\alpha\cdot\log^2 n_\ell}{\log\ell}\right]\label{eq:universalBalancedm}
\end{align}
where the lower bound uses $\log\ell< 2\alpha \log n_\ell$ and holds for sufficiently small $\alpha$ 
(namely, $\alpha\le 1/4$), and the upper bound follows from the lower bound $\log\ell\ge \log3\ge 1$.

Note that the request sequence $\rho_1\dots\rho_h$ is simply a concatenation of $2h\ell$ independently 
sampled chunks, each in a random $U_i$. Let $k,j\in\N_0$ be the numbers determined by the relations 
$k\mu+j=h$ and $j<\mu$.  We decompose $\rho_1\dots\rho_h$ into $k+1$ \emph{phases} as follows: 
For $m=1,\dots, k$, the $m$th phase is the subsequence $\rho_{(m-1)\mu+1}\rho_{(m-1)\mu+2}\dots\rho_{m\mu}$. 
The last phase is the remaining suffix $\rho_{h-j+1}\rho_{h-j+2}\dots\rho_{h}$. Thus, the first $k$ phases 
consist of $2\mu\ell$ chunks each, and the last phase consists of $2j\ell$ chunks.

Consider the following type of offline algorithm, whose cost we will use as an upper bound on the optimal 
cost: At the beginning of a phase, move to the subspace ${\cal U}_i$ where the smallest number of 
chunks is played during the phase, and stay in ${\cal U}_i$ while serving all requests of that phase. We 
will separately analyze the offline \emph{switching cost} for switching between different subspaces ${\cal U}_i$, 
and the offline \emph{local cost} incurred within the ${\cal U}_i$-s.

At the start of each of the $k+1$ phases, the offline algorithm switches to a different ${\cal U}_i$ for cost 
$\diam({\cal U})$ with probability at most $1-\frac{1}{\ell}$. Thus, the expected offline switching cost is at 
most
\begin{align}
       (k+1)\left(1-\frac{1}{\ell}\right)\diam({\cal U})
&\le \left(\frac{h}{\mu}+1\right)\left(1-\frac{1}{\ell}\right)\diam({\cal U})\notag\\
&\le \frac{h}{\mu}\diam({\cal U})\notag\\
&\le \frac{\log\ell}{\alpha\log^2 n_\ell} \cdot h\cdot\diam({\cal U}),\label{eq:universalSwitch}
\end{align}
where the second inequality holds for any $h\ge \mu\ell$, and the last inequality follows by definition of $\mu$.

Let $h_i$ be the total number of chunks played in $U_i$ while the offline algorithm resides in $U_i$. Our 
first goal is to obtain an upper bound on $\E\left[\sum_i h_i\right]$, i.e., the expected number of chunks 
contributing to the offline local cost. In each of the first $k$ phases, the total number of chunks is $2\mu\ell$, 
and thus by Lemma~\ref{lem:ballsBinsMin} the expected number of chunks played in the subspace in which 
the offline algorithm resides during that phase is at most
\begin{align*}
2\mu - c\sqrt{\mu\log \ell}
\end{align*}
for some constant $c>0$. In the last phase, the according quantity is at most $2j$. Summing over all phases, 
we get
\begin{align}
\E\left[\sum_i h_i\right] &\le k\cdot\left(2\mu-c\sqrt{\mu\log \ell}\right) + 2j\notag\\
&\le 2k\mu+2j - \frac{c}{2}\sqrt\frac{\log\ell}{\mu}\left(k\mu+j\right)\notag\\
&= 2h\cdot\left(1 - \frac{c}{4}\sqrt\frac{\log\ell}{\mu}\right)\notag\\
&\le 2h\cdot\left(1 - \frac{c}{\sqrt{32\alpha}}\cdot \frac{\log\ell}{\log n_\ell}\right)\label{eq:universalEhi}
\end{align}
where the second inequality holds for $h\ge \mu$, since then $k\mu\ge\mu\ge j$, the equation uses $k\mu+j=h$, 
and the inequality uses the upper bound in Inequality~\eqref{eq:universalBalancedm}.

Since each chunk in $U_i$ consists of $\frac{\diam({\cal U})}{2\diam({\cal U}_i)}$ independent samples 
from $\cD_i$, the subsequence of $\rho_1\dots\rho_h$ that contributes to the local offline cost in ${\cal U}_i$ 
consists of $\frac{h_i\diam({\cal U})}{2\diam({\cal U}_i)}$ independent samples from $\cD_i$. Although these 
usually occur as several strings with gaps between them, we can still apply the induction hypothesis on 
${\cal U}_i$ to bound the offline local cost in ${\cal U}_i$, because the offline algorithm can always return 
to the point where it left off when switching back to $U_i$. The induction hypothesis shows that the expected 
offline local cost within ${\cal U}_i$ is at most
\begin{align*}
&\frac{\diam({\cal U})}{2\diam({\cal U}_i)}\cdot\bigg(\Pr[h_i\ge h({\cal U}_i)]\cdot \E[h_i\mid h_i\ge 
h({\cal U}_i)]\cdot \frac{\diam({\cal U}_i)}{\alpha\log n_i}\\
& \qquad\qquad\qquad+ \Pr[h_i<h({\cal U}_i)]\cdot \E[h_i\mid h_i<h({\cal U}_i)]\cdot \diam({\cal U}_i)\bigg)\\
&\le \frac{\diam({\cal U})}{2}\left(\frac{\E[h_i]}{\alpha\log n_\ell} \quad + \quad h({\cal U}_i)\right)\\
&\le \frac{\E[h_i]\diam({\cal U})}{2\alpha\log n_\ell}\cdot(1+o(1)),
\end{align*}
where the first inequality uses that $n_i\ge n_\ell$ (since $i\le \ell$), and $o(1)$ describes a term that tends 
to $0$ as $h\to\infty$. (This is valid since $\E[h_i]\to\infty$ as $h\to\infty$, but $h({\cal U}_i)$ is fixed.) Combined 
with Equation~\eqref{eq:universalEhi} and the upper bound of Inequality~\eqref{eq:universalSwitch} on the 
switching cost, we get for $h$ sufficiently large that
\begin{align*}
\E\left[c_{\opt,s}(\rho_1\rho_2\dots\rho_h)\right] &\le \frac{\log\ell}{\alpha\log^2 n_\ell} \cdot h\cdot \diam({\cal U}) \quad+\quad \frac{h\cdot\diam({\cal U})}{\alpha\log n_\ell}\left(1 - \frac{c}{6\sqrt{\alpha}} \frac{\log\ell}{\log n_\ell}\right)\\
&\le \frac{h\cdot\diam({\cal U})}{\alpha\log n_\ell}\left(\frac{\log\ell}{\log n_\ell} + 1-\frac{c}{6\sqrt{\alpha}}\frac{\log \ell}{\log n_\ell}\right)\\
&\le \frac{h\cdot\diam({\cal U})}{\alpha\log n_\ell}\left(1-\frac{2\log \ell}{\log n_\ell}\right),
\end{align*}
where the last inequality holds provided the constant $\alpha$ is sufficiently small.
From this, we conclude~\eqref{eq:universalCR} via
\begin{align*}
\frac{h\cdot\diam({\cal U})}{\alpha\cdot \E\left[c_{\opt,s}(\rho_1\rho_2\dots\rho_h)\right]}&\ge \frac{\log n_\ell}{1-\frac{2\log \ell}{\log n_\ell}}\\
&\ge \log n_\ell+2\log\ell\\
&= 2\log (\ell \sqrt{n_\ell})\\
&\ge \log n,
\end{align*}
where the second inequality uses $\frac{1}{1-x}\ge 1+x$ for all $x<1$, and the last inequality 
uses $\ell\sqrt{n_\ell}\ge \sqrt n$.

\subsection{The binary case}\label{sec: universal binary}

Finally, we consider the case that $\sqrt{n_1}+\sqrt{n_2}\ge\sqrt{n}$. Recall that $n_1\ge n_2$. 
Since $\alpha\log n\le 2\alpha\log(\sqrt{n_1}+\sqrt{n_2})\le 2\alpha\log(2\sqrt{n_1})=\alpha\log(4n_1)$, 
we may assume that $\alpha\log(4n_1)>1$, since otherwise Inequality~\eqref{eq:universalCR} follows 
trivially from the second inequality of~\eqref{eq:universalCost}. By choosing $\alpha\le 1/\log(16)$, this 
also implies $n_1\ge 4$, which together implies
\begin{align}
\log n_1>\frac{1}{2\alpha}.\label{eq:univeralBinaryAssumption}
\end{align}
Let
\begin{align}
\delta_1&:=\frac{\max\left\{\frac{1}{\sqrt{\alpha}}, \log\frac{n_1}{n_2}\right\}}{\log n_1}\in(0,1]\notag\\
\delta_2&:= 1-(1-\delta_1)\frac{\log n_2}{\log n_1}=\frac{\log\frac{n_1}{n_2}}{\log n_1} + \delta_1\frac{\log n_2}{\log n_1}\in[\delta_1,2\delta_1]\notag\\
\mu &:= \left\lceil\frac{8\alpha\log n_1}{\delta_1}\right\rceil\in\left[4, \frac{10\alpha\log n_1}{\delta_1}\right],\label{eq:universalBinaryMu}
\end{align}
where the bounds on $\delta_1$ and $\mu$ hold due to Inequality~\eqref{eq:univeralBinaryAssumption} for 
sufficiently small $\alpha$, and the bounds on $\delta_2$ hold because $n_1\ge n_2$.

Let $k,j\in\N_0$ be the numbers satisfying $k\mu+j=h$ and $j< \mu$. We decompose the sequence 
$\rho_1\dots\rho_h$ into $k+1$ phases in the same way as in the balanced non-uniform case. So each 
of the first $k$ phases consists of $4\mu$ chunks, of which $2\mu$ are played in each of $U_1$ and $U_2$ 
in expectation. We consider the following offline algorithm: By default, it stays in $U_1$, but for any of the $k$ 
complete phases where the number of chunks in $U_2$ is at most 
$(1-\delta_1)\frac{\log n_2}{\log n_1}\cdot2\mu=(1-\delta_2)2\mu$, the algorithm moves to $U_2$ at the beginning 
of the phase and back to $U_1$ at the end of the phase. Note that this is feasible even if $|U_2|=1$, since then 
$\log n_2=0$ and the algorithm moves to $U_2$ only in phases that have \emph{no} chunk in $U_2$. In phase 
$k+1$, the offline algorithm is in $U_1$.

Denote by $p>0$ the probability of switching to $U_2$ in a given phase. (Clearly, this probability is the same for 
each of the first $k$ phases.) In expectation, in each of the first $k$ phases the offline algorithm pays 
$2p\diam({\cal U})$ for switching to $U_2$ and back, and it may pay an additional $\diam({\cal U})$ to switch to 
$U_1$ before the first phase. Thus, the expected offline switching cost is at most
\begin{align}
(2pk+1)\diam({\cal U})&\le \frac{2ph\diam({\cal U})}{\mu}(1+o(1))\notag\\
&\le \frac{\delta_1ph\diam({\cal U})}{4\alpha\log n_1}(1+o(1)),\label{eq:universalBinarySwitching}
\end{align}
where $o(1)$ denotes again a term that tends to $0$ as $h\to\infty$, and the second inequality uses the 
definition of $\mu$.

To bound the offline local cost, let $h_1$ and $h_2$ be again the number of chunks played in $U_1$ 
and $U_2$, respectively, while the offline algorithm is in that subspace. In each of the first $k$ phases, 
$h_1$ grows by at most $(1-p)2\mu$ in expectation and $h_2$ grows by at most 
$p(1-\delta_1)\frac{\log n_2}{\log n_1}\cdot2\mu$ in expectation. In phase $k+1$, $h_1$ grows by another 
$2j<2\mu$ in expectation and $h_2$ stays put. Thus, for the entire request sequence $\rho_1\dots\rho_h$, 
we have
\begin{align*}
\E[h_1]&\le k(1-p)2\mu + 2\mu\le (1-p) 2h(1+o(1))\\
\E[h_2]&\le kp(1-\delta_1)\frac{\log n_2}{\log n_1}2\mu \le 2hp(1-\delta_1)\frac{\log n_2}{\log n_1}.
\end{align*}
The induction hypothesis shows that the expected offline local cost within a non-singleton $U_i$ is at most
\begin{align*}
&\frac{\diam({\cal U})}{2\diam({\cal U}_i)}\cdot\bigg(\Pr[h_i\ge h({\cal U}_i)]\cdot \E[h_i\mid h_i\ge 
h({\cal U}_i)]\cdot \frac{\diam({\cal U}_i)}{\alpha\log n_i}\\
& \qquad\qquad\qquad+ \Pr[h_i<h({\cal U}_i)]\cdot \E[h_i\mid h_i<h({\cal U}_i)]\cdot \diam({\cal U}_i)\bigg)\\
&\le \frac{\diam({\cal U})}{2}\left(\frac{\E[h_i]}{\alpha\log n_i} \quad + \quad h({\cal U}_i)\right)\\
&\le \frac{\E[h_i]\diam({\cal U})}{2\alpha\log n_i}\cdot(1+o(1)),
\end{align*}
If $U_i$ is a singleton, which is only possible for $i=2$, then the local cost within $U_2$ is $0$.

Combined with the bounds on $\E[h_1]$ and $\E[h_2]$ and the bound~\eqref{eq:universalBinarySwitching} 
on the switching cost, we can bound the total expected offline cost by
\begin{align}
&\E\left[c_{\opt,s}(\rho_1\rho_2\dots\rho_h)\right] \notag\\
&\le \left(\frac{\delta_1ph\diam({\cal U})}{4\alpha\log n_1} + 
     \frac{(1-p) h\diam({\cal U})}{\alpha\log n_1} + \frac{hp(1-\delta_1)\diam({\cal U})}{\alpha\log n_1}\right)(1+o(1))\notag\\
&\le \frac{h\diam({\cal U})}{\alpha\log n_1}\left(\frac{\delta_1p}{4} + 1-p\delta_1\right)(1+o(1))\notag\\
&\le \frac{h\diam({\cal U})}{\alpha\log n_1}\left(1-\frac{p\delta_1}{2}\right),\notag
\end{align}
where the last inequality holds for $h$ sufficiently large so that the $o(1)$ term is very small. Therefore,
\begin{align}
        \frac{h\diam({\cal U})}{\alpha\cdot\E\left[c_{\opt,s}(\rho_1\rho_2\dots\rho_h)\right]} 
&\ge \frac{\log n_1}{1-\frac{p\delta_1}{2}}\notag \\
&\ge \log n_1 + \frac{p}{2\sqrt{\alpha}},\label{eq:universalBinaryCR}
\end{align}
where the last inequality uses $\delta_1\ge \frac{1}{\sqrt{\alpha}\log n_1}$.

To conclude the lemma, we need to lower bound the latter quantity by $\log n$, which requires a lower bound on $p$. By Lemma~\ref{lem:binom}, there is a constant $\lambda>0$ such that
\begin{align*}
p\ge \lambda e^{-\delta_2^22\mu/\lambda}.
\end{align*}
Using $\delta_2\le 2\delta_1$ and the upper bound on $\mu$ from~\eqref{eq:universalBinaryMu}, we get
\begin{align*}
p&\ge \lambda \exp\left(-\frac{80\delta_1\alpha\log n_1}{\lambda}\right)\\
&\ge \lambda\exp\left(-\frac{80\sqrt{\alpha}}{\lambda} - \frac{80\alpha}{\lambda}\log\frac{n_1}{n_2}\right)\\
&\ge 4\sqrt{\alpha}\sqrt{\frac{n_2}{n_1}},
\end{align*}
where the second inequality uses the definition of $\delta_1$ and $\max\{x,y\}\le x+y$ for $x,y\ge 0$, and 
the last inequality holds provided the constant $\alpha$ is sufficiently small (i.e., satisfying $\frac{80\sqrt{\alpha}}{\lambda}\le \log\frac{\lambda}{4\sqrt{\alpha}}$ and $\frac{80\alpha}{\lambda}\le \frac{1}{2}$). Plugging this bound on $p$ into~\eqref{eq:universalBinaryCR}, we get
\begin{align*}
\frac{h\diam({\cal U})}{\alpha\cdot\E\left[c_{\opt,s}(\rho_1\rho_2\dots\rho_h)\right]} &\ge \log n_1 + 2\sqrt\frac{n_2}{n_1}\\
&\ge \log n_1 + 2\log\left(1+\sqrt\frac{n_2}{n_1}\right)\\
&=2\log(\sqrt{n_1}+\sqrt{n_2})\\
&\ge \log n,
\end{align*}
completing the proof of Lemma~\ref{lem:universalMain}.

\section*{Acknowledgements}
We wish to thank Yossi Azar, Yair Bartal, and Manor Mendel for pointing out relevant 
references, and the anonymous reviewers of STOC '23 for useful suggestions.

\appendix
\section{Appendix}

\subsection{Metric spaces}\label{sec: metrics}

A {\em metric space} ${\cal M}$ is a pair $(M,d)$, where $M$ is a set and $d:M\times M\rightarrow [0,\infty)$
such that ($i$) $d$ is symmetric: $d(x,y)=d(y,x)$ for all $x,y\in M$, ($ii$) $d(x,y) = 0$ iff $x=y$, and
($iii$) $d$ satisfies the triangle inequality: $d(x,z)\le d(x,y)+d(y,z)$ for all $x,y,z\in M$. Here we are
concerned primarily with finite metric spaces ($M$ is a finite set). 

Let ${\cal M} = (M,d)$ be a metric space. The {\em diameter} of ${\cal M}$,
denoted $\diam({\cal M})$ is the supremum over $x,y\in M$ of $d(x,y)$.
A metric space ${\cal M}' = (M',d')$ is a subspace of ${\cal M} = (M,d)$ iff $M'\subset M$
and $d'$ is the restriction $d_{|M'}$ of $d$ to $M'$.

Let ${\cal M} = (M,d)$ and ${\cal M}' = (M',d')$ be two finite metric spaces with $|M| = |M'|$, and let 
$\phi: M\rightarrow M'$ be a bijection. The {\em Lipschitz constant} of $\phi$ is 
$$
\|\phi\|_{\Lip} = \max_{x\ne y\in M} \frac{d'(\phi(x),\phi(y))}{d(x,y)}.
$$
This is the maximum relative expansion of a distance under $\phi$. Similarly, we can define the Lipschitz 
constant of the inverse mapping $\phi^{-1}$ as the maximum over the reciprocal expressions. It measures 
the maximum relative contraction of a distance under $\phi$. The (bi-Lipschitz) {\em distortion} of 
$\phi$ is $\|\phi\|_{\Lip} \cdot\|\phi^{-1}\|_{\Lip}$. As the term suggests, it measures the relative maximum
distortion of distance, up to uniform scaling, of the metric space ${\cal M}$ under the mapping $\phi$.
The {\em bi-Lipschitz distortion} between ${\cal M}$ and ${\cal M'}$ is the minimum over bijections 
$\phi:M\rightarrow M'$ of $\|\phi\|_{\Lip} \cdot\|\phi^{-1}\|_{\Lip}$. Notice that this number is always
at least $1$. If it is equal to $1$, we say that the two spaces are isometric and that the minimizing
$\phi$ is an isometry.

A metric space ${\cal U} = (U,d)$ is called an {\em Urysohn universal} space iff it is separable
(contains a countable dense subset) and complete (every Cauchy sequence in $U$ has a limit
in $U$) and satisfies the following property. For every finite metric space ${\cal M} = (M,d')$
and for every $x\in M$, for every injection $\phi: M\setminus\{x\}\rightarrow U$ which is
an isometry (between the relevant subspaces of ${\cal M}$ and ${\cal U}$), there exists an extension 
$\phi'$ of $\phi$ to $x$ which is an isometry (between ${\cal M}$ and the relevant subspace of 
${\cal U}$). Urysohn~\cite{Ury27} proved that an Urysohn universal space exists and is unique 
up to isometry.

A metric space ${\cal U} = (U,d)$ is called an {\em ultrametric space} iff $d$ satisfies an
inequality stronger than the triangle inequality, namely, the following. For every 
three points $x,y,z\in U$, it holds that $d(x,z)\le\max\{d(x,y),d(y,z)\}$. A metric
$d$ that satisfies this inequality is called an ultrametric. Every finite ultrametric
space can be represented by a node-weighted rooted tree structure which is 
called a {\em hierarchically separated tree}, abbreviated HST. Conversely, every
HST represents a finite ultrametric space. The weights on the internal nodes of
an HST are positive and non-increasing along any path from root to leaf. The leaves 
of the HST all have weight $0$, and they represent the points of the space. The 
distance between two points is the weight of their least common ancestor. An HST 
in which the weight of each internal node is at least a factor of $q$ larger than the 
weight of any of its children is called a $q$-HST. Every HST is in particular a $1$-HST.
Thus, the notions of a finite ultrametric space and of a $1$-HST are equivalent. For
every finite ultrametric space there exists an isometric $1$-HST, and vice versa.
(Note: an HST is a representation of a metric over the set of its leaves; the internal
nodes are not points in this metric space.)

\subsection{Some inequalities}

\begin{theorem}[Berry-Esseen Inequality]\label{thm: BE}
There exists an absolute constant $c$ such that the following holds.
Let $X_1,X_2,\dots,X_n$ be i.i.d. random variables, each having expectation $0$,
standard deviation $\sigma > 0$, and third moment $\rho < \infty$. Let $F$ denote
the cumulative distribution function of $\frac{1}{\sigma\sqrt{n}} \sum_{i=1}^n X_i$,
and let $\Phi$ denote the cumulative distribution function of the standard normal 
distribution ${\cal N}(0,1)$ with expectation $0$ and standard deviation $1$. Then, 
for all $x\in\R$, $\left|F(x) - \Phi(x)\right|\le \frac{c\rho}{\sigma^3\sqrt{n}}$.
\end{theorem}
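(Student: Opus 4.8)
This is the classical Berry--Esseen theorem, which the paper invokes as a standard tool and does not reprove; the route I would follow is the standard Fourier-analytic one via Esseen's smoothing inequality. First I would normalise: replacing each $X_i$ by $X_i/\sigma$ leaves $F$ unchanged, since $\frac{1}{\sigma\sqrt n}\sum_{i} X_i=\frac{1}{\sqrt n}\sum_{i} X_i/\sigma$, and rescales the third moment to $\rho/\sigma^3$, so it suffices to treat $\sigma=1$ and prove $|F(x)-\Phi(x)|\le c\rho/\sqrt n$ where now $\rho=\E|X_1|^3\ge(\E X_1^2)^{3/2}=1$ by the power--mean inequality. Since two distribution functions differ by at most $1$, we may also assume that $\rho/\sqrt n$ is below any desired absolute threshold, as otherwise the inequality is trivial once $c$ is taken large.

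The workhorse is Esseen's smoothing lemma: for a distribution function $F$ and a function $G$ with $G(-\infty)=0$, $G(+\infty)=1$, $\sup|G'|\le m$, and characteristic functions $\hat f,\hat g$ respectively, one has for every $T>0$
$$\sup_x|F(x)-G(x)|\ \le\ \frac{1}{\pi}\int_{-T}^{T}\left|\frac{\hat f(t)-\hat g(t)}{t}\right|dt\ +\ \frac{C_0\,m}{T},$$
with $C_0$ an absolute constant. I would apply this with $G=\Phi$ (so $m=1/\sqrt{2\pi}$ and $\hat g(t)=e^{-t^2/2}$) and with a cutoff $T$ of order $\sqrt n/\rho$, to be pinned down at the end. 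Writing $\phi(t)=\E e^{itX_1}$, the characteristic function of the normalised sum is $\hat f(t)=\phi(t/\sqrt n)^n$.

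The crux is the pointwise estimate of $\hat f(t)-\hat g(t)$ on $[-T,T]$. A Taylor expansion at $0$, using $\E X_1=0$, $\E X_1^2=1$ and $\E|X_1|^3=\rho$, gives $\phi(s)=1-\tfrac{s^2}{2}+\theta(s)$ with $|\theta(s)|\le\tfrac{\rho|s|^3}{6}$. Restricting to $|t|\le T$ with $T\le c_1\sqrt n/\rho$ forces $s=t/\sqrt n$ to satisfy $|s|\le c_1/\rho\le c_1$, hence (choosing $c_1$ small) $\phi(s)$ stays in a fixed neighbourhood of $1$; taking logarithms, $\phi(s)^n=\exp\!\big(-\tfrac{t^2}{2}+R\big)$ where $|R|\le C_2\rho|t|^3/\sqrt n$ and, importantly, $\Re\big(-\tfrac{t^2}{2}+R\big)\le-\tfrac{t^2}{4}$ on this range. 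Combining with $|e^a-e^b|\le|a-b|e^{\max(\Re a,\Re b)}$ yields $|\hat f(t)-\hat g(t)|\le C_3\,\frac{\rho|t|^3}{\sqrt n}\,e^{-t^2/4}$. Substituting into the smoothing lemma, the integral is at most $\frac{C_3\rho}{\pi\sqrt n}\int_{\R}t^2e^{-t^2/4}\,dt=O(\rho/\sqrt n)$, and the tail term $C_0m/T=O(\rho/\sqrt n)$ by the choice of $T$; adding the two gives $\sup_x|F-\Phi|\le c\rho/\sqrt n$.

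The step I expect to be the genuine obstacle is exactly this uniform control of $\phi(s)^n$ out to $|t|$ of order $\sqrt n/\rho$ rather than merely in an $O(1)$-window around $0$: one must choose $c_1$ small enough that $\log\phi(s)$ is defined and its cubic remainder is not amplified by the $n$-th power, while simultaneously keeping $T$ large enough that $C_0m/T$ remains $O(\rho/\sqrt n)$ — the two requirements are compatible precisely because $\rho\ge1$. A clean alternative that sidesteps characteristic functions is Stein's method: solve the Stein equation $f'(w)-wf(w)=\mathbf{1}_{(-\infty,x]}(w)-\Phi(x)$, bound $f$ and $f'$ (the jump of $f'$ at $x$ being where the real work lies), and estimate $\E[f'(W)-Wf(W)]$ for $W=\frac{1}{\sqrt n}\sum_i X_i$ through a leave-one-out expansion exploiting independence; this likewise delivers the rate $c\rho/(\sigma^3\sqrt n)$.
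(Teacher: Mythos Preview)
Your proposal is correct, and you rightly note at the outset that the paper does not reprove this classical result: it is stated in the appendix without proof and invoked only in passing (in the Stage~3 analysis of Section~\ref{sec: basic}). Your Fourier-analytic sketch via Esseen's smoothing lemma is the standard textbook route and is sound; since the paper offers no proof of its own, there is nothing to compare against.
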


The following two Lemmas were proved in~\cite{BBM01}.
\begin{lemma}[{\cite[Proposition 11]{BBM01}}]\label{lem:sqrt}
Let $(n_i)_{i\ge 1}$ be a non-increasing sequence of positive real numbers such that 
$n:=\sum_{i}n_i<\infty$. Then either $\sqrt{n_1}+\sqrt{n_2}\ge \sqrt{n}$ or there exists 
$\ell\ge3$ such that $\ell\cdot\sqrt{n_\ell}\ge\sqrt{n}$.
\end{lemma}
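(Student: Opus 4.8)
The plan is to argue by contradiction, so suppose that $\sqrt{n_1}+\sqrt{n_2}<\sqrt n$ and that $\ell\sqrt{n_\ell}<\sqrt n$ for every $\ell\ge3$. After rescaling all the $n_i$ by $1/n$ we may assume $n=\sum_{i\ge1}n_i=1$, so the hypotheses become $\sqrt{n_1}+\sqrt{n_2}<1$ and $n_\ell<1/\ell^2$ for all $\ell\ge3$, and the goal is to derive the contradiction $\sum_i n_i<1$. First I would record two easy consequences: $n_2>0$ (otherwise monotonicity forces $n_i=0$ for $i\ge2$, whence $1=n=n_1$ and $\sqrt{n_1}+\sqrt{n_2}=1$, a contradiction), and, writing $t:=\sqrt{n_2}$, that $t\in(0,1/2)$ because $2\sqrt{n_2}\le\sqrt{n_1}+\sqrt{n_2}<1$. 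From $\sqrt{n_1}<1-t$ we get $n_1<(1-t)^2$, while monotonicity together with the tail hypothesis gives $n_\ell\le\min\{t^2,1/\ell^2\}$ for every $\ell\ge3$.

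The heart of the argument is a tight estimate of $\sum_{\ell\ge3}\min\{t^2,1/\ell^2\}$. Put $L:=\lfloor1/t\rfloor$, so $L\ge2$; then $1/\ell^2\ge t^2$ for $3\le\ell\le L$ and $1/\ell^2<t^2$ for $\ell\ge L+1$, hence this sum equals $(L-2)t^2+\sum_{\ell>L}1/\ell^2$. Plugging the bounds in, $1=n<(1-t)^2+t^2+(L-2)t^2+\sum_{\ell>L}1/\ell^2=1-2t+Lt^2+\sum_{\ell>L}1/\ell^2$, so it suffices to prove $Lt^2+\sum_{\ell>L}1/\ell^2\le2t$. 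For the tail I would use the telescoping estimate $1/\ell^2\le\frac1{\ell-1/2}-\frac1{\ell+1/2}$, which gives $\sum_{\ell>L}1/\ell^2\le\frac1{L+1/2}$, so it remains to check $\frac1{L+1/2}\le t(2-Lt)$. Writing $1/t=L+\theta$ with $\theta\in[0,1)$, the right-hand side equals $\frac{L+2\theta}{(L+\theta)^2}$, and clearing denominators reduces the claim to $(L+\theta)^2\le(L+\tfrac12)(L+2\theta)$, i.e.\ $\theta^2\le\tfrac L2+\theta$, which holds since $\theta^2-\theta\le0$ on $[0,1]$ and $L\ge2$. This yields $n<1$, contradicting $n=1$, and completes the proof.

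The step I expect to be the obstacle is precisely the tail estimate. The obvious bound $\sum_{\ell>L}1/\ell^2<1/L$ is just barely insufficient: it reduces the target inequality to $\theta^2\le0$, which fails whenever $1/t$ is not an integer. The refinement to $\frac1{L+1/2}$ (via $\frac1{\ell-1/2}-\frac1{\ell+1/2}=\frac1{\ell^2-1/4}\ge\frac1{\ell^2}$) supplies exactly the extra $\Theta(1/L^2)$ of slack needed to absorb the $\theta^2$ term and make the one-variable inequality close. Everything else—splitting the series at $L=\lfloor1/t\rfloor$, tracking which of $t^2$ and $1/\ell^2$ is the binding bound, and the final algebra in $\theta$—is routine bookkeeping.
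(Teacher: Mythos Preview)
Your argument is correct. The paper does not supply its own proof of this lemma; it merely quotes it from \cite{BBM01}, so there is nothing in the paper to compare against beyond noting that your self-contained derivation establishes the cited statement. Your identification of the delicate step is accurate: the crude bound $\sum_{\ell>L}1/\ell^2<1/L$ leaves exactly zero slack (it forces $\theta^2\le 0$), and the telescoping refinement $\sum_{\ell>L}1/\ell^2\le \frac{1}{L+1/2}$ is precisely what makes the final one-variable inequality $\theta^2-\theta\le L/2$ go through for all $\theta\in[0,1)$ and $L\ge 2$.
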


\begin{lemma}[{\cite[Lemma 30]{BBM01}}]\label{lem:binom}
There exists a constant $\lambda\in(0,1]$ such that for any binomial random variable $X$ 
with $p\le 0.5$ and mean $\mu\ge 4$ and any $\delta\in[0,1]$,
\begin{align*}
	\Pr[X\le(1-\delta)\mu]\ge \lambda e^{-\delta^2\mu/\lambda}.
\end{align*}
\end{lemma}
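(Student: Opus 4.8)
The plan is to prove the claimed lower bound on the lower tail of $X\sim\mathrm{Bin}(n,p)$ (so $np=\mu$) directly, splitting according to whether the deviation $\delta$ is large or small, and fixing the constant $\lambda$ only at the very end, small enough to make every case go through. For large deviations $\delta\ge\tfrac12$ a crude bound suffices: $\Pr[X\le(1-\delta)\mu]\ge\Pr[X=0]=(1-p)^n\ge e^{-np/(1-p)}\ge e^{-2\mu}$, using $p\le\tfrac12$; since $\delta^2\ge\tfrac14$, for any $\lambda<\tfrac18$ we have $\mu(\delta^2/\lambda-2)>0>\ln\lambda$, which rearranges to $e^{-2\mu}\ge\lambda e^{-\delta^2\mu/\lambda}$.

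So assume $\delta<\tfrac12$ and set $k:=\lfloor(1-\delta)\mu\rfloor$, so that $1\le k<\mu\le n/2$; then $k$ lies strictly below the mode of the binomial, so the PMF $j\mapsto\Pr[X=j]$ is nondecreasing on $\{0,1,\dots,k\}$. The goal reduces to showing $\Pr[X\le(1-\delta)\mu]\ge c\cdot e^{-C(\delta^2\mu+\delta\sqrt\mu)}$ for absolute constants $c,C$: indeed, since $\delta\sqrt\mu\le\tfrac12(\delta^2\mu+1)$ this gives $\Pr[X\le(1-\delta)\mu]\ge c'e^{-C'\delta^2\mu}\ge\lambda e^{-\delta^2\mu/\lambda}$ as soon as $\lambda\le\min(c',1/C')$. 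The first step is a single-term estimate: the entropy inequality $\binom nk\ge(8k)^{-1/2}e^{nH(k/n)}$ (with $H(q)=-q\ln q-(1-q)\ln(1-q)$) yields $\Pr[X=k]\ge(8k)^{-1/2}e^{-nD(k/n\|p)}$, where $D(q\|p)=q\ln\tfrac qp+(1-q)\ln\tfrac{1-q}{1-p}$. Using $\ln(1-\delta)\le-\delta$ and $\ln(1+x)\le x$ one computes $nD\!\left((1-\delta)p\,\|\,p\right)\le\tfrac{\delta^2\mu}{1-p}\le2\delta^2\mu$ — the key point being that the two competing $\Theta(\delta\mu)$ contributions cancel, leaving only the quadratic term — and replacing $(1-\delta)p$ by $k/n$ changes $nD$ by at most an absolute additive constant because $|\partial_q D(q\|p)|$ is bounded on the relevant $q$-range for $\delta<\tfrac12$, $\mu\ge4$. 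Together with $k\le\mu$ this gives $\Pr[X=k]\ge c_1\mu^{-1/2}e^{-2\delta^2\mu}$.

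The second step removes the $\mu^{-1/2}$ prefactor. For bounded $\mu$ (say $\mu\le\mu_0$) the factor $\mu^{-1/2}$ is already an absolute constant and the single term suffices. For $\mu>\mu_0$, I would sum the PMF over the window $j\in\{k-r,\dots,k\}$ with $r=\lceil\sqrt\mu\rceil$: all these $j$ lie below the mode, so each $\Pr[X=j]\ge\Pr[X=k-r]$, and writing $\Pr[X=k]/\Pr[X=k-r]=\tfrac{\binom nk}{\binom n{k-r}}\bigl(\tfrac p{1-p}\bigr)^r$, telescoping the PMF ratio and bounding each of the $r$ factors (using that $k$ is within $O(\delta\mu+\sqrt\mu)$ of $\mu$ and $p\le\tfrac12$) shows this ratio is at most $e^{O(\delta\sqrt\mu)}$. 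Hence $\Pr[X\le(1-\delta)\mu]\ge(r+1)\Pr[X=k-r]\ge\sqrt\mu\cdot c_1\mu^{-1/2}e^{-2\delta^2\mu}\cdot e^{-O(\delta\sqrt\mu)}=c_2e^{-2\delta^2\mu-O(\delta\sqrt\mu)}$, which is exactly the reduction target.

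The main obstacle is not conceptual but careful bookkeeping in two places: squeezing the quadratic dependence $\delta^2\mu$ (rather than the naive $O(\delta\mu)$) out of the KL divergence, and verifying that the window-summation step is uniformly valid for all $\delta<\tfrac12$ and $\mu>\mu_0$ — in particular that the PMF ratio across the window is controlled by $e^{O(\delta\sqrt\mu)}$ with a constant that does not degrade, so that this extra factor is absorbed after folding $\delta\sqrt\mu$ into the exponent via $\delta\sqrt\mu\le\tfrac12(\delta^2\mu+1)$. An alternative to the windowing step is to invoke the Berry--Esseen inequality (Theorem~\ref{thm: BE}) for $\mu$ large, approximating $\Pr[X\le(1-\delta)\mu]$ by $\Phi(-\delta\sqrt{\mu/(1-p)})$ and lower-bounding the Gaussian tail, with the single-term estimate handling the regime where the Berry--Esseen error term would dominate.
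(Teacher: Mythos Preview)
The paper does not give its own proof of this lemma; it is quoted verbatim from \cite{BBM01} (as Lemma~30 there) and used as a black box in the proof of Lemma~\ref{lem:ballsBinsMin} and in Section~\ref{sec: universal binary}. So there is no in-paper argument to compare your proposal against.

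That said, your plan is sound. The split at $\delta=\tfrac12$ with the crude bound $\Pr[X=0]=(1-p)^n\ge e^{-2\mu}$ handles the large-deviation regime. For $\delta<\tfrac12$, your single-term estimate $\Pr[X=k]\ge(8k)^{-1/2}e^{-nD(k/n\|p)}$ together with the computation $nD((1-\delta)p\|p)\le\delta^2\mu/(1-p)\le2\delta^2\mu$ is correct --- the linear-in-$\delta$ terms do cancel as you say --- and the perturbation from $(1-\delta)p$ to $k/n$ indeed costs only $O(1)$ in the exponent since $\partial_q D(q\|p)=\ln\frac{q(1-p)}{p(1-q)}$ is uniformly bounded on the relevant range (using $q/p\ge\tfrac14$ from $\delta<\tfrac12$, $\mu\ge4$). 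The windowing step to absorb the $\mu^{-1/2}$ prefactor also works: the PMF ratio across a window of width $r=\lceil\sqrt\mu\rceil$ is $e^{O(\delta\sqrt\mu+1)}$, and your AM--GM absorption $\delta\sqrt\mu\le\tfrac12(\delta^2\mu+1)$ finishes it. The only minor point to make explicit is that $k-r\ge0$ and that the entire window lies below the mode; both hold once your threshold $\mu_0$ is chosen large enough (e.g., $\mu_0\ge16$).
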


\begin{lemma}\label{lem:ballsBinsMin}
Consider the experiment of placing $m$ balls independently and uniformly at random into $n\ge 2$ 
bins, where $m\ge n\ln n$. Let $X_i$ be the number of balls in the $i$th bin. Then
\begin{align*}
	\E\left[\min_i X_i\right]\le \frac{m}{n}-c\sqrt\frac{m\ln n}{n}
\end{align*}
for some global constant $c>0$ (which does not depend on $m$ or $n$).
\end{lemma}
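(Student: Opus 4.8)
The plan is to rephrase the claim in terms of the largest downward deviation. Write $\mu := m/n = \tfrac1n\sum_i X_i$; then $\mu - \min_i X_i = \max_i(\mu - X_i)$, so the lemma is equivalent to $\E[\max_i(\mu - X_i)]\ge c\sqrt{\mu\ln n}$. (The hypothesis $m\ge n\ln n$, i.e.\ $\mu\ge\ln n$, is exactly what makes $\sqrt{\mu\ln n}\le\mu$, so the asserted bound is not vacuous.) Fix a small constant $c_0>0$, set $t := c_0\sqrt{\mu\ln n}$ and $a := \mu - t$. Since $\max_i(\mu-X_i)\ge 0$ (its arguments sum to $0$),
\[
\E[\max_i(\mu - X_i)] \;\ge\; t\cdot\Pr\bigl[\max_i(\mu - X_i)\ge t\bigr] \;=\; t\cdot\Pr[\min_i X_i\le a].
\]
Because the bin loads $X_1,\dots,X_n$ of a balls-into-bins process are negatively associated (a classical fact about occupancy numbers), $\Pr[\min_i X_i>a]\le\prod_i\Pr[X_i>a]=(1-p)^n\le e^{-np}$, where $p:=\Pr[X_1\le a]$ and $X_1\sim\mathrm{Bin}(m,1/n)$. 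Hence it suffices to prove the single-bin anti-concentration estimate $p\ge c_1/n$ for an absolute constant $c_1>0$: then $\Pr[\min_i X_i\le a]\ge 1-e^{-c_1}$, and the lemma follows with $c:=c_0(1-e^{-c_1})$.

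The main obstacle is precisely this estimate, $\Pr[\mathrm{Bin}(m,1/n)\le \mu - c_0\sqrt{\mu\ln n}]=\Omega(1/n)$. Berry--Esseen is too weak, because a deviation of $\Theta(\sqrt{\ln n})$ standard deviations has probability only $n^{-\Theta(1)}$, which is dominated by the Berry--Esseen error term; and the method-of-types reverse Chernoff bound carries a $1/\mathrm{poly}(m)$ prefactor that is ruinous for large $m$. What works is a reverse-Chernoff bound with the sharp $1/\sigma$ prefactor, obtained by summing a window of $r:=\lceil\sqrt\mu\rceil$ consecutive point masses: by unimodality of the binomial pmf (and since $a$ lies below the mode once $t\ge1$), $\Pr[X_1\le a]\ge\sum_{k=a-r+1}^{a}\Pr[X_1=k]\ge r\cdot\Pr[X_1=a-r+1]$. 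Writing $b:=a-r+1$, a Stirling estimate gives $\Pr[X_1=b]\ge\frac{c'}{\sqrt\mu}\exp\bigl(-m\,D_{\mathrm{KL}}(b/m\,\|\,1/n)\bigr)$ for any integer $b\in[1,\mu]$, so the $1/\sqrt\mu$ is absorbed by the window length $r\ge\sqrt\mu$. A second-order Taylor bound, $D_{\mathrm{KL}}\bigl(\tfrac{1-\delta}{n}\,\|\,\tfrac1n\bigr)\le\tfrac{3\delta^2}{2n}$ for $\delta\in[0,1/2]$ --- applicable with $\delta = 1 - b/\mu\le1/4$, which holds since $t+r\le\mu/4$ when $c_0\le\tfrac18$ --- converts $m\,D_{\mathrm{KL}}(b/m\|1/n)$ into $\tfrac{3t^2}{\mu}+O(1)=3c_0^2\ln n+O(1)$. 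Altogether $p\ge c''\,n^{-3c_0^2}$, and choosing $c_0$ small enough that $3c_0^2<1$ gives $np\ge c''\,n^{1-3c_0^2}\ge c''>0$ for every $n\ge2$, as needed.

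It remains to check that the side conditions of the window argument ($b\ge1$, $t\ge1$, $\delta\le1/4$) can be arranged; they all hold once $\mu$ exceeds a suitable absolute constant $\mu_0$. In the complementary regime $\mu<\mu_0$, the hypothesis $\ln n\le\mu$ forces $n<e^{\mu_0}$ and $m=\mu n<\mu_0 e^{\mu_0}$, so only finitely many pairs $(m,n)$ occur; for each of them $\E[\mu-\min_i X_i]>0$ strictly (perfect balance has probability below $1$) while $\sqrt{\mu\ln n}$ is bounded, so a small enough constant handles them too. Taking $c$ to be the minimum of the constant from the main argument and the finitely many exceptional ratios completes the proof.
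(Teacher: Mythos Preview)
Your argument follows the same route as the paper's: use negative association of the bin loads to reduce to a single-bin lower-tail estimate $\Pr[X_1\le\mu-t]=\Omega(1/n)$, conclude $\Pr[\min_iX_i\le\mu-t]\ge 1-e^{-\Omega(1)}$, and dispatch the small-$\mu$ regime by a finite-case argument. The only difference is that for the anti-concentration step the paper invokes the ready-made reverse-Chernoff bound of Lemma~\ref{lem:binom} (from Bartal--Bollob\'as--Mendel), whereas you sketch one directly via Stirling and a window of $\lceil\sqrt\mu\rceil$ consecutive point masses.

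One small slip worth flagging: in your expansion of $\tfrac{3}{2}\delta^2\mu$ with $\delta=(t+r-1)/\mu$, the cross term $3t(r-1)/\mu=\Theta(c_0\sqrt{\ln n})$ is not $O(1)$ as you wrote. This is harmless for the conclusion, since for small $c_0$ one still has $np\ge c'\,n^{1-O(c_0^2)}e^{-O(c_0\sqrt{\ln n})}$, which is bounded below by a positive constant for all $n\ge 2$; but the ``$+O(1)$'' in the display should be ``$+O(\sqrt{\ln n})$''.
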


\begin{proof}
Let $c\in (0,1)$ be a constant to be determined later. Let $\mu = \E[X_1] = \frac m n$. Note that
$\min_i X_i\le \mu$ always. We consider two cases. Firstly, suppose that $\mu < 4$. Thus, 
$\ln n\le \frac m n < 4$. In particular, $n < e^4$ and $m < 4e^4$. As 
$\Pr[X_1 = 0] = \left(1-\frac 1 n\right)^m \ge \eta$ for a constant $\eta > 0$, we have that
$\E\left[\min_i X_i\right]\le (1-\eta)\mu\le \frac{m}{n}-c\sqrt\frac{m\ln n}{n}$ for a sufficiently
small constant $c > 0$.

Otherwise, let $\delta = c\sqrt{\frac{n\ln n}{m}}\in (0,c)$. We shall prove that with the right choice 
of constant $c$, $\E\left[\min_i X_i\right]\le (1-\delta)\mu$. Let $E_i$ denote the event that 
$X_i > (1-a\delta)\mu$, for some constant $a>0$ to be determined later. Notice that 
$\Pr[E_i\mid E_1\wedge E_2\wedge \cdots \wedge E_{i-1}]\le\Pr[E_i]$. Therefore,
$\Pr[\min_i X_i > (1-a\delta)\mu] \le \left(\Pr[E_1]\right)^n$. Also let $\lambda\in (0,1]$ be the constant
stipulated in Lemma~\ref{lem:binom}. By that lemma (which we can apply provided $ac\le 1$, so 
that $a\delta\le 1$),
$$
\Pr[\bar{E_1}] \ge \lambda e^{-\frac{(a\delta)^2}{\lambda}\mu} = \lambda n^{-\frac{(ac)^2}{\lambda}}.
$$
Therefore, 
$$
\Pr[\min_i X_i > (1-a\delta)\mu] < 
   \left(1 - \lambda n^{-\frac{(ac)^2}{\lambda}}\right)^n\le e^{-\lambda n^{1-\frac{(ac)^2}{\lambda}}}.
$$
As always $\min_i X_i\le \frac m n$,  we have that $\E\left[\min_i X_i\right]\le (1 - (1-\epsilon)a\delta)\frac m n$,
where $\epsilon = e^{-\lambda n^{1-\frac{(ac)^2}{\lambda}}}$. We need to set $a$ so that $(1-\epsilon)a\ge 1$ 
and $c$ so that $c \le \frac 1 a$, and then the lemma follows in this case. For example, we can set 
$a=\frac{1}{1-e^{-\lambda}}$ and $c \le \frac{\sqrt{\lambda}}{a}$. Thus, $\frac{(ac)^2}{\lambda} \le 1$,
and $\epsilon \le e^{-\lambda}$.
\end{proof}

\bibliographystyle{plain}
\bibliography{biblio}

\end{document}